\newtheorem{theorem}{Theorem}
\newtheorem{definition}{Definition}
\newtheorem{proposition}{Proposition}
\newtheorem{observation}{Observation}
\newtheorem{lemma}{Lemma}
\newtheorem*{remark}{Remark}
\newtheorem{corollary}{Corollary}
\newtheorem{fact}{Fact}
\def\P{{\bf P}}
\def\E{{\bf E}}
\def\R{\mathbb{R}}
\def\N{\mathbb{N}}
\def\BH{{\mathcal{B}_H}}
\def\BE{{\mathcal{B}_E}}
\def\Cd{{\mathcal{C}(d)}}
\def\tCd{{\hat{\mathcal{C}}(d)}}
\def\tCdm{{\widetilde{\mathcal{C}}(d, \mu)}}
\def\tBE{{\widetilde{\mathcal{B}_E}}}
\def\cP{\mathcal{P}}
\def\I{{\mathbb{I}}}
\newtheoremstyle{named}{}{}{\itshape}{}{\bfseries}{.}{.5em}{\thmnote{#3  }#1}
\theoremstyle{named}
\begin{document}
\author{Itai Ashlagi, Maximilien Burq, Patrick Jaillet\thanks{Jaillet acknowledges the research support of the National Science Foundation grant OR-1029603 and the Office of Naval Research grants {N00014-12-1-0033} and {N00014-15-1-2083}.}, Vahideh Manshadi}
\title{On Matching and Thickness in Heterogeneous Dynamic Markets\thanks{First Draft: Feb 2015.}}

\maketitle

\begin{abstract}
We study dynamic matching in an infinite-horizon stochastic market. While all agents are potentially compatible with each other, some  are hard-to-match and others are easy-to-match. Agents prefer to be matched as soon as possible and matches are formed either bilaterally or indirectly through chains.
We adopt an asymptotic approach and compute tight bounds on the limit of waiting time of agents under myopic policies that differ in matching technology and prioritization. 

We find that the market composition is a key factor in the desired matching technology and prioritization level.
When hard-to-match agents arrive less frequently than easy-to-match ones (i) bilateral matching is almost as efficient as chains (waiting times scale similarly under both, though chains always outperform bilateral  matching by a constant factor), and (ii) assigning priorities to hard-to-match agents
improves their waiting times. When hard-to-match agents arrive more frequently, chains are much more efficient than bilateral matching and prioritization has no impact.


We further conduct comparative statics on arrival rates. Somewhat surprisingly, we find that in a heterogeneous market and  under bilateral matching, increasing arrival rate has a non-monotone effect on waiting times, due to the fact that, under some market compositions, there is an adverse  effect of competition. Our comparative statics shed light on the impact of merging markets and attracting altruistic agents (that initiate chains) or easy-to-match agents.

This work uncovers fundamental differences between heterogeneous and homogeneous dynamic markets, and potentially helps policy makers to generate insights on the operations of matching markets such as kidney exchange programs.

\end{abstract}


\section{Introduction}
\label{sec:intro}

This paper is concerned with the problem of matching in a  dynamic marketplace, where  heterogeneous  agents arrive over time to the market looking to exchange an indivisible  item for another compatible item.  A key feature of the market is its exogenous thickness, as determined by the types of agents and their arrival rates to the marketplace. For example, in kidney exchange some patient-donor pairs are very hard-to-match while others are very easy-to-match. In online labor markets, employers have different qualification requirements and workers have different skills.

Efficiency is determined by  the matching policy and the matching technology. The former determines which  exchanges to be implemented and when, and in particular which priorities, to assign to different types of agents.
The  latter determines the forms of matches that can take place; For example, while kidney exchanges were first conducted through bilateral exchanges (2-way cycles) (\cite{RothKidneyJET}), multi-hospital platforms are now facilitating many of their transplants through chains initiated by altruistic donors (\cite{anderson2015kidney}). In many matching markets, such as dating, only bilateral matches take place.

We are interested  in  the  behavior of  simple  {\it myopic policies}  under different matching technologies  and different thickness levels of the market.
Myopic policies form matches as soon as they become available, but may vary with respect to how  they prioritize agents in the events of ties.
Our framework will  allow to discuss  policy questions such as: What is the effect of prioritizing different types of agents? How does disproportional change in arrival of different types influence market efficiency? What is the impact of merging matching marketplaces with different thickness levels on different types?

Two comments are in place. First, restricting attention to myopic  policies is motivated by current practices in kidney exchange platforms in the United States.
\cite{frequency} uses simulations based on empirical data from multiple exchange programs to show  matching myopically is nearly harmless.
Moreover, a similar conclusion is also arrived in theoretical work \cite{AndersonDynamic}.
\footnote{See Subsection \ref{sec:review} for further details.} While they consider a stylized model with  homogeneous agents, their result can be generalized to our heterogeneous model.\footnote{This is not the focus of this paper, but for completeness, we show this in Appendix \ref{sec:anyAlg}.}
Second, the literature on dynamic matching in sparse environments has focused on
homogeneous agents (\cite{AndersonDynamic,Akbarpour}). The motivation for this paper stems from the  heterogeneity of agents in the marketplace.

For our purposes we propose a  simple  infinite-horizon model with two types of  agents, easy-to-match ($E$)  and hard-to-match ($H$).  Agents of each type $T$ arrive to the marketplace according  to an independent  Poisson process with rate $\lambda_T$. Each agent arrives with an indivisible item that she wishes to exchange. We assume a stochastic demand structure, where each agent of type $T$ finds the item of any other agent compatible independently with probability $p_T$.  A key feature of the  model is that $p_H$ is significantly smaller than $p_E$.  Agents are indifferent between compatible items but prefer to be matched as early as possible. Moreover, agents in our model depart the market after being matched. We therefore adopt the average waiting time of agents in steady-state as a measure for efficiency.\footnote{{More precisely, we focus on the average waiting time of $H$ agents, because the waiting time of $E$ agents is negligible compared to that of $H$ agents. For  a more detailed discussion, see Section \ref{model section}.}} 
While our model is highly stylized, it captures some important features observed in kidney exchange pools (see Section  \ref{sec:kidney}, where we provide a brief background that further motivates this study).

Two settings are considered, distinguished by how matchings are formed (feasible exchanges): bilateral (2-way cycles), and chains. Our main findings are the following. First, we find that market thickness plays a crucial role on the desired matching technology; when easy-to-match agents arrive more frequently to the market than hard-to-match ones, the average waiting time of $H$ agents  scales similarly under chains and bilateral matchings. But there is a sharp increase in the average waiting time of $H$ agents as soon as hard-to-match agents arrive more frequently,  highlighting the importance of chains  in marketplaces with a  majority of hard-to-match agents. Second, we find that, under bilateral matching, increasing arrival rates of hard-to-match agents may negatively affect hard-to-match agents by increasing their waiting times. Under chains, however, increasing arrival rates always shortens waiting times. Third, impact of prioritization in bilateral matching also depends on the market composition; when hard-to-match agents are the minority type, assigning them priority improves their waiting times. {\footnote{To be precise, theoretically we are only able to prove that prioritizing $H$ agents leads to shorter or equal waiting times (of $H$ agents), however, numerically we confirm
that such prioritization indeed leads to strictly shorter   waiting times. }}
However, when they are in majority, such prioritization has no significant impact.

Next we describe our results more formally under the bilateral and chain settings. In our analysis we compute the average waiting time of $H$ agents  under various myopic policies as $p_H \rightarrow 0$.

\noindent{\bf Bilateral matching.} Two  myopic policies are  considered for bilateral matching, differing in the type of agents they prioritize; While it appears natural to prioritize hard-to-match agents,  it is also interesting to consider the prioritization of easy-to-match agents as  these may have better outside options.\footnote{\label{note1} In reality, agents may leave the market without being matched due to various reasons such as finding outside options. Under a stochastic departure model, shorter waiting times correspond to fewer departures because both quantities are proportional with the market size. We leave the rigorous treatment of a model with departure as an open question. }
We find that regardless of how agents are prioritized,  when  $\lambda_H < \lambda_E$ waiting time  scales with $1/p_H$, and when $\lambda_H > \lambda_E$  waiting time scales with $1/p_H^2$. When easy-to-match agents arrive more frequently, prioritizing $H$ agents results in  shorter waiting times than prioritizing $E$ agents. However,  when $\lambda_H > \lambda_E$, the average waiting time in the limit is identical under both types of priorities.

We further provide comparative statics for the case in which $H$ agents are prioritized. Increasing $\lambda_E$ always decreases waiting times. However, the average waiting time is non-monotone when increasing $\lambda_H$; It has an increasing trend up to a certain threshold, which depends on  $\lambda_E$, and then it decreases (note that in a homogeneous model with only $H$ agents, Little's law implies that increasing $\lambda_H$ always decreases waiting times). These findings have two main implications: (i) thickening the market by increasing arrival rates of hard-to-match agents can result in longer waiting times depending on the existing  arrival rates, (ii) merging two marketplaces with different compositions, i.e. different ratios between the two arrival rates,  may not be beneficial for both.

\noindent{\bf Chain matching.} Under the chain setting,  we consider   policies termed \textit{{ChainMatch(d)}} for markets endowed with $d$ altruistic donors  who initiate chains that continue indefinitely. In a chain, each agent is matched by (receives an item from) some agent, and matches another. Whenever the last agent of a chain can match a new arriving  agent, the policy forms a new  {\it chain-segment}, which is a maximal sequence of matches resulting from a \textit{local search}, in which the next matched agent is selected randomly while breaking ties in favor of $H$ agents (so the policy does not always identify the longest possible chain-segment, which requires a global search and may be computationally hard).
We prove an upper-bound on the average waiting time that scales with $1/p_H$ for  all positive arrival rates. We also find that even in the regime $\lambda_E>\lambda_H$ where the waiting time scales similarly under both matching technologies,  chains result in lower waiting times than bilateral matching.

We  provide comparative statics  over the arrival rates of both types. We show (analytically for $p_E=1$ and numerically for $p_E<1$), that the average waiting time decreases when the arrival rate of either type increases. When $p_E=1$, we further find that the average waiting time is independent of the constant $d$. Similar patterns hold numerically when $p_E<1$.  Finally, we are able to compute the average chain-segment (which plays an important operational role for example in kidney exchange). An increase in  $\lambda_E$ or $d$, decreases the average length of a chain-segment. In contrast, increasing $\lambda_H$ has the opposite effect.

Next we provide brief intuition for some of the main findings, beginning with why the market composition and the desired matching technology are tightly connected.
Under the bilateral setting when easy-to-match agents arrive more frequently, almost all hard-to-match agents will be matched with easy-to-match ones resulting in a scaling of $1/p_H$; on the other hand, when hard-to-match agents arrive more frequently,  many of them will have to match with each other resulting in a scaling of $1/p_H^2$ which is the inverse of the probability that two $H$ agents can match each other.
In contrast, matching through chains does not require such ``coincidence of wants'' between pairs of $H$ agents even when $H$ agents are the majority. This results in a waiting time that scales with  $1/p_H$ regardless of the composition.
We further find that the heterogeneity in the marketplace may lead to non-trivial effects when increasing participation; The intuition for why, in the bilateral setting, $H$ agents may be harmed when attracting more $H$ agents to the market is that this leads to harsher  competition among $H$ agents for matching with $E$ agents (even though $H$ agents can potentially match with each other).\footnote{A similar effect happens in kidney exchange where  O-A patient-donor pairs that cannot match with each other  compete to match with scarce pairs with blood-type O donors. Note, however, that in our setting all agents can potentially match with each other; in  particular this effect extends to sets of pairs that are blood type compatible with each other, like O-O  pairs, some of which are much harder-to-match than others.} We elaborate and provide intuition  for other results throughout the paper.

Understanding the impact of market composition by providing comparative statics requires us to not only compute the scaling of asymptotic  behavior of average waiting time but also to characterize the exact limits. Such exact characterization in a heterogeneous model is particularly challenging as we need to analyze $2$-dimensional Markov chains.  For bilateral matching polices, we directly analyze the underlying $2$-dimensional spatially non-homogeneous random walks.
{One of the main challenges in our analysis is the need to jointly bound the distribution in both dimensions, because applying methods such as Lyapunov functions or analyzing  marginal probability distributions would not result in tight bounds.}
In doing so, we prove two auxiliary lemmas on concentration bounds for a general class of $2$-dimensional random walks that can be of interest for studying similar random walks that may arise in other
applications. For chain policies, we first couple the underlying Markov process with a $1$-dimensional process where no $E$ agent joins the market. Analysis of the resulting $1$-dimensional Markov chain presents new challenges as transitions between non-neighboring states happen due to the possibility of forming arbitrarily long chain-segments.
However, we show that the chain-segment formation process exhibits a memoryless property, which proves helpful in computing the waiting time limits.

\subsection{Related work}
\label{sec:review}

A close stream of related papers study dynamic matching in  models, in which agents' preferences are based on compatibility, i.e, agents are indifferent between whom they match with (\cite{Utku,AndersonDynamic,Akbarpour}).

The impact of the matching technology is addressed in  markets comprised of only easy-to-match agents (\cite{Utku}) (with multiple coarse types) or only hard-to-match ones (\cite{AndersonDynamic}). \cite{Utku} finds that short cycles are sufficient for efficiency.\footnote{The findings by \cite{Utku} thus provide a rationale for the static large market results (see, e.g. \cite{RothKidneyAER}).} \cite{AndersonDynamic} consider markets, in which all agents are ex ante symmetric and hard-to-match. They study  the waiting-time scaling  behavior of myopic policies that attempt to match each agent upon arrival in three settings of exchanges, $2$-ways, $2$ and $3$-ways, and chains, and find that moving  from 2-ways or 3-ways to chains significantly reduces  the average waiting time.\footnote{See  also \cite{ProcacciaSandholm} that  demonstrate the benefit of chains using simulations in dynamic kidney exchange pools.}
Our paper  bridges the gap by looking at a model with both hard- and easy-to-match agents and thus allowing  for different levels of thickness in the market.
\footnote{\cite{ding2015non} study a similar two-type model in a static setting and quantifies the effectiveness of matching through chains taking a novel random walk approach.}

The papers above also find that, by and large, myopic policies are near-optimal: \cite{Utku} analyzes a kidney exchange model with different types and deterministic compatibility structure across types  and finds that matching upon arrival is near optimal, even though some waiting with certain types to facilitate three-way exchanges adds some benefits.
\footnote{See also \cite{gurvich2014dynamic}, who study a similar compatibility-based inventory control model.}
\cite{AndersonDynamic} consider a homogeneous model  without departures (similar to our model with $\lambda_E = 0$) and finds that  there is little benefit from waiting before matching under both  matching technologies of short cycles and chains.\footnote{The waiting-time scales with the same factor with or without waiting before matching.}
\cite{Akbarpour}  consider  a homogeneous model with departures and finds that the optimality gap  of the policy that matches without waiting remains   constant as the match probability decreases. Moreover, using data-driven simulations, \cite{frequency} study the impact of match-run frequency, and show that among polices that match periodically (e.g., every week or every day), high matching frequencies perform best. \footnote{{Non-myopic policies have also been studied, for example} \cite{ProcacciaSandholm2} study forward-looking polices by casting the dynamic matching problem as a high-dimensional dynamic program, and develop a heuristic to overcome the curse of dimensionality.} This paper builds on these findings, and only analyzes myopic policies that search for a match upon arrival of a new agent.

We  elaborate  on the relation to  \cite{AndersonDynamic}, which is closest to our paper. Studying myopic policies under a homogeneous setting resulted valuable insights.  Some insights, however,  do not carry over to heterogeneous settings like kidney exchange (See Subsection \ref{sec:kidney}).
For instance, merging markets is often sought as a solution to improve efficiency. A homogeneous model  predicts that  increasing arrival rates (or merging markets) will always decrease waiting times.
In contrast, we find that  merging heterogeneous markets may not decrease waiting times for both markets. 
The homogeneous model by \cite{AndersonDynamic} predicts very  infrequent but very long chain-segments. Our model predicts shorter chain-segments, which fits better empirical evidence (chain-segments  typically  consist of only a few pairs). 
Further, we remark that some questions cannot be addressed in a homogeneous setting; for instance kidney exchange  programs attempt to attract easy-to-match pairs (\cite{AshlagiRothIR}); but the impact of such an increase cannot be investigated in a homogeneous model. As another example, exchange programs usually assign high priority to hard-to-match pairs; effect of such prioritization cannot be studied in a homogeneous model.
Overall it is natural and important to study richer models in order to address relevant policy questions.

Another stream  of related research considers models of  agents' preferences that do not depend only on compatibility. These papers find that policies that match without waiting are inefficient (\cite{baccara2015optimal,fershtmanre,doval2014theory,kadam2014multi}) since some waiting can improve the quality of matches.\footnote{See also related results in queueing models \cite{Leshno,bloch2014dynamic}.}



Our work is also  related to the problem of matching multi-class customers to multi-class servers studied in queueing literature (e.g., \cite{InfiniteBMatching,Weiss2}).
In our model, an agent can be thought as a pair of customer-server, and the compatibility between any two agents is probabilistic, thus we will not have a finite number of queues.

Finally  our work is related to the online  matching  literature that study  online matching in which the underlying graph is  bipartite and agents on one side of the graph are all present in the market and only agents on the other side arrive over time (\cite{kvv,aryanak_randominput,aryanak_stmatching,mos,STMatchingPatrick}).


%

\subsection{Organization}
In Section \ref{model section} we  introduce the model,   polices, and the underlying stochastic processes. In Subsection \ref{sec:kidney} we provide a brief background on kidney exchange further  motivating our framework and study.  In Section \ref{sec:results} we present the main theoretical results and Section \ref{sec:policy} complements the results with numerical experiments. Section \ref{sec:proofs} outlines  the main  proof ideas and techniques along with the details of Markov chains induced by each policy. Section \ref{sec:conclusion} concludes. 
For the sake of brevity, we only include proofs  of selected results in the main text. The detailed proofs of the rest of the statements are deferred to clearly marked appendices.


\section{Model}
\label{model section}

We study an infinite-horizon dynamic matching market, where  each  arriving agent is endowed  with a single item she wants to exchange for  another item she finds compatible. Agents are indifferent between compatible items and wish to exchange as early as possible, their cost of waiting being proportional to the waiting time. 

There are two types of agents, $H$ and $E$,  referred  by hard-to-match and easy-to-match, respectively. Beginning at  time $t=0$,  agents of type  $T\in\{H,E\}$  arrive to the market according to an independent Poisson process with rate $\lambda_T >0$. 

Any agent of type $H$ ($E$) finds the item of any other agent compatible independently with  probability $p_H$ ($p_E$). Our analysis  is asymptotic in $p_H \rightarrow 0$, while $p_E$ is a fixed constant.  So, on average, an $H$ agent finds significantly  fewer items compatible than an  $E$ agent.   
We  say that an agent $j$ is {\em matched} by agent $i$, if agent $j$ receives agent $i$'s item. An agent leaves the market only when she is {\it matched}, i.e., she receives a compatible item.

We study matching policies in two different settings, distinguished by how agents can exchange items. In the first setting two agents can exchange items bilaterally in a cyclic fashion. In the second setting agents exchange items through chains; at time $t=0$, there  are $d$ special agents called {\it altruistic agents} who are willing to give an item without getting anything in return (all other agents that will arrive to the market are regular agents who want to exchange their item for another item).\footnote{Having altruistic agents is an intrinsic property of the market in the sense that some markets do not have access to such agents.}
Each agent in a chain receives a compatible item from one agent and gives to the next.
At any given time, there are exactly $d \geq 1$  agents who are either altruistic or  received an item but have not given their item. The latter are called {\it bridge agents}.  We  sometimes refer to altruistic agents also by bridge agents. The transactions between two bridge agents in a given chain is called a {\it chain-segment}. We assume that matches in a chain-segment are conducted instantaneously.
A {\it policy} is a mapping from the history of
exchanges and the state of the marketplace to a set of feasible exchanges involving non-overlapping sets of agents.


We adopt the average waiting time in steady-state as the measure of the efficiency of a policy (the waiting of an agent is the difference between her departure time and her arrival time).  In our model, the average waiting time of one type of agents is equivalent to  the average number of agents of that type in the marketplace divided by the arrival rate of that type since these two quantities are proportional to each other by Little's law.

It is convenient to think about the state of the marketplace at any time in terms of a {\it compatibility graph}, which is  a directed graph with each agent represented by a node, and a directed edge from $i$ to $j$ means that agent $j$ finds agent $i$'s item compatible. Let $\mathcal{G}_t=(\mathcal{V}_t, \mathcal{E}_t)$ denote the (observed) compatibility graph at  time $t$. When a new agent arrives directed edges are formed in each direction independently and with probabilities corresponding to the agents' types, between the arriving agent and each agent in the marketplace. A bilateral exchange is a directed cycle of length two in the compatibility graph  and a chain-segment is a  directed path in this graph  starting from a bridge or altruistic agent.




We study the following myopic  policies, which attempt to match agents upon arrival.

\begin{definition}[\textit{BilateralMatch(T)} for $T\in \{H,E\}$]\emph{Upon arrival of a new agent, if a cycle of length $2$ can be formed with the newly arrived agent, it is removed. If more than one such cycle exists, priority is assigned to cycles with agents of type $T$. Further ties are broken uniformly at random.}
\end{definition}

%
%
%

\begin{definition}[\textit{ChainMatch(d)}]
\emph{
There are $d$ bridge or altruistic agents in the market at any given time. We describe first the policy for $d=1$.
Consider a new arriving agent $i_1$. If $i_1$ does not have an incoming edge from the bridge agent then no matches happen. Otherwise, a chain-segment begins with matching $i_1$ by the bridge agent and advances as follows. If there are edges from $i_1$ to other agents (not already in the chain-segment) one is selected, say $i_2$,  who is matched by $i_1$. Ties are broken randomly while favoring $H$ agents.  This  disjoint directed path  continues as long as possible and instantaneously. All agents in the chain-segment leave the market and the last agent becomes a bridge  agent.}

\emph{When there are  $d>1$ altruistic/bridge  agents, if there is at least one directed edge from one of them to the newly arrived agent, one of such edges is selected uniformly at random.  As the process moves forward, each altruistic agent eventually gives her item to an arriving agent and starts a chain.} 
\end{definition}

Under the \textit{ChainMatch(d)} policy,  upon arrival of a new agent a maximal chain-segment  (path) is identified through  local search originating from a bridge agent.
\footnote{{Our local search chain-segment formation process bears similarity to  Phase 1 of the  two-phase clearing procedure of \cite{ding2015non}.
}}
Note that the chain-segment has a positive length if and only if at least one bridge/altruistic agent has a directed edge  to the new agent.

For brevity we often refer to \textit{BilateralMatch(E)},  \textit{BilateralMatch(H)}, and \textit{ChainMatch(d)}, by  $\BH$, $\BE$, and $\Cd$, respectively.
All the policies above are Markov policies, and thus define a continuous-time Markov chain (CTMC). 
The following observation will allow us to ignore the edges within the market when analyzing the underlying stochastic processes.

\begin{observation}
\label{obs:onlynodes}
For each policy $\BH$, $\BE$, and $\Cd$, in order to analyze the average waiting time, it is sufficient to keep track  of only the number of agents of each type in the market.
\end{observation}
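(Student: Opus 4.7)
The plan is to show that under each of $\BH$, $\BE$, and $\Cd$, the process $(H_t, E_t)$ of type counts in the market is itself a continuous-time Markov chain. Given this, the average waiting time of each type is recovered from the stationary distribution of $(H_t, E_t)$ via Little's law, as noted in the model section, so the observation reduces to proving this Markov property.

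For the bilateral policies $\BH$ and $\BE$ the argument is direct. When a new agent $i$ arrives, the only edges the policy inspects are those between $i$ and existing agents. These edges are drawn independently at $i$'s arrival (Bernoulli with parameter $p_H$ or $p_E$ depending on the receiver's type) and have not been inspected previously. Consequently, given the current counts $(H_t,E_t)$ and the type of $i$, the probability that $i$ is matched and the distribution over the partner's type under the stated tie-breaking rule are explicit functions of $(H_t,E_t)$ alone. No edge internal to the existing-agent set is ever examined by the policy, so the counts form a CTMC.

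For $\Cd$ I would argue by induction on the sequence of events (arrivals together with chain-segment formations) that, at every moment, the unrevealed edges among in-market agents are mutually independent Bernoulli with the correct type-dependent parameters, conditional on $(H_t,E_t)$ and on which $d$ agents are bridges. When a new agent $i_1$ arrives, the $d$ bridge-to-$i_1$ edges are fresh, as are the $i_1$-to-existing edges used in the first interior step of the chain. For a subsequent step $i_k \to i_{k+1}$ with $k \geq 2$ and both endpoints already in the market, the edge is being queried for the first time: $i_k$ has just been pulled into the chain from the pool of non-bridge existing agents, and such an agent has never previously been inspected as the source of an outgoing edge (it was never a bridge, and was never in any chain). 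Thus the edge is a fresh Bernoulli with parameter $p_{T_{i_{k+1}}}$, and the chain-segment length distribution given the triggering arrival reduces to an explicit function of $(H_t,E_t)$ and $d$.

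The main obstacle is the bookkeeping: I need to verify that when a chain-segment ends at a new bridge $i_L$ (where the policy has effectively revealed that $i_L$ has no outgoing edges to the currently in-market non-chain agents), this conditioning never resurfaces in a later transition. This holds because a bridge's outgoing edges are subsequently queried only against \emph{newly arriving} agents, and those edges are generated fresh at the new arrival's arrival time; in particular, bridges are never selected as intermediate nodes of a future chain, so their already-revealed outgoing zeros play no role in any later count transition. Once this is checked, the transition rates of $(H_t,E_t)$ depend only on $(H_t,E_t)$ and $d$, completing the reduction and proving the observation.
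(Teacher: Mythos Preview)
Your proposal is correct and follows essentially the same route as the paper. Both arguments reduce the observation to the claim that every directed edge is queried at most once: for $\BH$ and $\BE$ only edges incident to the new arrival are inspected, and for $\Cd$ any agent whose outgoing edges are inspected immediately either departs or becomes a bridge, so those edges are never re-queried. The paper phrases the chain case tersely (``$i$ either leaves the market or becomes a bridge agent, in which case it will never match to $j$''), while you unpack the same point as an explicit inductive invariant and isolate the revealed-zeros issue at the new bridge as the one place that needs checking; but the content of the two arguments is the same.
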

The observation is immediate for the bilateral policies $\BH$ and $\BE$; due to their myopic behavior there are no $2$-length cycles in the market except with a new arriving agent, implying that the corresponding Markov chains can be fully specified using only the set of vertices. For the $\Cd$ policy, the observation is more subtle. Note that under this policy there is no outgoing edge from a bridge agent to any  waiting agent, again due to the myopic behavior of the policy. The first time we  examine whether there is an edge from $i$ to $j$, we  effectively flip a bias coin with probability $p_H$ ($p_E$) if the  agent  $j$ is of type $H$ ($E$). Importantly, we examine at most once whether a directed  edge from $i$ to $j$ exists by the definition of the policy since $i$ either leaves the market or becomes a bridge agent, in which case it will never match to $j$. Since  both the edge formation and the matching policies do not depend on agents' identities (rather only on their types) we can merely keep track of the number of agents of each type.

In the remainder of the paper, for any policy $\cP$, we focus on the state space  $\left\{[H_t^{\cP},  E_t^{\cP}]; t \geq 0\right\}$, which captures the number  of hard- and easy-to-match agents at any time $t$,  and we  denote the corresponding transition rate matrix by $Q^{\cP}$. 

Given the {\em self-regulating} dynamic undergoing each matching process, one would expect that all three (irreducible) CTMC's reach steady-state. A rigorous statement and proof is provided in Appendix \ref{app:existence_proofs}. Hereafter, we are  concerned only with steady-state analysis;
For policy $\cP$, we denote its steady-state distribution by $\pi^{\cP}$. The random vector  $[H^{\cP}, E^{\cP}]$ is the random number of $H$ and $E$ agents in steady-state, i.e., the vector is distributed according to  distribution  $\pi^{\cP}$. Finally we define $w_H^{\cP}$ ($w_E^{\cP}$) to be the average waiting time of type $H$ ($E$) agents under policy $\cP$.
Little's law implies that

\begin{align}
\label{eq:littles:law}
w_H^{\cP} = \frac{\E[H^{\cP}]}{\lambda_H} \text{, and}  \quad w_E^{\cP} = \frac{\E[E^{\cP}]}{\lambda_E}.
\end{align}

Since in our model $p_H \rightarrow 0$ while $p_E$ is kept constant, and all policies are myopic, one would expect that $w_E^{\cP}$ is negligible compared to  $w_H^{\cP}$.
We verify this claim below using numerical simulations and analytical proofs (see Figure \ref{fig:all_policies} and Lemmas \ref{lem:upper_E} and \ref{lem:upper_E_prioE}).
Therefore we  focus on analyzing the average waiting time of $H$ agents under different policies.


In Section~\ref{sec:results}, we derive asymptotic results ($p_H \rightarrow 0$) for $w_H^{\cP}$ for different set of parameters $\lambda_H$, $\lambda_E$ and $p_E$. We note that  $w_H^{\cP}$ is indeed a function of four parameters, and a more precise notation would be $w_H^{\cP}(\lambda_H, \lambda_E, p_H, p_E)$, but we drop these parameters for  the sake of brevity.

\subsection{Motivating application: kidney exchange}
\label{sec:kidney}

\noindent{\bf Background.} 
There is a large shortage of kidneys for transplants\footnote{As for 2017, the average  waiting time is  between 3-5 years in the U.S.} and many live donors are incompatible with their intended recipients. Kidney exchange allows such patient-donor pairs to swap donors so that each patient can receive a kidney from a compatible donor. There have been efforts to create large platforms to increase  opportunities for kidney exchanges  \cite{RothKidneyQJE,nikzad2017financing}.

Exchanges are conducted through cycles or chains.\footnote{See \cite{sonmez2017market} for a detailed description of kidney exchange.} Typically  pairs do not give a kidney prior to  receiving one. This creates logistical barriers requiring  cycles to be limited to $2$ or $3$ pairs. Chains, however, can be organized non-simultaneously, thus can  be longer \cite{RothAJT,ReesNEJM}.
For a transplant to take place, the patient needs to be both blood-type and tissue-type compatible with a donor. 
The common measure of patient sensitivity is the  Panel Reactive Antibody (PRA), which captures the likelihood the patient is tissue-type incompatible with a donor chosen at random in the population, based on her antibodies.

Numerous kidney exchange platforms operate in the U.S., varying in size, composition, and  policies. Some   are national platforms (with many participating hospitals) like the Alliance for Paired Donation (APD) and the National Kidney Registry (NKR). Others  are regional  or even single center programs like Methodist Hospital in San Antonio (MSA).

\noindent{\bf  Data.} Next we provide some  figures about the pool composition. Kidney exchange platforms are selected to have a large fraction of  highly sensitized patients \cite{AshalgiGamarnikRoth}.
Figure \ref{fig:PRA}(left) plots the PRA distributions of   patients enrolled at the  NKR, APD and MSA. Most patients are either highly sensitized (PRA above 95) or low sensitized (below 5 PRA). 
Note that blood-type compatibility is not incorporated in this aggregate PRA distribution.
Figure \ref{fig:PRA}(right) provides the same distributions for patients belonging to blood-type compatible pairs (e.g., O-O patient-donor pairs), who can match with each other if they are tissue-type compatible. These distributions can be roughly viewed as bimodal; note that  among blood-type compatible pairs  there are more highly sensitized patients than low sensitized ones.

\begin{figure}[ht!]
  \centering
\includegraphics[scale=0.35]{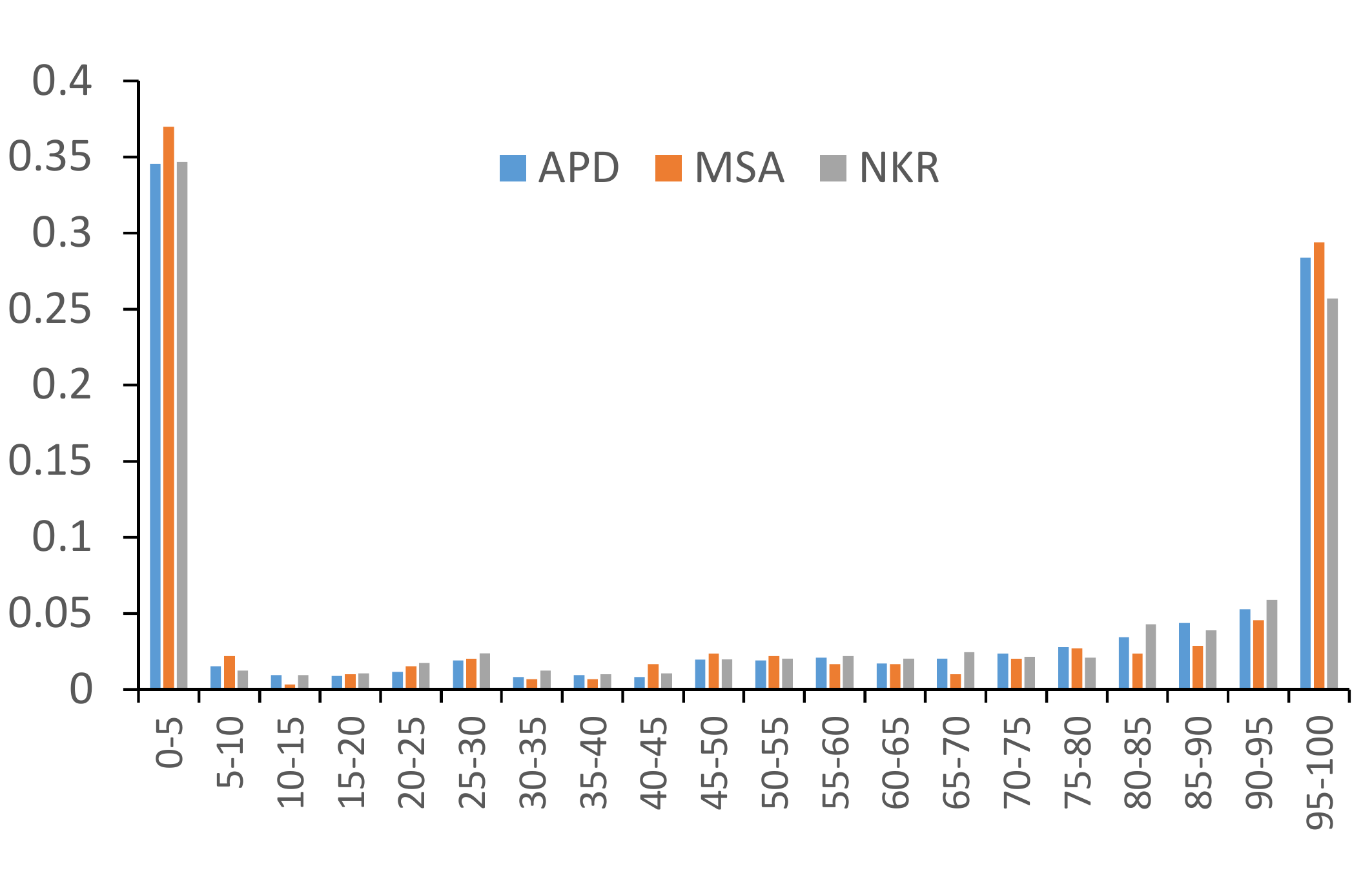}
\hspace{0.2cm}
\includegraphics[scale=0.35]{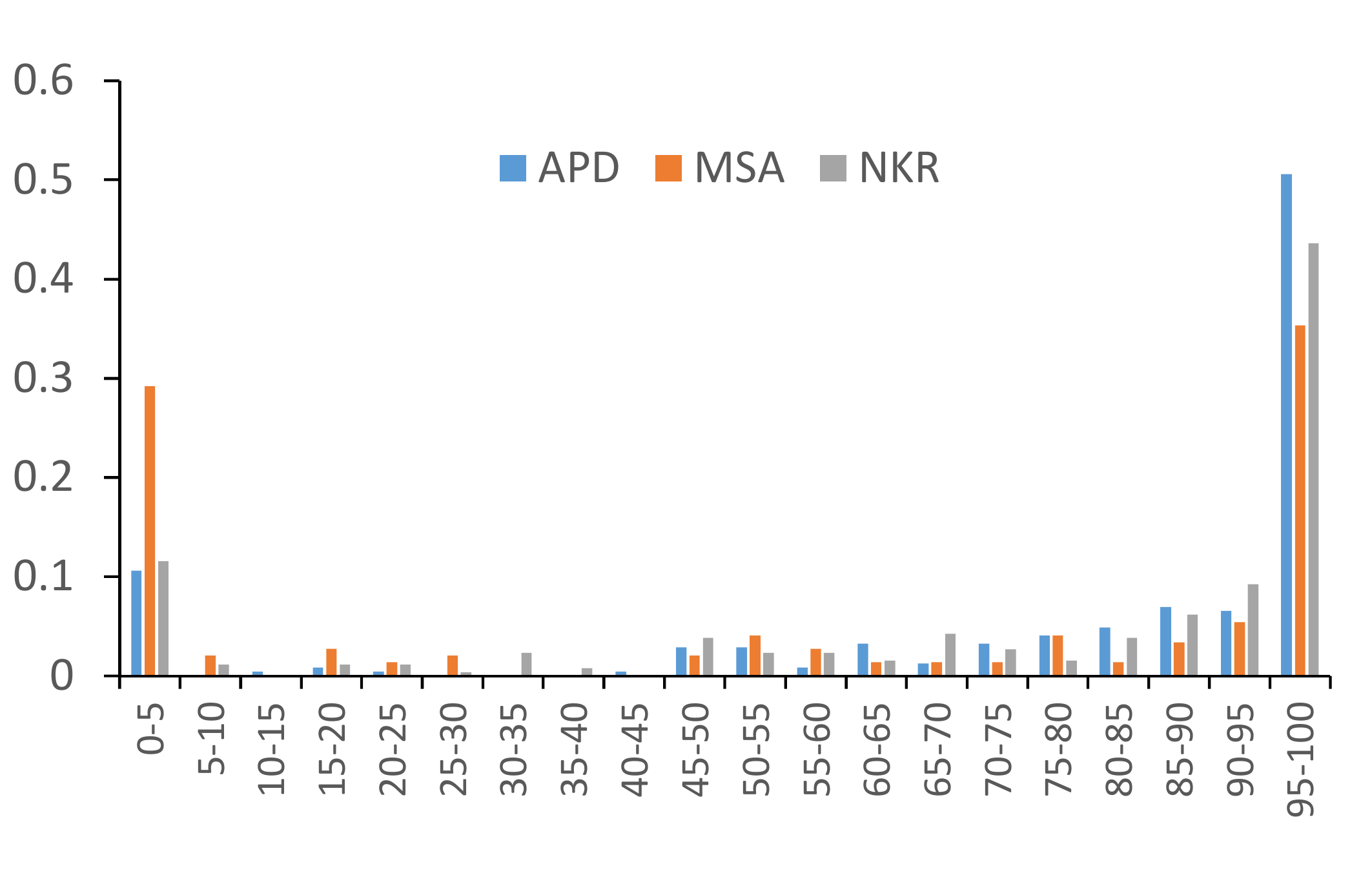}
\caption{PRA distributions of patients enrolled at NKR (1/2012-12/2014), APD  (1/2007-8/2016), and MSA (7/2013-2/2017). Left: all patients. Right: patients belonging to blood-type compatible pairs.}
\label{fig:PRA}
\end{figure}

Percentage of high PRA patients also varies across programs outside the U.S.
In Australia, $42\%$ of registered candidates have a PRA greater than $90$ (\cite{ferrari2012high}) and in the UK, $46\%$ of patients have a PRA greater than $85$ (\cite{johnson2008early}), while in Canada, only $36\%$ of pairs have a PRA of $80$ or more (\cite{MalikCole}).  In the Netherlands, \cite{glorie2014kidney} estimate that $30\%$ of patients have a PRA above $97$.

Similar to PRA distribution, the pool compositions also vary with respect to blood-type distributions of patient-donor pairs.
\cite{frequency} report that O-O pairs make $26\%$ at the MSA pool but only  $20.81\%$  of the APD pool; the percentage of pairs that contain an O donor in the APD and MSA pools are $34.4\%$ and $50.3\%$, respectively.

These platforms also differ in  size; during the period of the data,   MSA and APD had  an enrollment rate of roughly $180$ pairs per year, while the NKR had an enrollment rate of about $360$ pairs per year. Access to altruistic donors also varies, with roughly $1$, $8$ and $50$ altruistic donors per year at the MSA, APD and NKR, respectively.

\noindent{\bf Matching.}  While more than $80\%$ of the transplants at the  NKR and the APD have been conducted through chains (\cite{anderson2015kidney,NKRreport}), some platforms (such as MSA, Belgium, Czech Republic) match their pairs mostly through cycles due to short access to altruistic donors. In  countries like France, Poland and Portugal, chains are infeasible since altruistic donations are not permitted (\cite{eurokpd}).

Exchange platforms in the U.S.  adopt typically  myopic-like matching policies that periodically search for matches.\footnote{Based on personal communication with numerous platforms.}\footnote{Generally speaking,  a myopic policy is one that upon matching does not explicitly account for possible future matches.} The  APD, MSA, and NKR search for exchanges on a daily basis  and UNOS searches for exchanges bi-weekly.\footnote{There is some concern that this behavior is inefficient (and arguably a result of competition). However,  numerical simulations  by \cite{frequency} suggest that in steady-state there is essentially  no harm from frequent matching (though having multiple small platforms does harm efficiency). Moreover, MSA is not facing any  competition.} However, some countries,  such as Canada, United Kingdom, the Netherlands and Australia, search  for exchanges every $3$ or $4$ months (\cite{ferrari2014kidney}).

Matching policies at most platforms assign high weights to highly sensitized patients (easy-to-match patients match quickly (\cite{frequency,NKRreport})). We note, however,   that MSA and  NKR assign high priority to compatible pairs, which are very easy-to-match.\footnote{Such pairs could choose to go through a direct transplant if they are not matched quickly.} Platforms typically have  multiple desiderata. However,  implicit first order related goals are to reduce waiting times and facilitate many transplants (\cite{NKRreport}).

\noindent{\bf Policy.}  Various challenges  arise due to  variation across kidney exchange pools with respect to their compositions and even operational issues: What priorities to assign to different types of patients? What is the impact of attracting more easy-to-match pairs  and even compatible pairs?\footnote{See for example \cite{marketfailure} and  \cite{sonmez2017incentivized} for incentive schemes towards thickening the pool with such pairs.} How important is it to   incorporate chains and attract altruistic donors?

There are also several initiatives  to merge kidney exchange platforms in order to  increase efficiency and  matching opportunities for highly sensitized patients (for example, see \cite{bohmig2017czech} for  merging the  Austrian and the Czech Republic programs, \cite{israelcyprus} for Israel and Cyprus; further, \cite{nikzad2017financing} look at augmenting national programs through global kidney exchange.). It is natural to study what is the impact of merging programs on different types of patients. 

This paper does not intend to model  the details in  kidney exchange. However, our stylized model  does capture some important features in kidney exchange and will hopefully generate some useful insights.


\section{Main results}
\label{sec:results}

We analyze the average waiting time under the myopic policies defined in Section \ref{model section}.
For bilateral matching polices, we identify a stark threshold in the scaling of  waiting time when moving from the regime where a  majority of arrivals are hard-to-match  agents to the regime where the majority of arrivals are easy-to-match. Such a contrast does not exist when agents are matched through chains.  We further study the impact of arrival rates of the two types on the market performance under the three polices.

\subsection{Bilateral matching}

This section considers the setting, in which agents match only through bilateral exchanges, i.e. through 2-way cycles.

\begin{restatable}{theorem}{bilatprioH}
\label{th:bilat_prioH}
Under the \textit{BilateralMatch(H)} policy and in steady-state, the average waiting time $w_H^\BH$ satisfies the following.
\begin{itemize}
\item[-] If $\lambda_H < \lambda_E$, then $\lim_{p_H \rightarrow 0} p_H  w_H^\BH = \frac{\ln\left(\frac{\lambda_E}{\lambda_E - \lambda_H}\right)}{p_E  \lambda_H }$.
\item[-] If  $\lambda_H > \lambda_E$, then $\lim_{p_H \rightarrow 0} p_H^2 w_H^\BH = \frac{\ln \left( \frac{2 \lambda_H}{\lambda_H + \lambda_E} \right)}{\lambda_H }$.
\end{itemize}
\end{restatable}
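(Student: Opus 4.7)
The plan is to analyze the two-dimensional CTMC $(H_t^\BH, E_t^\BH)$ induced by $\BH$ and show that, after appropriate rescaling, its stationary marginal for $H^\BH$ concentrates around a deterministic fluid fixed point whose value matches the right-hand side of each claim. By Little's law \eqref{eq:littles:law}, it suffices to compute $\lim_{p_H\to 0} p_H\,\E[H^\BH]$ in the first regime and $\lim_{p_H\to 0} p_H^2\,\E[H^\BH]$ in the second; the statement then follows by dividing by $\lambda_H$.

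The first step is to write out the infinitesimal drift of $\E[H^\BH]$. When a new $H$ arrives (rate $\lambda_H$) it matches an existing $H$ with probability $1-(1-p_H^2)^H$, else an $E$ with probability $(1-p_H^2)^H(1-(1-p_Hp_E)^E)$, else joins the pool; when a new $E$ arrives (rate $\lambda_E$), under the $H$-priority rule it matches an $H$ with probability $1-(1-p_Hp_E)^H$ before considering any $E$. Setting the instantaneous drift of $H$ to zero at stationarity and taking $p_H\to 0$ under the anticipated rescaling yields the fluid equation. In Case 1 ($\lambda_H<\lambda_E$), rescale $X:=p_H H$: since $(1-p_H^2)^H\to 1$, $(1-p_Hp_E)^H\to e^{-p_E X}$, and $\E[E^\BH]$ is $O(1)$, the leading-order balance is $\lambda_H = \lambda_E(1-e^{-p_E X})$, whose unique root is $X^\star = p_E^{-1}\ln\!\bigl(\lambda_E/(\lambda_E-\lambda_H)\bigr)$. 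In Case 2 ($\lambda_H>\lambda_E$), rescale $Y:=p_H^2 H$: now $(1-p_Hp_E)^H\to 0$ and $(1-p_H^2)^H\to e^{-Y}$, yielding $\lambda_H(2e^{-Y}-1)=\lambda_E$ and thus $Y^\star = \ln\!\bigl(2\lambda_H/(\lambda_H+\lambda_E)\bigr)$. These two fixed points are exactly the numerators in the theorem.

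The rigorous step, and the main obstacle, is promoting this heuristic to $L^1$-convergence of the rescaled $\E[H^\BH]$. I would proceed in two stages. Stage 1 controls $E^\BH$: a direct drift computation for $E$ shows that once $E$ exceeds a fixed ($p_H$-independent) constant, the drift is strictly negative---in Case 1 because new $E$'s that fail to find an $H$ partner are removed by subsequent $E$ arrivals at rate proportional to $1-(1-p_E^2)^E$, and in Case 2 because the residual probability $(1-p_Hp_E)^H$ that a new $E$ fails to match an $H$ is super-polynomially small in $1/p_H$. This yields $\E[E^\BH]=O(1)$ uniformly in $p_H$, and, more importantly, shows $E$'s contribution to the $H$-drift is asymptotically negligible under either rescaling. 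Stage 2 concentrates the scaled $H$: with $E$ controlled, the drift of $p_H H$ (resp.\ $p_H^2 H$) is a strictly monotone function of the scaled variable, crossing zero uniquely at $X^\star$ (resp.\ $Y^\star$) with slope bounded away from zero in a neighborhood of this fixed point.

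The technical challenge, as flagged in the introduction, is that Lyapunov-function or marginal-projection arguments typically lose constants and only recover the correct $1/p_H$ versus $1/p_H^2$ scaling, not the exact multiplicative factor. To obtain the sharp limit, one needs a joint concentration bound that controls both coordinates of the non-homogeneous 2D random walk simultaneously---this is precisely the kind of auxiliary lemma the authors advertise in the introduction. Combined with the monotonicity of the drift, such a bound sandwiches $\E[p_H H^\BH]$ (resp.\ $\E[p_H^2 H^\BH]$) between matching upper and lower bounds that both converge to $X^\star$ (resp.\ $Y^\star$); Little's law then delivers the two stated expressions for $\lim p_H w_H^\BH$ and $\lim p_H^2 w_H^\BH$.
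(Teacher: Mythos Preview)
Your overall strategy---compute the fluid fixed point from the zero-drift equations, then upgrade to $L^1$ convergence via concentration of the 2D random walk---is exactly the paper's approach, and your fluid computations in both regimes are correct. The paper proves the concentration via two general lemmas (Lemmas~\ref{lem:lower_bound} and~\ref{lem:upper_bound}) on 2D birth--death walks, then instantiates them as Propositions~\ref{prop:lower_H_prio_H} and~\ref{prop:upper_H_prio_H}, which is what your final paragraph anticipates.

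There is, however, a circularity in your Stage~1/Stage~2 ordering that you should fix. Both of your Stage~1 arguments for controlling $E^\BH$ implicitly assume you already know the order of $H^\BH$: in Case~1, the self-regulation of $E$ via $E$--$E$ matches occurs at rate $\lambda_E(1-p_Ep_H)^H\bigl(1-(1-p_E^2)^E\bigr)$, and this is only bounded away from zero if $(1-p_Ep_H)^H$ is, i.e., if $H=O(1/p_H)$; in Case~2, the claim that an arriving $E$ fails to find an $H$ partner with super-polynomially small probability presupposes $H$ is of order $1/p_H^2$. So you cannot bound $E$ first. The paper breaks this circularity by observing a structural feature specific to the $H$-priority rule: the leftward rate $Q^\BH([h,e],[h-1,e])=\lambda_H\bigl(1-(1-p_H^2)^h\bigr)+\lambda_E\bigl(1-(1-p_Ep_H)^h\bigr)$ does not depend on $e$ at all, and the rightward rate is trivially bounded above by $\lambda_H(1-p_H^2)^h$. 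Hence the \emph{upper} tail of $H^\BH$ can be controlled with no information about $E$ (this is Proposition~\ref{prop:upper_H_prio_H}, proved by applying Lemma~\ref{lem:upper_bound} with $S(h)=\N$). That upper bound on $H$ then feeds into the bound on $E$ (Lemma~\ref{lem:upper_E}), which in turn enables the \emph{lower} tail bound on $H$ (Proposition~\ref{prop:lower_H_prio_H}). So the correct bootstrap is $H$-upper $\to$ $E$-upper $\to$ $H$-lower, not $E$ first then $H$.
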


Theorem \ref{th:bilat_prioH} provides  not only the scaling laws on $w_H^\BH$ but also the associated constants. The following  corollaries provide comparative statics with respect to $\lambda_H$.
\begin{corollary}
\label{cor:monotone}
Consider the \textit{BilateralMatch(H)} policy and fix  $\lambda_E$. The limiting average waiting time $w_H^\BH$  increases with $\lambda_H$ in the interval  $\lambda_H < \lambda_E$.
\end{corollary}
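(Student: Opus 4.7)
The plan is to deduce the claim directly from the closed-form limit in Theorem~\ref{th:bilat_prioH}. In the regime $\lambda_H<\lambda_E$ that theorem yields
$$\lim_{p_H \to 0} p_H\, w_H^{\BH} \;=\; \frac{1}{p_E\,\lambda_H}\,\ln\!\left(\frac{\lambda_E}{\lambda_E-\lambda_H}\right).$$
Since $p_E$ is held fixed and does not depend on $\lambda_H$, the monotonicity statement reduces to showing that
$$\varphi(\lambda_H) \;:=\; \frac{1}{\lambda_H}\,\ln\!\left(\frac{\lambda_E}{\lambda_E-\lambda_H}\right)$$
is strictly increasing in $\lambda_H$ on $(0,\lambda_E)$.

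I would establish this by a direct differentiation. Writing $h(\lambda_H) = \ln\lambda_E - \ln(\lambda_E - \lambda_H)$ so that $h'(\lambda_H) = 1/(\lambda_E - \lambda_H)$, the quotient rule gives
$$\varphi'(\lambda_H) \;=\; \frac{1}{\lambda_H^{2}}\!\left[\frac{\lambda_H}{\lambda_E-\lambda_H} \;-\; \ln\!\frac{\lambda_E}{\lambda_E-\lambda_H}\right].$$
Hence $\varphi'(\lambda_H)>0$ on $(0,\lambda_E)$ if and only if $\frac{\lambda_H}{\lambda_E-\lambda_H} > \ln\!\frac{\lambda_E}{\lambda_E-\lambda_H}$ throughout this interval. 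Substituting $t = \lambda_E/(\lambda_E-\lambda_H)$, which maps $(0,\lambda_E)$ bijectively onto $(1,\infty)$ and sends $\lambda_H/(\lambda_E-\lambda_H)$ to $t-1$, the required bound becomes the classical inequality $t-1 > \ln t$ for $t>1$. This follows from the strict concavity of $\ln$: the tangent to $\ln$ at $t=1$ equals $t-1$ and strictly majorizes $\ln t$ everywhere else. Feeding this back yields $\varphi'>0$ on the whole interval, which is the desired monotonicity.

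The main obstacle here is essentially nil: once the exact asymptotic constant in Theorem~\ref{th:bilat_prioH} is in hand, the corollary is a one-variable calculus exercise. The substantive work sits upstream — the two-dimensional random walk analysis used to pin down that constant is precisely what makes this kind of sharp comparative static available; had we only obtained the scaling $w_H^{\BH} = \Theta(1/p_H)$, the (perhaps counterintuitive) fact that increasing $\lambda_H$ in the regime $\lambda_H<\lambda_E$ \emph{increases} the limiting waiting time would remain invisible.
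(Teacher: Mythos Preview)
Your proof is correct and follows essentially the same approach as the paper: both reduce the claim, via Theorem~\ref{th:bilat_prioH}, to checking that the one-variable function $\lambda_H \mapsto \frac{1}{\lambda_H}\ln\!\frac{\lambda_E}{\lambda_E-\lambda_H}$ (equivalently, $x \mapsto \frac{\ln(1/(1-x))}{x}$ with $x=\lambda_H/\lambda_E$) is increasing. The paper simply asserts that the derivative is positive on $(0,1)$, whereas you carry out the computation explicitly and reduce it to the standard inequality $t-1>\ln t$ for $t>1$; this is a bit more detailed but not a different argument.
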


\begin{corollary}
\label{cor:nonmonotone}
Consider the \textit{BilateralMatch(H)} policy and fix  $\lambda_E$. The limiting average waiting time $w_H^\BH$   increases with $\lambda_H$ in the interval $\lambda_E < \lambda_H < x^* \lambda_E $, and decreases in the interval $\lambda_H > x^* \lambda_E$, where $x^* \approx 2.18$ is the unique solution of
\begin{align}
\label{eq:fixPoint}
(x+1) \ln(2-2/(x+1)) = 1.
\end{align}
\end{corollary}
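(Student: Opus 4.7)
The plan is to combine the explicit limit formula from Theorem \ref{th:bilat_prioH} (applicable precisely because we are in the regime $\lambda_H > \lambda_E$) with a one-variable calculus argument. Concretely, I would treat $\lim_{p_H \to 0} p_H^2 w_H^\BH$ as a smooth function of $\lambda_H$ with $\lambda_E$ held fixed, locate its unique critical point, and show it is a maximum.

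First I would change variable, writing $x = \lambda_H/\lambda_E > 1$ and defining
\[
W(x) \;:=\; \frac{1}{x\lambda_E}\,\ln\!\left(\frac{2x}{x+1}\right),
\]
which, by Theorem \ref{th:bilat_prioH}, equals $\lim_{p_H \to 0} p_H^2\, w_H^\BH$. Since $\lambda_E > 0$ is fixed and $\lambda_H \mapsto \lambda_H/\lambda_E$ is a strictly increasing bijection, it suffices to prove that $W$ is strictly increasing on $(1, x^*)$ and strictly decreasing on $(x^*, \infty)$.

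Next I would compute $W'(x)$. Using $\tfrac{d}{dx}\ln(2x/(x+1)) = 1/x - 1/(x+1) = 1/[x(x+1)]$ and the quotient rule,
\[
\lambda_E\, x^2\, W'(x) \;=\; \frac{1}{x+1} \;-\; \ln\!\left(\frac{2x}{x+1}\right) \;=:\; g(x).
\]
Setting $g(x) = 0$ and multiplying by $x+1$ yields precisely the fixed-point equation $(x+1)\ln(2 - 2/(x+1)) = 1$ stated in the corollary. To show this equation has a unique root in $(1, \infty)$, I would verify three simple facts: (i) $g(1) = 1/2 - \ln(1) = 1/2 > 0$; (ii) $g(x) \to -\ln 2 < 0$ as $x \to \infty$; and (iii) $g$ is strictly decreasing on $(1, \infty)$, since
\[
g'(x) \;=\; -\frac{1}{(x+1)^2} - \frac{1}{x(x+1)} \;<\; 0.
\]
The intermediate value theorem then delivers a unique $x^* \in (1, \infty)$ with $g(x^*) = 0$, and $g$ flips sign from positive to negative at $x^*$.

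Since $W'(x)$ has the same sign as $g(x)$ (the prefactor $1/(\lambda_E x^2)$ is positive), this yields strict monotonicity of $W$ on either side of $x^*$ and hence the claim. I do not anticipate any real obstacle here: once Theorem \ref{th:bilat_prioH} is in hand, the corollary is essentially a calculus exercise, with the only substantive check being the uniqueness of $x^*$ via the monotonicity of $g$. The numerical value $x^* \approx 2.18$ can be confirmed by plugging into the defining equation.
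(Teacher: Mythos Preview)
Your proposal is correct and follows essentially the same route as the paper: both substitute $x=\lambda_H/\lambda_E$, reduce to the function $x\mapsto \tfrac{1}{x}\ln\!\bigl(\tfrac{2x}{x+1}\bigr)$, and locate the sign change of its derivative at the root of the stated fixed-point equation. Your version is in fact more complete, since you explicitly verify uniqueness of $x^*$ via the strict monotonicity of $g(x)=\tfrac{1}{x+1}-\ln\!\bigl(\tfrac{2x}{x+1}\bigr)$, whereas the paper simply asserts the sign pattern of the derivative.
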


The above theorem and corollaries provide  several  messages on the impact of thickness on the performance of bilateral matching.
First, the main factor in the  asymptotic behavior of $w_H^\BH$ is which type of agents has a larger arrival rate. Some intuition for the scaling factors is the following.
Agents'  average waiting time is inversely proportional to the probability of a bilateral match to occur. Under a myopic bilateral policy, no existing pair of agents in the market can match with each other. For an arriving $H$ agent, the probability of forming a bilateral match with an existing $E$ agent is $p_E p_H$,  and with an existing $H$ agent is $p_H^2$. When $\lambda_H < \lambda_E$,  almost all $H$ agents are matched with $E$ agents resulting in an average waiting time that scales with $1/p_E p_H$. When $H$ agents arrive more frequently than $E$ agents, there are simply not enough $E$ agents to match with $H$. So a  non-negligible fraction of  $H$ agents  match with each other and thus the scaling of the average waiting time increases to  $1/p_H^2$.

Second, the arrival rates affect the average waiting times directly and not necessarily monotonically. Increasing the arrival rate of $E$ agents always decreases the average waiting time.
But this is not the case with $H$ agents.
When $\lambda_H < \lambda_E$, the average waiting time of $H$ agents increases with $\lambda_H$. So even though  the market {\it thickens} and all $H$ agents match with $E$ agents, increasing  $\lambda_H$ creates more competition among $H$ agents.
When $\lambda_H > \lambda_E$, there is a non-monotone behavior of the waiting time when increasing $\lambda_H$.
Here too, this increase escalates competition among $H$ to match with $E$ agents and is the dominant effect as long as $\lambda_H$ is not ``much'' larger than $\lambda_E$.
After a certain threshold,  the positive effect from having more  $H$ agents (increasing the possibility of forming bilateral matches between $H$ agents), dominates the negative impact that results from the competition to match with $E$ agents.

The key insight from the above discussion is that in a heterogeneous market, increasing the arrival rate does not always result in improving the waiting time due to the adverse effect of competition for certain market compositions. This cannot be captured in a homogeneous model with only hard-to-match agents (the model studied in \cite{AndersonDynamic}). 

Finally, we comment on the impact of $p_E$ on the waiting time.  When $\lambda_H < \lambda_E$,  $w_H^\BH$ is decreasing in $p_E$. On the other hand, when $\lambda_H > \lambda_E$,  $w_H^\BH$ is independent of $p_E$. The intuition is that in the former, all $H$ agents match with $E$ agents and in the latter the dominant factor in the average waiting time is due to $2$-ways between $H$ agents, which is independent of  $p_E$. 

The proof of Theorem \ref{th:bilat_prioH} amounts to analyzing the underlying $2$-dimensional continuous-time spatially non-homogeneous random walk. The  description of the random walk is presented in Subsection \ref{subsec:BH} (Figure \ref{fig:2d:walk}), along with a heuristic that helps us guess the right constants, and build intuition on the behavior of the  random walk. The main idea behind the proof is establishing concentration results for a $2$-dimensional CTMC where the steady-state distribution decays geometrically when moving away from the expectation. These concentration results allow us to establish matching lower and upper bounds on  $w_H^\BH$ (the proof is outlined in Subsection \ref{subsec:BH} with details deferred to Appendix \ref{app:bilat_prioH}).
We note that one of the main challenges in our analysis is the need to jointly bound the distribution in both dimensions, because  analyzing  marginal probability distributions  would not result in tight bounds.
As a byproduct of our analysis, in Subsection \ref{subsec:lemmas}, we state two auxiliary lemmas on concentration bounds for a general class of $2$-dimensional random walks.
The  corollaries follow from basic analysis of the corresponding constants (as a function of $\lambda_H$). Both corollaries are proved in  Appendix \ref{subsec:corrs}.

\begin{restatable}{theorem}{bilatprioE}
\label{th:bilat_prioE}
Under the \textit{BilateralMatch(E)} policy and in steady-state, the average waiting time $w_H^\BE$ satisfies the following.
\begin{itemize}
\item[-] If $\lambda_H < \lambda_E$, then $ \frac{\ln\left(\frac{\lambda_E}{\lambda_E - \lambda_H}\right)}{p_E \lambda_H} \leq \lim_{p_H \rightarrow 0} p_H w_H^\BE   \leq \frac{\ln\left(\frac{2\lambda_E}{\lambda_E - \lambda_H}\right)}{p_E \lambda_H}$.
\item[-] If $\lambda_H > \lambda_E$, then $\lim_{p_H \rightarrow 0} p_H^2 w_H^\BE  = \frac{\ln \left( \frac{2 \lambda_H}{\lambda_H + \lambda_E} \right)}{\lambda_H}$.
\end{itemize}
\end{restatable}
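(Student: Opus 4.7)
The strategy parallels the proof of Theorem \ref{th:bilat_prioH}. First I would write down the two-dimensional CTMC $(H_t^\BE, E_t^\BE)$ for the $\BE$ policy. Its transitions differ from those of the $\BH$ chain only in how ties are broken: an arriving agent that is compatible with both an existing $E$ and an existing $H$ now cycles with the $E$ rather than the $H$. After setting up the chain, I would run the heuristic drift calculation of Subsection \ref{subsec:BH} to guess candidate asymptotic values for $H_t^\BE$ in each regime, and then invoke the 2D concentration lemmas of Subsection \ref{subsec:lemmas} to turn the heuristic into rigorous bounds on $\mathbb{E}[H^\BE]$, from which Little's law delivers $w_H^\BE$.

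For $\lambda_H > \lambda_E$, I would first argue that $E_t^\BE = O(1)$ with high probability in steady-state: each existing $E$ faces an absorption rate of at least $\lambda_E p_E^2$ from the next $E$ arrival (since $\BE$ prioritizes $E$-$E$ cycles), while the $E$-inflow is only $\lambda_E$, so the $E$-marginal is stochastically dominated by an $M/M/\infty$-type queue with $O(1)$ steady-state. Conditional on $E_t^\BE$ being $O(1)$, the $H$-drift equation for $\BE$ coincides with that of $\BH$ up to terms of order $p_H$, so the same concentration argument as in Theorem \ref{th:bilat_prioH} pins down the exact constant $\ln(2\lambda_H/(\lambda_H+\lambda_E))/\lambda_H$ at scale $1/p_H^2$. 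The negligible contribution of $E$-$H$ matches in this regime is exactly why the $\BH$ and $\BE$ limits coincide.

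For $\lambda_H < \lambda_E$ the two bounds require separate arguments. For the lower bound I would use a monotonicity/coupling argument: under the synchronous edge-realization coupling, whenever both policies face the choice of cycling a new arrival with either an existing $E$ or an existing $H$, $\BH$ removes the $H$ while $\BE$ removes the $E$, so $\BE$ clears $H$ agents weakly slower throughout the sample path. Propagating this comparison to stationary expectations and invoking Theorem \ref{th:bilat_prioH} then gives the lower bound $\ln(\lambda_E/(\lambda_E-\lambda_H))/(p_E\lambda_H)$. For the upper bound, the expected $H$-drift under $\BE$ takes the form $\lambda_H - \lambda_E\, \mathbb{E}[(1-p_E^2)^{E_t^\BE}(1-(1-p_Hp_E)^{H_t^\BE})] + O(p_H)$, where the random factor $(1-p_E^2)^{E_t^\BE}$ represents the share of $E$-arrivals not ``wasted'' on $E$-$E$ cycles. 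Replacing this discount by the crude lower bound $1/2$ (valid with high probability after a Chernoff-type tail estimate on $E_t^\BE$) and then invoking the 2D concentration lemma yields the claimed upper bound $\ln(2\lambda_E/(\lambda_E-\lambda_H))/(p_E\lambda_H)$.

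The principal obstacle is the sub-regime $\lambda_H < \lambda_E$: the steady-state distribution of $E_t^\BE$ does not concentrate at a single value as $p_H \to 0$ (unlike in the $\lambda_H > \lambda_E$ regime where it is essentially constant), so the random discount $(1-p_E^2)^{E_t^\BE}$ cannot be replaced by a deterministic equivalent without losing a constant factor. Closing the multiplicative gap inside the logarithm between the stated upper and lower bounds would require a joint limit theorem for $(H_t^\BE, E_t^\BE)$ at the appropriate scale, rather than the one-sided marginal concentration bounds supplied by Subsection \ref{subsec:lemmas}, which is why the statement settles for matching scalings with a gap in the constants.
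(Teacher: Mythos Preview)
Your overall architecture---derive the $\BE$ transition rates, guess via drift balance, then rigorize via the 2D concentration lemmas---matches the paper's in spirit, and your treatment of the regime $\lambda_H>\lambda_E$ is essentially what the paper does (bound $E^\BE$ by a constant, then observe the $H$-dynamics coincide with $\BH$ up to $o(1)$). The substantive divergence is in the regime $\lambda_H<\lambda_E$, and there your proposal has one methodological difference and one genuine gap.

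\textbf{Lower bound.} You propose a sample-path coupling showing $H_t^\BE\ge H_t^\BH$. Such 2D pathwise comparisons are delicate: once the two systems diverge (e.g.\ $\BH$ accumulates extra $E$ agents that $\BE$ has already cleared), a future $H$ arrival may find a cycle partner in $\BH$ that is simply absent in $\BE$, and the invariant can fail unless you maintain a joint ordering on both coordinates. The paper avoids this entirely: it observes that the rightward rate under $\BE$ equals that under $\BH$ and the leftward rate is pointwise smaller, so the \emph{same} functions $f,g$ used in Proposition~\ref{prop:lower_H_prio_H} still satisfy the hypotheses of Lemma~\ref{lem:lower_bound}, giving the lower bound directly without any coupling. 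This is both simpler and more robust than what you sketch.

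\textbf{Upper bound.} Here there is a real gap. You write that the discount $(1-p_E^2)^{E_t^\BE}\ge 1/2$ ``with high probability after a Chernoff-type tail estimate''. But the stationary law of $E_t^\BE$ is $O(1)$ and \emph{does not depend on $p_H$} (cf.\ Lemma~\ref{lem:upper_E_prioE}): there is a fixed positive probability that $E_t^\BE$ exceeds $\ln 2/(-\ln(1-p_E^2))$, so no tail bound can make the event $\{(1-p_E^2)^{E_t^\BE}\ge 1/2\}$ hold with probability tending to $1$ as $p_H\to 0$. You correctly diagnose this non-concentration at the end of your proposal, but the fix you offer earlier contradicts it. More structurally, feeding the leftward rate $g(h)=\lambda_E(1-p_E^2)^e(1-(1-p_Ep_H)^h)$ into Lemma~\ref{lem:upper_bound} requires a set $S(h)$ with $\P[E\notin S(h)]\le c\delta^h$; the only way to get geometric decay in $h$ from Lemma~\ref{lem:upper_E_prioE} is to let $S(h)$ grow linearly in $h$, which drives the lower bound on $(1-p_E^2)^e$ to zero and destroys the argument. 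The paper explicitly acknowledges that the concentration-lemma route does not close here, and instead introduces a new coupling: define an auxiliary process $\widetilde{\BE}$ in which every $E$ agent that fails to match on arrival is \emph{converted into an $H$ agent}. This collapses the chain to one dimension, and a pathwise coupling (Lemma~\ref{lem:coup}) shows $H^\BE+E^\BE\le H^{\widetilde{\BE}}+1$; analyzing the resulting birth--death chain then yields the upper constant $\ln(2\lambda_E/(\lambda_E-\lambda_H))/p_E$ directly. That coupling idea is the missing ingredient in your plan.
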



Comparing results of Theorems \ref{th:bilat_prioH} and \ref{th:bilat_prioE}, we observe that when $\lambda_H < \lambda_E$, the average waiting time of $H$ agents is  larger {or equal} when  prioritizing $E$ agents rather then $H$ agents (numerical simulations presented in  Subsection \ref{subsec:sim_priorities} suggest that prioritizing $E$ agents results in a strictly larger average waiting time). Nevertheless, the scaling remains the same.
However, when $\lambda_H > \lambda_E$  prioritizing  $E$ agents does not impact the waiting time of $H$ agents. The intuition is as follows. When  $\lambda_H > \lambda_E$, the number of $H$ agents waiting in the market scales as $1/p_H^2$, suggesting that the chance that an $E$ agent does not match immediately upon arrival vanishes. Therefore assigning priority to $E$ agents is redundant.


The proof of Theorem \ref{th:bilat_prioE} also requires  analysis of  the underlying $2$-dimensional continuous-time spatially non-homogeneous random walk, and, in most parts, follows a similar structure to the proof of Theorem \ref{th:bilat_prioH}. A detailed description of the random walk is presented in Subsection \ref{subsec:BE}. 
The proof of the  upper and lower bounds is presented in Appendix \ref{sec:bilatE:proof}, where establishing the upper bound requires new ideas beyond the concentration results: we couple the Markov process underlying policy $\BE$ with another process in which an $E$ agent that cannot form a match upon arrival turns into an $H$ agent.\footnote{In Subsection \ref{subsec:BE} we  provide a rough intuition on why we cannot close the gap between our upper and lower bounds on $w_H^\BE$ for the regime $\lambda_H < \lambda_E$.}
In Subsection \ref{subsec:BE}, we also provide a heuristic argument that leads us to guess {that the exact limit is $\frac{\ln \left(\frac{\lambda_E + \lambda_H}{\lambda_E - \lambda_H}\right)}{p_E p_H}$} (See Figure \ref{wtimeHeuristicBE} in Subsection \ref{sec:app:tightness}).

\subsection{Chain matching}
\label{subsec:chains}

In this section we analyze  the \textit{ChainMatch(d)}  policy, under which agents match myopically  through chains.

\subsubsection{Waiting time behavior}
\begin{restatable}{theorem}{chainsTh}
\label{th:chain}
Let $d \geq 1$ be a constant (independent of $p_H$). Under the \textit{{ChainMatch(d)}} policy and in steady-state, the average waiting time $w_H^\Cd$ satisfies
$$\lim_{p_H \rightarrow 0} p_H w_H^\Cd   \leq \frac{\ln \left(\frac{\lambda_H }{\lambda_E (1 - (1 - p_E)^d)} + 1\right)}{\lambda_H}. $$
\end{restatable}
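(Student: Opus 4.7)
The plan is to upper bound $H_t^{\Cd}$ by coupling it with a one-dimensional auxiliary CTMC $\tilde H_t$ in which no $E$-agent ever enters the pool, and then to analyze $\tilde H_t$ directly. First I would set $\mu := \lambda_E(1-(1-p_E)^d)$, the rate at which a newly arriving $E$-agent is compatible with at least one of the $d$ bridge agents and therefore triggers a chain-segment (an arriving $H$-agent triggers a chain only with probability $1-(1-p_H)^d = O(p_H)$, so such events will be asymptotically negligible for $H$-removals). In $\tilde H_t$, $H$-agents arrive at rate $\lambda_H$ (each incrementing $\tilde H$ by $1$), while at each Poisson instant of rate $\mu$ a chain is triggered of random length $X_{\tilde H_t}$, removing that many $H$-agents from the pool. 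Using shared arrival streams and shared compatibility coins (applied to the common portion of the two $H$-pools), I would couple $\tilde H_t$ with the 2D CTMC to obtain $H_t^{\Cd} \le \tilde H_t$, exploiting that in the 2D process (i) $H$-triggered chains contribute additional $H$-removals and (ii) the tie-breaking rule favoring $H$ together with possible $E$-continuations can only \emph{extend} a real chain's $H$-consumption beyond $X_{H_t^{\Cd}}$, never shorten it.

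Next I would analyze the chain-segment length $X_k$. By Observation~\ref{obs:onlynodes}, each extension-step of the local search probes fresh, independent Bernoulli$(p_H)$ compatibilities with the remaining $H$-agents, so $X_k$ obeys the memoryless formula
\[
\P(X_k \ge j) \;=\; \prod_{i=0}^{j-1}\bigl(1-(1-p_H)^{k-i}\bigr).
\]
In the scaling $p_H \to 0$ with $p_H k \to c$, one has $X_k \Rightarrow \mathrm{Geom}(1-e^{-c})$ and $\E[X_k] = e^{c}-1 + o(1)$, uniformly on bounded intervals of $c$. Consequently the expected downward jump of $\tilde H$ from state $k$ is $\mu\E[X_k] \approx \mu(e^{p_H k}-1)$, and the mean drift $\lambda_H - \mu\E[X_k]$ vanishes at $k^\ast := \ln(1+\lambda_H/\mu)/p_H$, being positive below $k^\ast$ and negative above.

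I would then apply a Lyapunov/drift argument with $V(k) = \exp(\alpha p_H k)$ for a suitably chosen $\alpha>0$ to establish positive recurrence of $\tilde H_t$ together with a geometric tail bound $\pi^{\tilde H}\bigl(\tilde H > k^\ast + s/p_H\bigr) \le Ce^{-\alpha s}$, which yields $p_H\,\E[\tilde H_\infty] \le \ln(1+\lambda_H/\mu) + o(1)$. Combined with the coupling bound $\E[H^{\Cd}_\infty] \le \E[\tilde H_\infty]$ and Little's law~\eqref{eq:littles:law}, this produces
\[
p_H\, w_H^{\Cd} \;\le\; \frac{\ln\!\bigl(\lambda_H/\mu + 1\bigr)}{\lambda_H} + o(1),
\]
giving the theorem upon letting $p_H \to 0$.

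The hard part will be the tail control of $\tilde H_t$ around $k^\ast$. Because its downward transitions are non-nearest-neighbor and have batch sizes $X_k$ themselves of order $e^{p_H k}-1$, standard birth-and-death arguments do not apply: the Lyapunov comparison must delicately balance $V$'s exponential growth under single-unit $H$-arrivals against its (much larger) exponential contraction under batch-departures, uniformly as $p_H\to 0$. A secondary subtlety is making the coupling $H_t^{\Cd} \le \tilde H_t$ rigorous despite the two processes having different pool sizes at any given instant; here the shared-coin argument must be applied only to the common portion of the two $H$-pools, and one must verify that the surplus $\tilde H_t - H_t^{\Cd}$ never becomes negative at a chain-event.
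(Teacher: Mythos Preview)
Your outline is essentially the paper's strategy: couple the two-dimensional process to a one-dimensional auxiliary in which no $E$-agent waits (the paper calls this $\tCd$), use the memoryless identity for the chain-segment length, and then analyze the auxiliary. Your auxiliary differs cosmetically from the paper's---you let every arriving $H$-agent increment $\tilde H$ and only $E$-arrivals trigger chains, whereas the paper keeps the $H$-triggered chain events---but since $1-(1-p_H)^d = O(p_H)$ the limits coincide, and the coupling remains monotone.

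The one substantive difference is in how the auxiliary is analyzed. You propose a Lyapunov function $V(k)=e^{\alpha p_H k}$ and anticipate that controlling the batch-departure MGF uniformly in $p_H$ will be ``the hard part.'' The paper sidesteps this entirely. It writes flow conservation across the cut $\{0,\dots,h\}\mid\{h+1,h+2,\dots\}$:
\[
\pi(h)\,\lambda_H(1-p_H)^d \;=\; \Lambda \sum_{k\ge 1}\pi(h+k)\,\P[S_{h+k}\ge k],
\]
and then uses the memoryless identity $\P[S_{h+k}\ge k]=\P[S_{h+1}\ge 1]\,\P[S_{h+k}\ge k-1]$ to factor out $\P[S_{h+1}\ge 1]$ from every term. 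What remains is exactly $\Lambda^{-1}$ times the flow equation at $h+1$, so the infinite sum collapses to a \emph{two-term} recursion
\[
\frac{\pi(h+1)}{\pi(h)} \;=\; \frac{\lambda_H(1-p_H)^d}{\bigl(1-(1-p_H)^{h+1}\bigr)\bigl(\Lambda+\lambda_H(1-p_H)^d\bigr)},
\]
just as in a birth--death chain. From this explicit ratio the geometric tail above (and below) $k^\ast$ is immediate, with no MGF estimate needed. Your Lyapunov route would work, but the flow-plus-memoryless trick is what makes the analysis short; it is worth knowing.
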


\noindent A stronger result is obtained for the special case, in which  $p_E = 1$: 
\begin{proposition}
\label{prop:chain:p1}
Let $p_E = 1$ and $d\geq 1$ be a constant (independent of $p_H$).  Then
$$\lim_{p_H \rightarrow 0} p_H w_H^\Cd  = \frac{\ln \left(\frac{\lambda_H}{\lambda_E} + 1 \right)}{\lambda_H}.$$
Consequently, $\lim_{p_H \rightarrow 0} p_H w_H^\Cd $ decreases with $\lambda_E$ and  $\lambda_H$.
\end{proposition}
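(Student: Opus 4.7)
My plan is to solve the one-dimensional marginal chain that arises when $p_E = 1$ in closed form and compute its expectation. The key simplification is that when $p_E = 1$, every arriving $E$-agent has an incoming edge (with probability one) from the bridge agent, and is therefore matched instantaneously at the head of a chain-segment. Up to the $O(1)$ contribution from the $d$ bridge agents (which is negligible as $p_H \to 0$), the state process reduces to a one-dimensional CTMC on $H^\Cd_t$ with upward transitions $n \to n+1$ at rate $\beta := \lambda_H(1-p_H)^d$ and, at rate $\alpha := \lambda_E + \lambda_H(1-(1-p_H)^d)$, a chain-segment that decreases the state from $n$ to $n - L(n)$. By the memoryless structure of the local search---fresh Bernoulli$(p_H)$ coin flips at each step---we have $\P(L(n)\ge k) = \prod_{j=0}^{k-1}(1-q^{n-j})$ where $q := 1-p_H$.

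I will now solve for the stationary distribution. The cut balance across level $n$ reads $\beta\pi_n = \alpha\sum_{m > n}\pi_m\prod_{k=n+1}^m(1-q^k)$; setting $\phi_n := \pi_n\prod_{k=1}^n(1-q^k)$ telescopes the product and collapses this to $\beta\phi_n = \alpha\sum_{m>n}\phi_m$. Comparing this identity at $n$ and $n-1$ forces $\phi_n = c\,\phi_{n-1}$ with $c := \beta/(\alpha+\beta) = \lambda_H(1-p_H)^d/(\lambda_E+\lambda_H) < 1$, so
\begin{align*}
\pi_n \;\propto\; \frac{c^n}{\prod_{k=1}^n(1-q^k)}, \qquad n\ge 0.
\end{align*}
Euler's $q$-binomial identity $\sum_{n\ge 0}z^n/\prod_{k=1}^n(1-q^k) = \prod_{k\ge 0}(1-zq^k)^{-1}$ yields the probability generating function $G(z) = \pi_0 / \prod_{k\ge 0}(1-czq^k)$, and logarithmic differentiation at $z=1$ gives the closed form
\begin{align*}
\E[H^\Cd] \;=\; \frac{G'(1)}{G(1)} \;=\; \sum_{k=0}^{\infty}\frac{cq^k}{1-cq^k}.
\end{align*}

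The remaining step is to take $p_H \to 0$. I interpret this sum as a Riemann sum in $t := kp_H$ with step $p_H$: since $q^k = (1-p_H)^k \to e^{-t}$ and $c \to c_\infty := \lambda_H/(\lambda_H+\lambda_E)$, we obtain
\begin{align*}
p_H\,\E[H^\Cd] \;\longrightarrow\; \int_0^\infty \frac{c_\infty e^{-t}}{1-c_\infty e^{-t}}\,dt \;=\; -\ln(1-c_\infty) \;=\; \ln\!\bigl(\lambda_H/\lambda_E+1\bigr),
\end{align*}
where the integral is evaluated via the substitution $u = 1-c_\infty e^{-t}$. Combined with Little's law \eqref{eq:littles:law} this proves the first claim. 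The main technical step is making the Riemann-sum approximation rigorous uniformly in $p_H$; the necessary ingredients are that $c$ stays bounded away from $1$ (which holds since $c \to c_\infty < 1$) and that the integrand is bounded at the origin and decays exponentially as $t\to\infty$, both of which supply a dominated-convergence bound.

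Finally, the monotonicity statements follow by elementary calculus on $f(\lambda_E,\lambda_H) := \ln(\lambda_H/\lambda_E + 1)/\lambda_H$: one computes $\partial_{\lambda_E} f = -1/\bigl(\lambda_E(\lambda_E+\lambda_H)\bigr) < 0$, and $\partial_{\lambda_H} f = (1/\lambda_H^2)\bigl[\lambda_H/(\lambda_H+\lambda_E) - \ln(\lambda_H/\lambda_E+1)\bigr] < 0$ by the standard inequality $\ln(1+r) > r/(1+r)$ for $r > 0$.
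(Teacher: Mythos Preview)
Your proof is correct and takes a genuinely different route from the paper's. Both arguments begin from the same cut-balance identity and both arrive at the recursion $\pi(h+1)/\pi(h) = c/(1-q^{h+1})$ (this is exactly the paper's equation \eqref{eq:pr:4}). From there the two diverge: the paper treats this ratio only as an inequality, showing that for $h$ slightly above (resp.\ below) $\eta := \ln(1+\lambda_H/\lambda_E)/p_H$ the ratio is at most (resp.\ at least) $1-\Theta(\sqrt{p_H})$, and then deduces matching upper and lower bounds on $\E[H^\tCd]$ by a concentration argument. You instead iterate the recursion to an exact closed form $\pi_n \propto c^n/(q;q)_n$, invoke Euler's $q$-exponential identity to obtain the generating function, and read off $\E[H^\Cd]=\sum_{k\ge 0} cq^k/(1-cq^k)$ before passing to the limit via a Riemann sum. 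Your approach yields an exact formula for $\E[H^\Cd]$ at every $p_H$, not just the limit, and the Riemann-sum step is cleanly justified (since $c\le c_\infty<1$ and $(1-p_H)^{t/p_H}\le e^{-t}$, the summand is dominated by $c_\infty e^{-t}/(1-c_\infty)$, and monotonicity of the summand controls the discretization error by $p_H\cdot c/(1-c)$). The paper's concentration method is less elegant here but is the technique that carries over to the $p_E<1$ upper bound, where no closed form is available. The monotonicity calculations are identical in spirit; you simply write out the derivatives explicitly.
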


First we discuss the intuition behind Proposition \ref{prop:chain:p1}, which states that when $p_E = 1$, any constant number of altruistic agents  will result in the same behavior of $w_H^\Cd$.
The positive impact of having $d>1$ altruistic agents stems from the increase in probability of starting a new chain-segment. When an  $H$ agent arrives, the probability that she finds one of the bridge agents acceptable is $1 - (1-p_H)^d$ which  vanishes as $p_H \rightarrow 0$.
When an  $E$ agent arrives she will always be matched by one of the bridge agents and proceed to advance the chain-segment, and thus there is no advantage in having more than one bridge agent.

While we are not able to pin down the exact behavior when $p_E < 1$,   some intuition  suggests that the behaviour for $p_E < 1$ is similar to the case in which $p_E=1$: even though having less  bridge agents decreases the likelihood of starting a new chain-segment upon arrival of an $E$ agent, this by itself does not result in longer waiting times for $H$ agents. Suppose an arriving $E$ agent $i$ cannot be matched by one of the bridge agents, and therefore joins the market. We argue that the presence of $i$ in the  market helps matching more $H$ agents in the (near) future chain-segments. Consider the first time a chain-segment is being formed after $i$ joins the market. Because $H$ agents have priority, the chain-segment tries to advance through $H$  agents until it gets ``stuck'' (i.e, cannot find an $H$ agent to add). At this point, with a constant probability agent $i$ can match the last  agent in the chain-segment and therefore can progress the chain-segment through more $H$ agents.
In Subsection \ref{subsec:sim:ndd_E_increase}, we study numerically the impact of arrival rates and number of altruistic agents on the waiting time for the case $p_E<1$. Our numerical results are qualitatively in agreement with the predictions of Proposition \ref{prop:chain:p1}.


When $p_E<1$,
in a heuristic argument (in Appendix \ref{app:chains_heuristic}), we analyze a related  $3$-dimensional random walk by artificially assuming that chain-segments advance according to an independent  Poisson process with a very high  rate $\mu$  (recall that under $\Cd$ policy, chain-segments are formed and executed instantaneously upon arrivals). The heuristic provides an estimated waiting time that scales as $\ln \left( \frac{\lambda_H + \lambda_E}{\lambda_H (1 - (1 - p_H)^d) + \lambda_E} \right)/p_H$. In the limit when $p_H$ approaches zero, the constant converges to $\ln \left( \frac{\lambda_H + \lambda_E}{\lambda_E} \right)$ which is consistent with Proposition \ref{prop:chain:p1}. 
Numerical simulations that are aligned with the result of the heuristic argument are presented in Subsection \ref{sec:app:tightness} (see Figure \ref{wtimeHeuristicChains})

Finally we comment on the chain-segment formation process; \textit{{ChainMatch(d)}} policy forms chain-segments employing a local search process and indeed our analysis relies on such chain-segment formation process. This begs the question of how much the waiting time improves if we employed a global search (that searches for the longest possible chain-segment). A precise comparison is beyond the scope of our work, however, we make the following remarks: (1) In Figure \ref{fig:max_vs_local_chains} of Subsection \ref{subsec:sim:ndd_E_increase}, we numerically study this questions, and we
see that advancing chains  locally  results in a small loss in comparison to policies that search globally for the longest possible chain-segment.
(2) The lower-bound on the waiting time of any anonymous Markovian policy (See \cite{AndersonDynamic} and Proposition \ref{prop:lowBound} in Appendix \ref{sec:anyAlg}) implies that the scaling of $H$-agent waiting time  cannot be smaller than $1/p_H$ (unless the policy makes $E$ agents wait for a very long time, i.e., proportional to $1/p_H$); Theorem \ref{th:chain} shows that the local-search method already achieves such a scaling.


Under the \textit{{ChainMatch(d)}} policy, the length of a  chain-segment trigged by a newly arrived agent is unrestricted. As a result the underlying CTMC is significantly more complicated to analyze than those  that arise from bilateral policies and we need other techniques to prove Theorem \ref{th:chain}.
In order to  bound $w_H^\Cd$,  we  couple the underlying Markov chain with a $1$-dimensional chain, in which  $E$ agents that are not matched upon arrival leave the market immediately (Lemma \ref{cl:chains_coupling}).
A key property used in the analysis of the coupled $1$-dimensional chain is that chain-segment formation exhibits a memoryless property.\footnote{This is different from the Markov property of the overall CTMC under $\Cd$.} This is  due to the local search process used to advance a chain-segment, which randomly selects the next agent among all possible agents (favoring $H$  agents). 
The proof is presented in \ref{sec:proof_chains}. Finally, we note that for the special case $p_E = 1$, the original CTMC is a $1$-dimensional chain for which we can prove matching upper and lower bounds on the limit of  $w_H^\Cd$. 

Theorems \ref{th:bilat_prioH} and \ref{th:chain} together highlight  the importance of having altruistic agents that can initiate chains.
In the regime $\lambda_H > \lambda_E$ comparing $w_H^\BH$ and $w_H^\Cd$ is straightforward as the former scales as $1/p_H^2$
but the latter only scales as $1/p_H$.  The following corollary (proven in appendix \ref{app:chain_coupling}) states that in the regime $\lambda_H < \lambda_E$
where both $w_H^\BH$ and $w_H^\Cd$ scale as $1/p_H$, \textit{{ChainMatch(d)}} performs better:

\begin{restatable}{corollary}{corBilatChains}
\label{cor:comp}
For any $\lambda_H$, $\lambda_E$, $p_E$, and $d$, if $\lambda_H < \lambda_E$  then $\lim_{p_H \rightarrow 0} p_H w_H^\Cd < \lim_{p_H \rightarrow 0} p_H w_H^\BH$.
\end{restatable}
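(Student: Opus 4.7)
The plan is to substitute in the explicit limits provided by Theorem~\ref{th:bilat_prioH} (for the left-hand side of the bilateral limit when $\lambda_H<\lambda_E$) and the upper bound from Theorem~\ref{th:chain}, and then reduce the corollary to an elementary analytic inequality. Concretely, writing $r:=\lambda_H/\lambda_E\in(0,1)$ and $q:=1-(1-p_E)^d\in[p_E,1]$, it suffices to establish
\begin{equation*}
\frac{\ln(1+r/q)}{\lambda_H}<\frac{-\ln(1-r)}{p_E\,\lambda_H},
\qquad\text{equivalently}\qquad
(1+r/q)^{p_E}(1-r)<1.
\end{equation*}

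To prove this inequality I would proceed in two short steps. First, apply Bernoulli's inequality (or equivalently concavity of $x\mapsto(1+x)^{p_E}$ for $p_E\in(0,1]$) to obtain $(1+r/q)^{p_E}\le 1+p_E r/q$. Second, use the structural fact $q\ge p_E$ (because $(1-p_E)^d\le 1-p_E$ for $d\ge 1$) to conclude $1+p_E r/q\le 1+r$. Combining these gives $(1+r/q)^{p_E}(1-r)\le(1+r)(1-r)=1-r^2<1$, which is strict because $r>0$. This yields the desired strict comparison of the two limits and proves the corollary.

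There is essentially no main obstacle here: once Theorems~\ref{th:bilat_prioH} and \ref{th:chain} are in hand, the corollary is an analytic one-liner. The only place requiring a small amount of care is ensuring the inequality remains strict and uniform in $p_E\in(0,1]$ and $d\ge 1$; both aspects are handled by noting that $1-r^2<1$ strictly for $r\in(0,1)$ and by using the tightest bound $q=p_E$ (the worst case $d=1$), so larger $d$ only improves the inequality.
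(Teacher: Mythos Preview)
Your proposal is correct and follows essentially the same route as the paper: both reduce the comparison to the elementary inequality $(1+r/q)^{p_E}\le 1+r$ (with $r=\lambda_H/\lambda_E$, $q=1-(1-p_E)^d$) and then use $(1+r)(1-r)=1-r^2<1$. The only cosmetic difference is that the paper first reduces to the worst case $d=1$ and invokes monotonicity of $x\mapsto(r/x+1)^x$ on $(0,1]$, whereas you use Bernoulli's inequality together with $q\ge p_E$; these are equivalent ways to obtain the same bound.
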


In Subsection \ref{subsec:altruistic}  we further compare    \textit{BilateralMatch(H)} to  \textit{{ChainMatch(d)}} in order to understand the importance of attracting easy-to-match agents in markets that have limited access to altruistic agents.



\subsubsection{Chain-segment length}
\label{sec:chain-segment}

We analyze here the  expected length of chain-segments formed under the \textit{{ChainMatch(d)}} policy.

While we focus on the  average waiting time to measure
efficiency,  length of chain-segments also play a significant role on the operational efficiency of the market. In kidney exchange for example, executing a chain-segment takes  time and bears the risk of match failures.\footnote{In this stylized model, we abstract away from both of these effects.} These practical considerations motivate extending the analysis to the limiting behavior of chain-segments.

First we  define the chain-segment length. Let $[H_k^{\Cd},  E_k^{\Cd}]$ denote the (discrete-time) Markov chain embedded in the CTMC $[H_t^{\Cd},  E_t^{\Cd}]$ resulting from observing the system at arrival epochs.\footnote{Note that every time an agent arrives, the Markov chain advances in discrete time from $k$ to $k+1$.} Define: 
$$L_k = H_k + E_{k} - H_{k+1} - E_{k + 1} + 1,$$
and let $L$  be its corresponding random variable in steady-state;  if the arriving agent cannot be matched by the bridge agent, she will join the market, and therefore $L_k = 0$; otherwise, a chain-segment of length $L_k \geq 1$ will be formed. The following proposition characterizes the chain-segment length in the limit:
\begin{restatable}{proposition}{chainLengths}
\label{chL}
Under the \textit{{ChainMatch(d)}} policy and in steady-state, $$\lim_{p_H \rightarrow 0} \E[L \mid L \geq 1] = \frac{\lambda_H + \lambda_E(1-p_E)^d }{\lambda_E (1 - (1 - p_E)^d) } + 1.$$
\end{restatable}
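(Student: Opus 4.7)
My plan is to reduce the proposition to a short flow-conservation calculation on the embedded discrete-time chain $[H^{\Cd}_k, E^{\Cd}_k]$ at arrival epochs. Rewriting the definition as $L_k = 1 - \bigl((H^{\Cd}_{k+1} + E^{\Cd}_{k+1}) - (H^{\Cd}_k + E^{\Cd}_k)\bigr)$, stationarity of the embedded chain forces $\E[L] = 1$ as long as the stationary first moment $\E[H^{\Cd} + E^{\Cd}]$ is finite. Since $L$ is a nonnegative integer and $L \equiv 0$ on $\{L = 0\}$, this yields the key identity
$$\E[L \mid L \geq 1] \;=\; \frac{\E[L]}{\P(L \geq 1)} \;=\; \frac{1}{\P(L \geq 1)}.$$
So the whole claim reduces to computing the trigger probability $\P(L \geq 1)$.

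Next I would compute $\P(L \geq 1)$ in closed form. A chain-segment is triggered precisely when at least one of the $d$ bridge/altruistic agents has an edge to the newly arrived agent. Because the arriving agent's type (which is $H$ with probability $\lambda_H/(\lambda_H + \lambda_E)$ and $E$ with probability $\lambda_E/(\lambda_H + \lambda_E)$, by Poisson splitting) and the $d$ compatibility draws to the bridges are drawn fresh and are independent of the market state, this probability does not depend on the stationary distribution:
$$\P(L \geq 1) \;=\; \frac{\lambda_H \bigl(1 - (1-p_H)^d\bigr) \;+\; \lambda_E \bigl(1 - (1-p_E)^d\bigr)}{\lambda_H + \lambda_E}.$$
Sending $p_H \to 0$ kills the first summand, giving
$$\lim_{p_H \to 0} \E[L \mid L \geq 1] \;=\; \frac{\lambda_H + \lambda_E}{\lambda_E\bigl(1 - (1-p_E)^d\bigr)},$$
and adding-and-subtracting $\lambda_E(1 - (1-p_E)^d)$ in the numerator rewrites this in the equivalent form $\frac{\lambda_H + \lambda_E(1-p_E)^d}{\lambda_E(1-(1-p_E)^d)} + 1$ stated in the proposition.

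The only step that needs a bit of care is the conservation identity $\E[L] = 1$, which requires finiteness of the stationary expectation $\E[H^{\Cd} + E^{\Cd}]$. This is not an extra assumption: existence of the steady-state distribution $\pi^{\Cd}$ is provided by Appendix \ref{app:existence_proofs}, the upper bound on $w_H^{\Cd}$ in Theorem \ref{th:chain} combined with \eqref{eq:littles:law} yields $\E[H^{\Cd}] = O(1/p_H) < \infty$ for every fixed $p_H > 0$, and an analogous $O(1)$ bound on $\E[E^{\Cd}]$ follows by a similar coupling argument (since $E$ agents are absorbed either immediately by a bridge or very quickly by the next chain-segment). I expect this finiteness check to be the only obstacle; the attractive feature of the flow-conservation approach is that it entirely sidesteps any detailed accounting of how the local search allocates matches between $H$ and $E$ agents inside a chain-segment, an accounting that would otherwise interact with the unknown exact limit of $p_H \E[H^{\Cd}]$.
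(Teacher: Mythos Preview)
Your approach is essentially the same as the paper's: both use stationarity of the embedded chain applied to $f(h,e)=h+e$, which is exactly your identity $\E[L]=1$, then split according to whether a chain-segment is triggered. The paper writes this as $0=-(\lambda_H(1-p_H)^d+\lambda_E(1-p_E)^d)+(\lambda_H(1-(1-p_H)^d)+\lambda_E(1-(1-p_E)^d))\,\E[L-1\mid L\ge 1]$ and solves for $\E[L\mid L\ge 1]$; your reformulation $\E[L\mid L\ge 1]=1/\P(L\ge 1)$ is the same equation after dividing through by $\lambda_H+\lambda_E$.

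Your explicit flag about needing $\E[H^{\Cd}+E^{\Cd}]<\infty$ is well taken and in fact applies equally to the paper's proof, which uses the same zero-drift identity without comment; your appeal to the coupling in Lemma \ref{cl:chains_coupling} and Proposition \ref{prop:upper_chains} handles the $H$ part cleanly, while the $E$ part is indeed only sketched (neither here nor in the paper is a formal bound on $\E[E^{\Cd}]$ written down, though it is routine given that $p_E$ is a fixed constant).
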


\noindent The proof is presented in Appendix \ref{app:chain_coupling}.
We note that the expected chain length is decreasing in both $\lambda_E$ and $d$, but  increasing in $\lambda_H$; intuitively  with more $E$ agents or more bridge agents, chain-segments will be formed at a higher rate and thus be shorter (for a fixed $\lambda_H$). However, increasing $\lambda_H$ does not significantly impact the frequency of chain-segment formation, but given  that more $H$ agents join the market within two consecutive chain-segments, the length of the chain-segment grows.

\section{Numerical studies}
\label{sec:policy}

In this section, we present a set of numerical simulations that complement the theoretical results of the previous section. In Subsection \ref{subsec:merge} we look at how merging markets with different compositions affect each market. Subsection \ref{subsec:sim_priorities}  explores the  impact of giving priorities when using the bilateral matching policy. Subsection \ref{subsec:sim:ndd_E_increase} presents comparative statics for chain matching when $p_E <1$, and subsection \ref{subsec:altruistic}  highlights the advantage of having chains. Finally, Subsection \ref{sec:app:tightness} compares our theoretical bounds (for cases for which we do not have matching upper and lower bounds) to heuristics guesses  and  simulations.

All simulations in this section are conducted by first computing the average number of agents in the market; then applying Little's law \eqref{eq:littles:law}. In order to compute the number of  agents, we simulate the discrete-time Markov chain embedded in the corresponding CTMC resulting from observing the system at arrival epochs.
We denote $T$ the number of  arrivals (not counting the $d$ initial altruistic agents in the case of $\Cd$).
In order to remove the transient behavior, the numbers reported correspond to the time average over the second half of the simulation.

\subsection{Merging markets}
\label{subsec:merge}

We consider here the effects from merging two markets, with arrival rates $(\lambda_{H,1},\lambda_{E,1})$ and $(\lambda_{H,2},\lambda_{E,2})$ under bilateral exchanges using the {\it BilateralMatch(H)} policy.
This expands Theorem \ref{th:bilat_prioH}, which provides comparative statics in the limit when $p_H$ tends to zero.

We consider two numerical examples to illustrate these effects. In both examples the arrival rates to the first market are  kept fixed while the arrivals rates to the second  market  vary. For any pair of arrivals  we compare the waiting time $w_{H,1}$ of $H$ agents in the first market  with the average waiting time $w_{H, 1-2}$ in the  merged market. The results are plotted in Figure \ref{fig:merging}.
Consistent with our prediction, merging can  result in one of the markets being worse off. Note that this can happen even if the majority type is the same for both markets (e.g., when $\lambda_{H,1} > \lambda_{E,1}$ and $\lambda_{H,2} > \lambda_{E,2}$). This highlights the effect of arrival rates beyond their impact on the scaling factor. \footnote{{We note that the constants computed in Theorem \ref{th:bilat_prioH} allow us to
determine whether market one is better off or worse off for any $(\lambda_{H,2},\lambda_{E,2})$, and to compute the  boundary separating the two regions, in the limit $p_H \rightarrow 0$. }}

\label{subsec:merging}
\begin{figure}[h!]
  \centering
 \includegraphics[width=8cm]{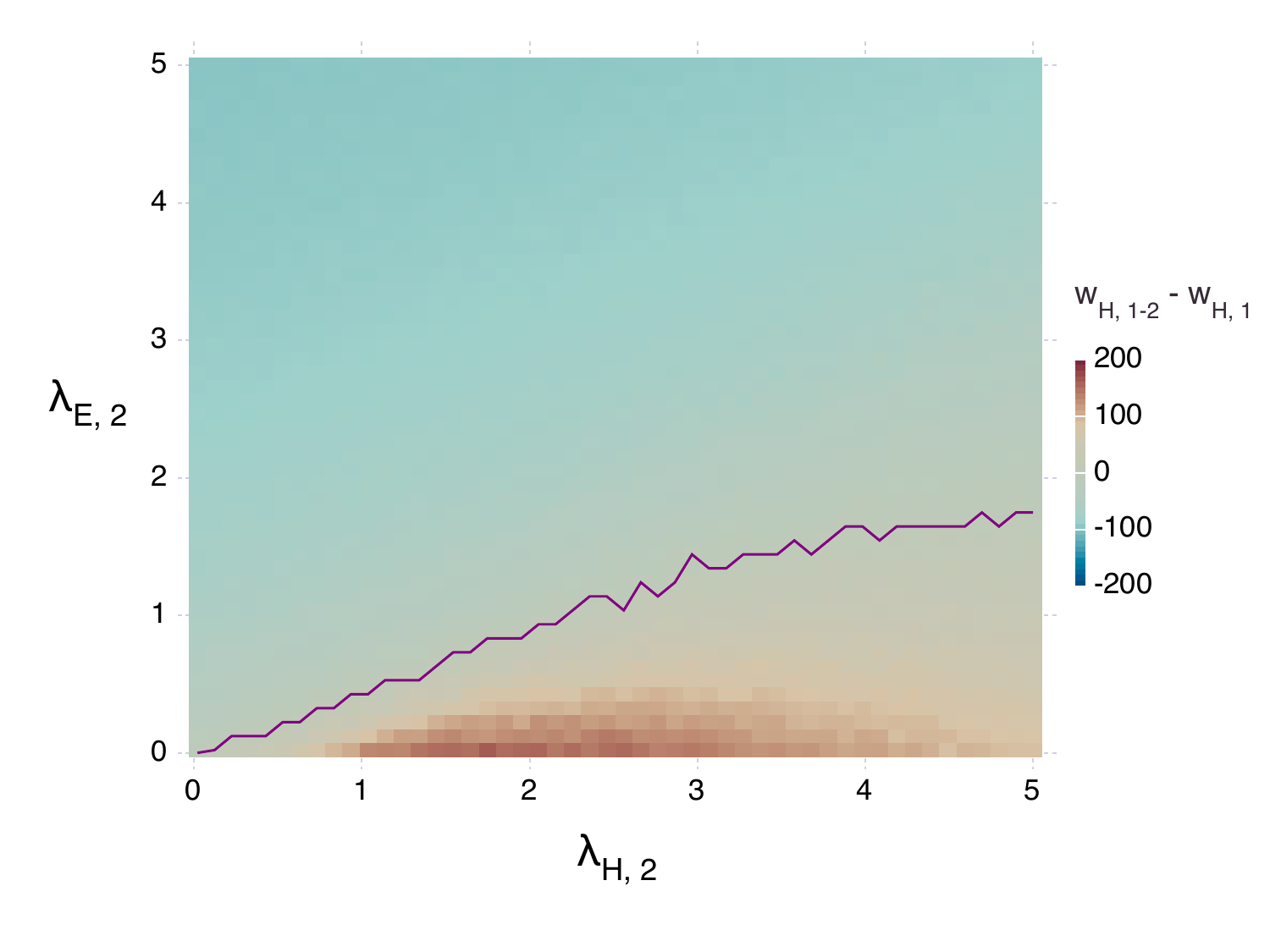}
  \includegraphics[width=8cm]{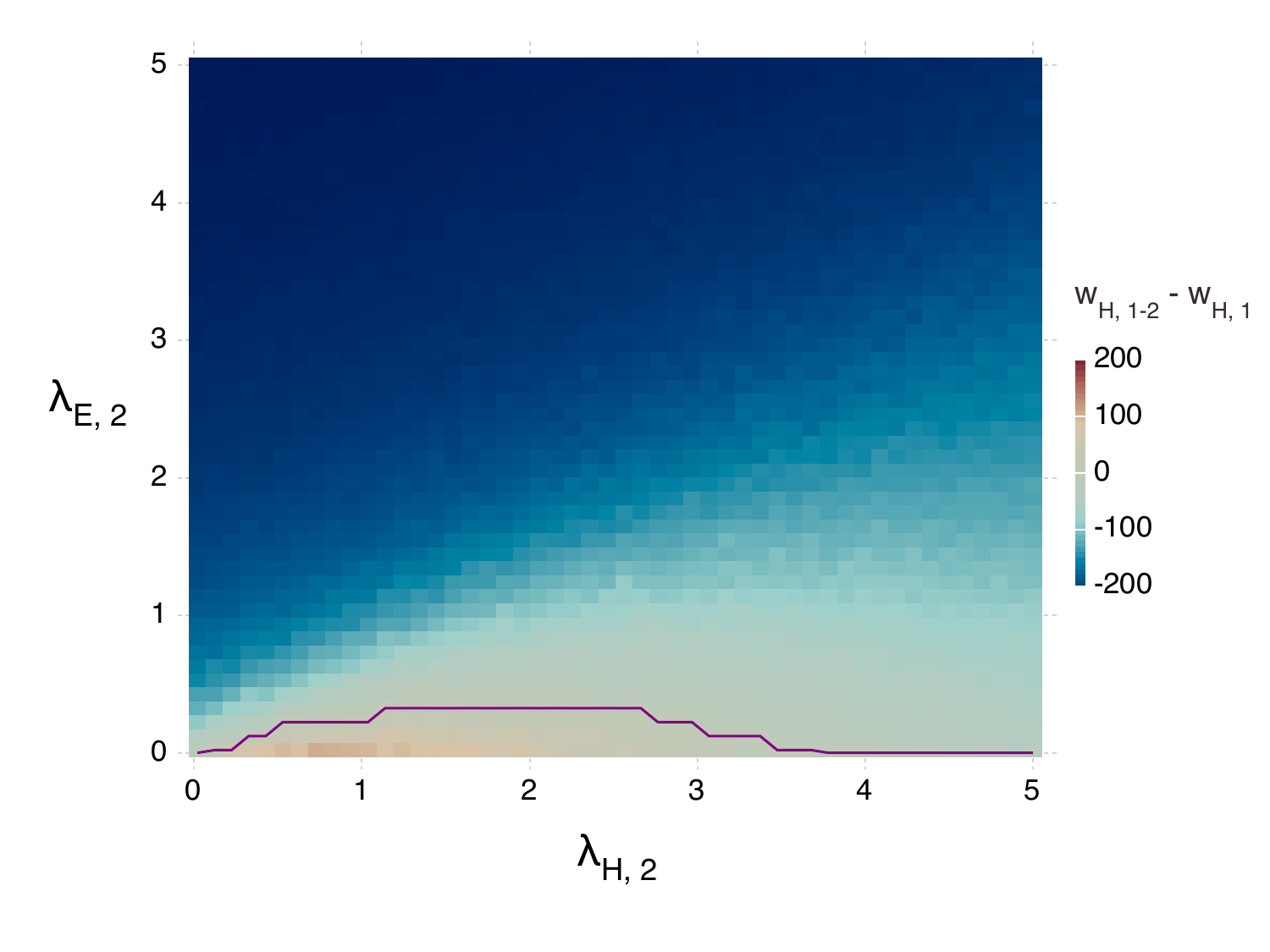}
\caption{Change in the waiting time for $H$ agents of  the first market: $w_{H, 1-2} - w_{H,1}$, as a function of $(\lambda_{H,2}, \lambda_{E,2})$, for $p_E = 0.5$, $p_H = 0.02$, after $T = {10^5}$ iterations. Left subplot corresponds to $\lambda_{H,1} = 1, \lambda_{E,1} = {1.3}$, right subplot corresponds to $\lambda_{H,1} = {1.3}, \lambda_{E,1} = 1$. The purple line separates the region where the waiting time increases after merging (below the line) and the region where it decreases (above the line).} 
\label{fig:merging}
\end{figure}



\subsection{Impact of priorities in bilateral matching}
\label{subsec:sim_priorities}

We compare here the average waiting time of $H$ agents under the {\it BilateralMatch(H)} and {\it BilateralMatch(E)} policies. From Theorems \ref{th:bilat_prioH} and  \ref{th:bilat_prioE} it follows that (i) when  $\lambda_H > \lambda_E$, asymptotically, the average  waiting time of $H$  agents is the same under both policies, but (ii) when $\lambda_H < \lambda_E$, the average waiting time of $H$ agents under $\BH$ is at  most the average waiting time under $\BE$. However, numerical simulations suggest that the average waiting time of $H$ agents is indeed strictly smaller under $\BH$ than under $\BE$ (Figure \ref{fig:all_policies} left).
{For instance, in simulation setting of Figure \ref{fig:all_policies}, when  $\lambda_H = 4$ and $\lambda_E = 5$ , we have $w_H^{\BE} = 534$ while $w_H^{\BH} = 388$. }
The  average waiting times of $E$ agents are plotted in  Figure \ref{fig:all_policies}(right).
\begin{figure}[ht!]
  \centering
 \includegraphics[width=8cm]{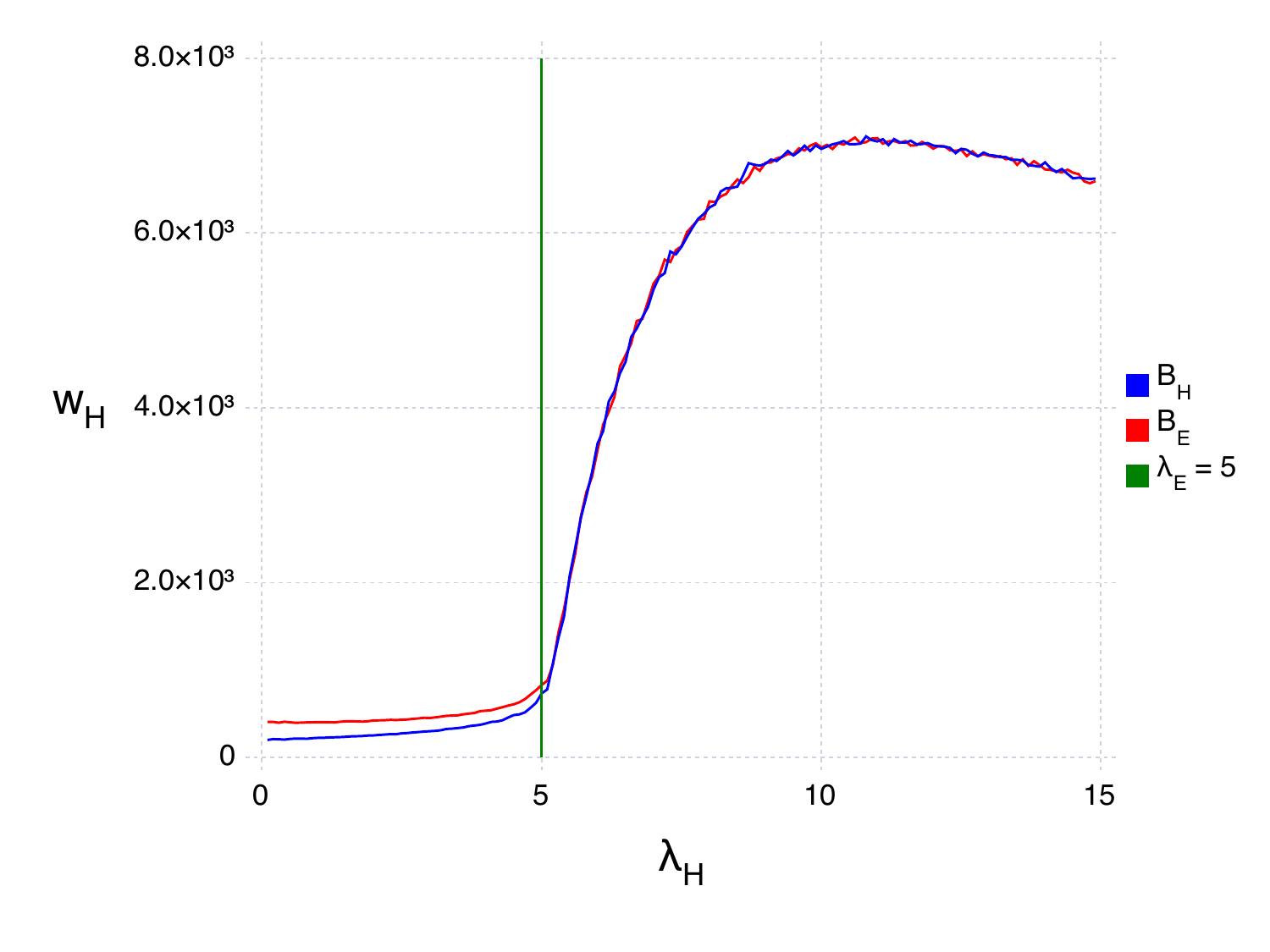}
 \includegraphics[width=8cm]{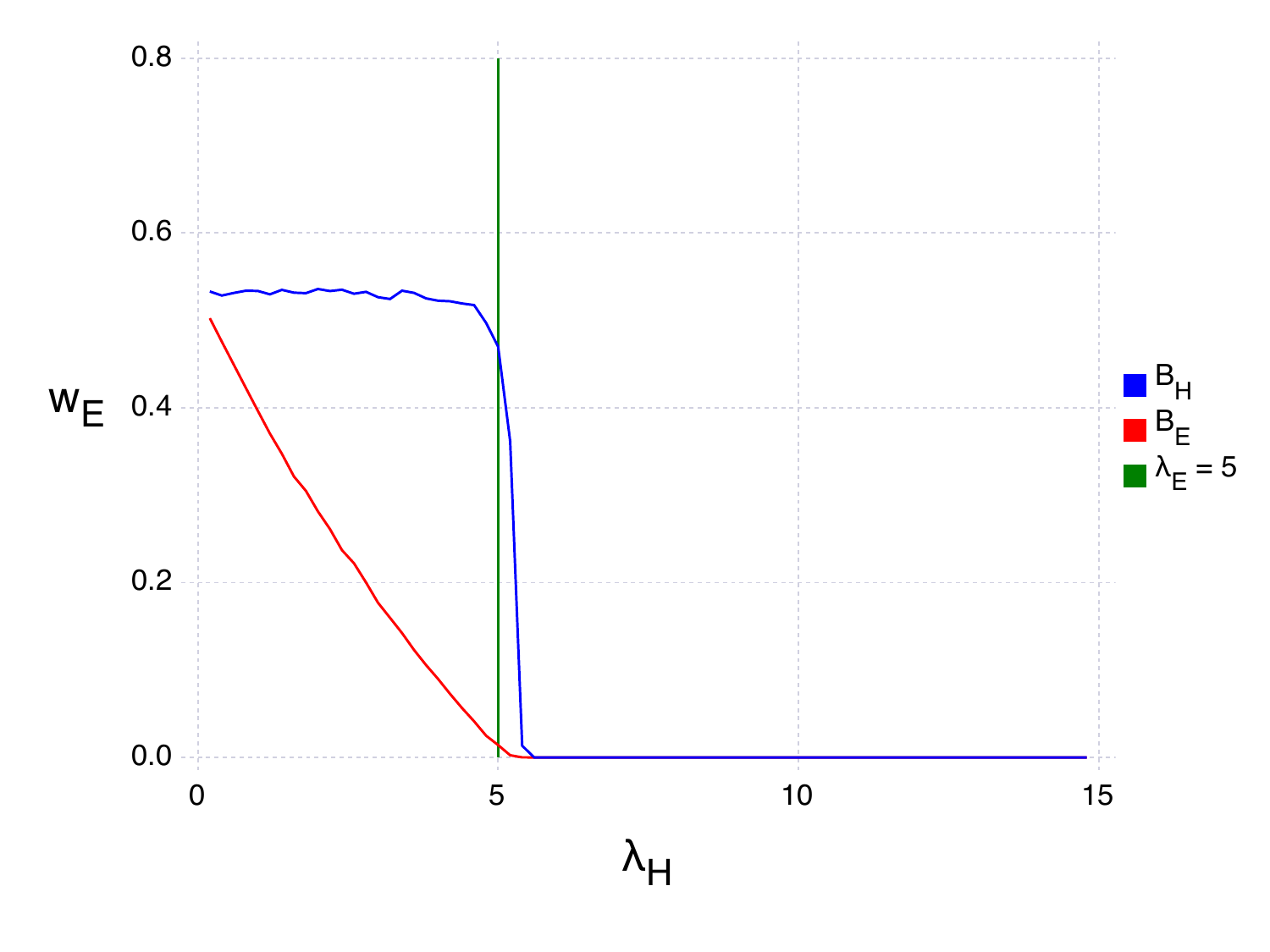}
\caption{Comparison of $w_H$ (left) and $w_E$ (right) for $\BH$, $\BE$, as a function of $\lambda_H$, for a fixed $\lambda_E = 5$, $T = 2\cdot 10^6$, $p_E = 0.5$ and $p_H = 0.002$.
}
\label{fig:all_policies}
\end{figure}

The main insight is that the benefit from assigning priority to hard-to-match agents varies based on the composition of the market. Further, our qualitative insights can be useful in understanding the tradeoffs that may arise in markets where easy-to-match agents have outside options. For example, when $\lambda_H>\lambda_E$, there is no tradeoff from prioritizing $E$ agents. 
This issue arises in kidney exchange, where very easy-to-match patient-donor pairs (such as compatible pairs) may choose to get transplanted elsewhere.

\subsection{Comparative statics in chain matching with \texorpdfstring{$p_E <1$}{pE < 1}}
\label{subsec:sim:ndd_E_increase}

We run simulations using {\it ChainMatch(d)} to numerically explore the effects  varying $d$, $\lambda_E$, and $\lambda_H$  have on   $w_H^{\Cd}$.

We find that $w_H^{\Cd}$ decreases as the arrival rate of either types increases (Figure \ref{wtimeChains} left and middle). Moreover, the value of an additional altruistic agent also  diminishes with increasing $\lambda_E$, $\lambda_H$ or $d$.

Further, as $p_H$ decreases, the impact of $d$ vanishes (Figure \ref{wtimeChains} right).
Observe that the simulation results (in which $p_E<1$) are qualitatively aligned with the predictions of Proposition \ref{prop:chain:p1}  even for the case $p_E < 1$.

\begin{figure}[ht!]
  \centering
 \includegraphics[width=5.4cm]{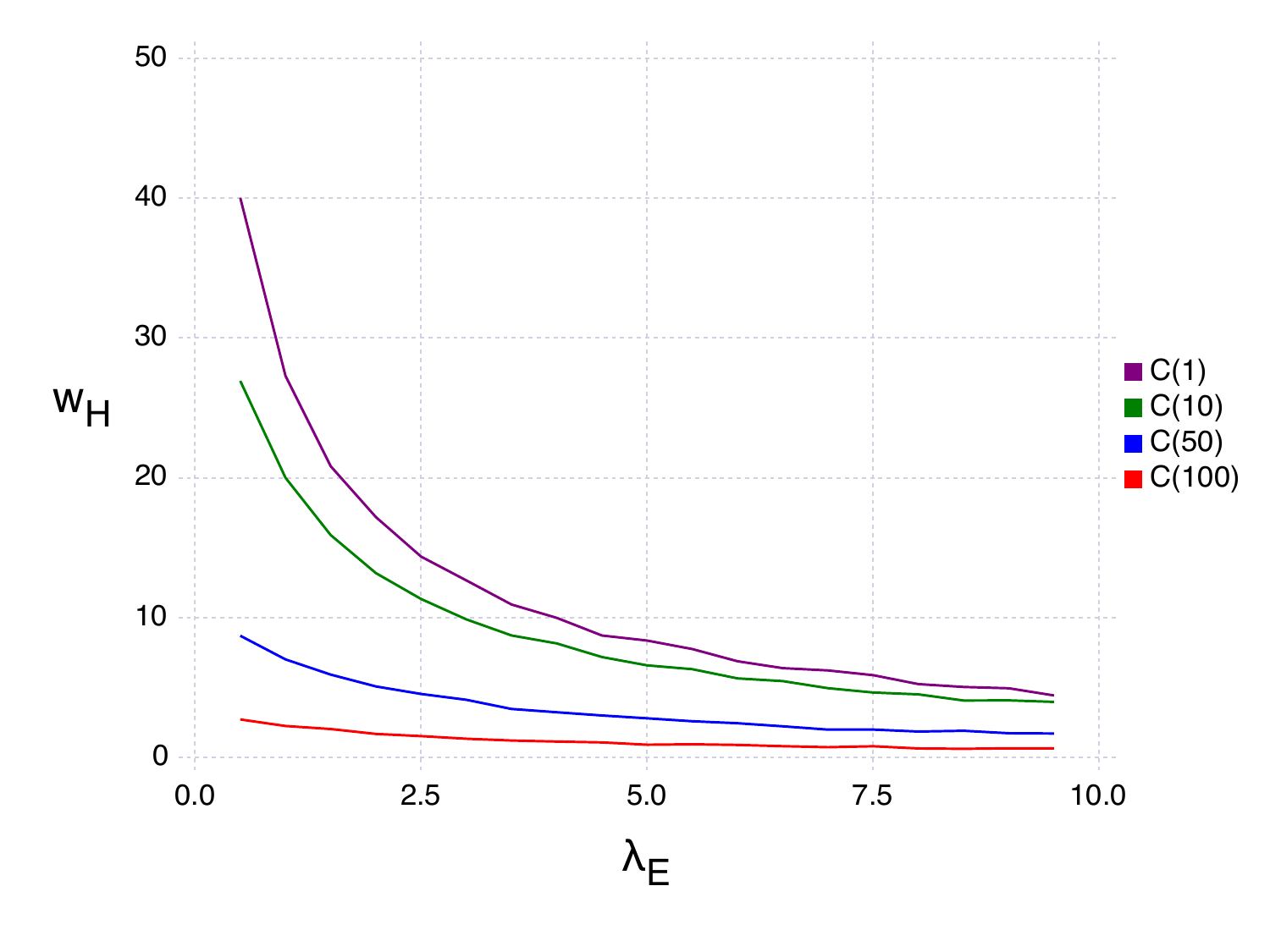}
  \includegraphics[width=5.4cm]{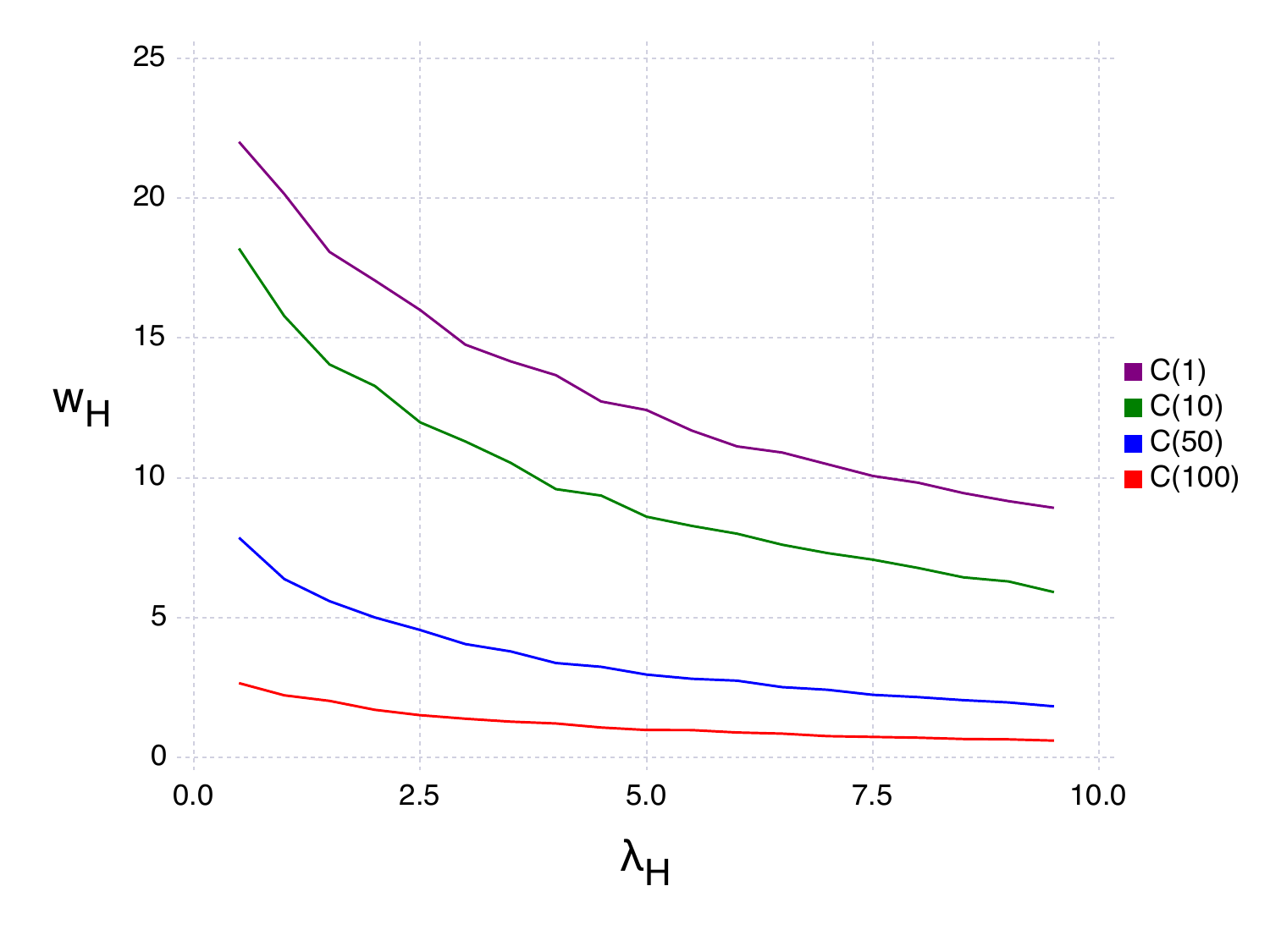}
  \includegraphics[width=5.4cm]{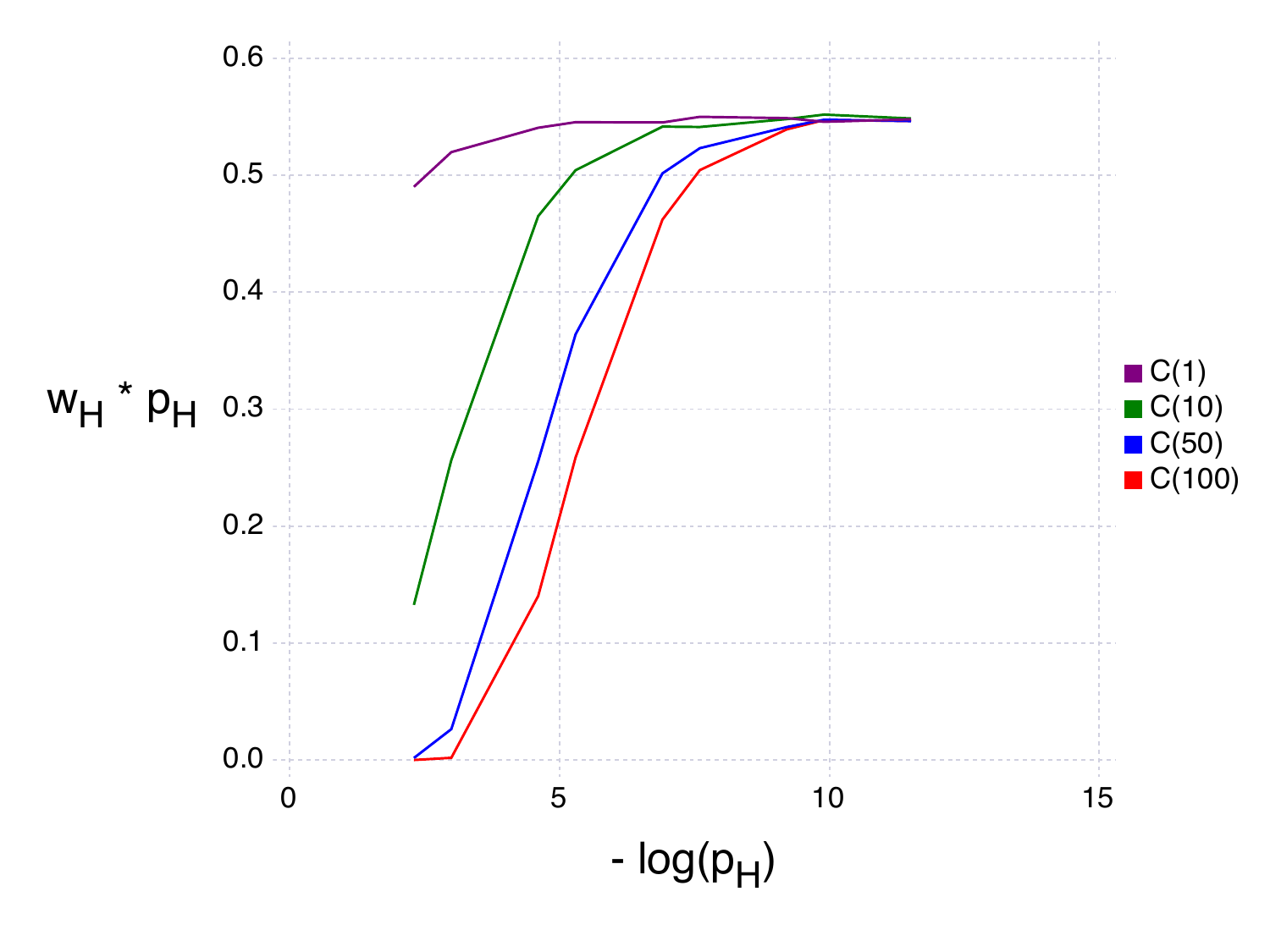}
\caption{Left: $w_H^{\Cd}$ for varying values of $d$ as a function of $\lambda_E$,  for fixed $\lambda_H = 2$, $T = 10^5$, $p_E = 0.5$, $p_H = 0.02$.
Middle: $w_H^{\Cd}$ for varying values of $d$ as a function of $\lambda_H$,  for fixed $\lambda_E = 2$, $T = 10^5$, $p_E = 0.5$, $p_H = 0.02$.
Right: Normalized waiting times  (i.e., $p_H w_H^{\Cd}$) in the case of chains for varying values of $d$ as a function of $p_H$, for fixed $\lambda_H = 2$, $\lambda_E = 1$, $T = 10^5$, $p_E = 0.5$.}
\label{wtimeChains}
\end{figure}

Next we study the loss from employing a local search for forming chain-segments rather than looking for the maximum-length path at each chain-segment-formation. For this, we define a new policy {\it Max-Chains} that upon starting a chain-segment searches for the chain-segment that maximizes lexicographically the number of $H$ agents matched, while breaking ties over matching more agents over all.

\begin{figure}[ht!]
  \centering
 \includegraphics[width=11cm]{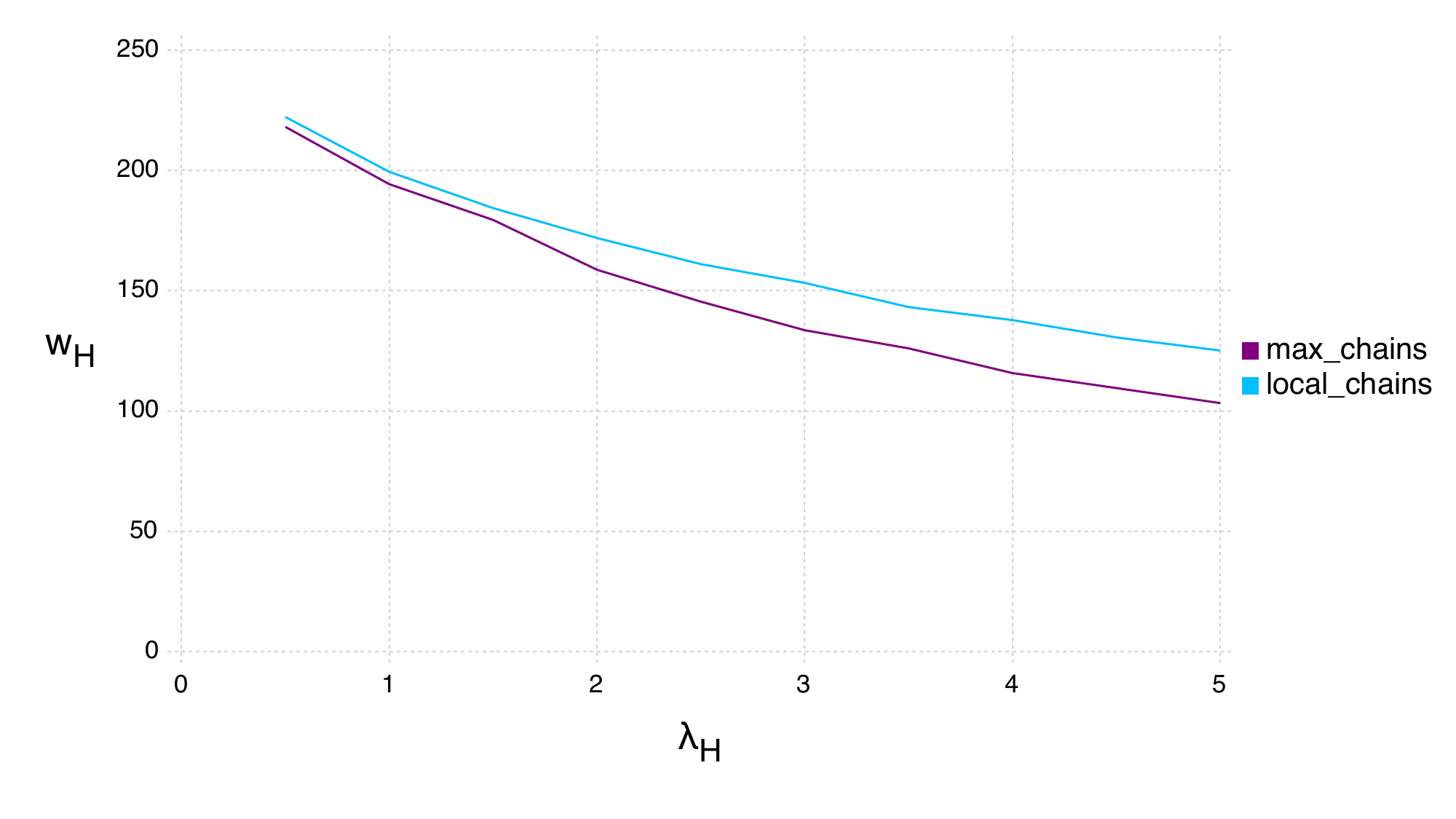}
\caption{Waiting times $w_H$ for chains conducted with local-search ($\Cd$) and {\it Max-Chains} as a function of $\lambda_H$,  for fixed $\lambda_E = 2$, $T = 10^5$, $p_E = 0.5$, $p_H = 0.002$.}
\label{fig:max_vs_local_chains}
\end{figure}
We observe that the benefit of using {\it Max-Chains } is small when $\lambda_H$ is small compared to $\lambda_E$, and it increases as $\lambda_H$ increases. If we consider $\lambda_E / 2 \leq \lambda_H \leq 2 \lambda_E$ as the practical range relevant to the kidney exchange programs, our simulations suggest that the loss ranges between $5$ to $15$\%.

\subsection{Impact of the matching technology: bilateral vs. chain matching}
\label{subsec:altruistic}
Theorems \ref{th:bilat_prioH} and \ref{th:chain} imply that  for {\em any} arrival rates $(\lambda_H, \lambda_E)$   matching through chains even with only one initial altruistic agent (i.e., under {\it ChainMatch(1)}) results in shorter average waiting time for $H$  agents. The theoretical gap is significant  when $\lambda_H>\lambda_E$.
We run  numerical simulations for a variety of  parameters  to examine these differences (see Figure \ref{fig:chains_vs_BH}).

\begin{figure}[h!]
  \centering
 \includegraphics[width=8cm]{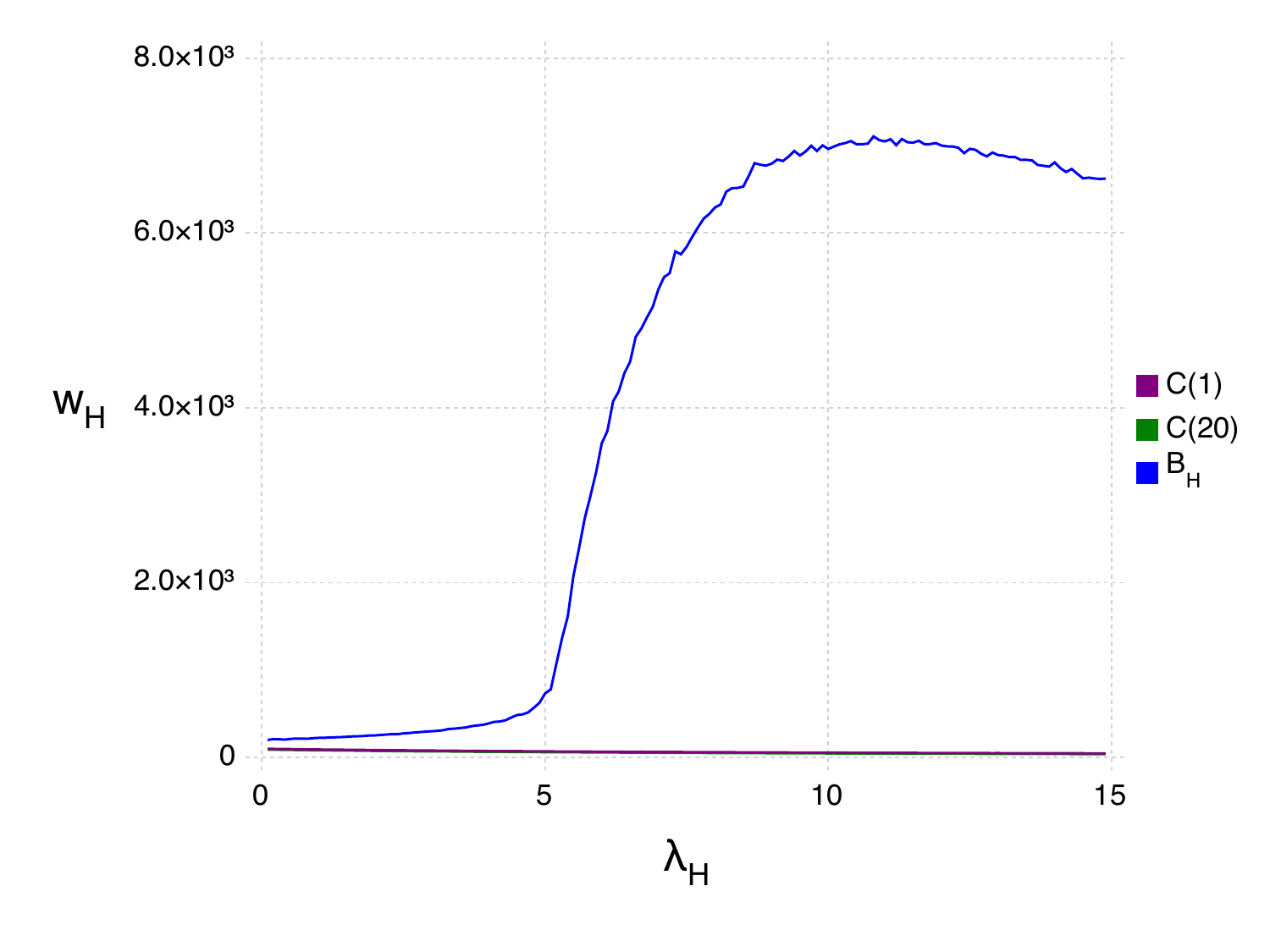}
\caption{Comparison of $w_H$,  for $\BH$, $\mathcal{C}(1)$, and $\mathcal{C}(20)$, as a function of $\lambda_H$, for a fixed $\lambda_E = 5$, $T = 2*10^6$, $p_E = 0.5$ and $p_H = 0.02$.}
\label{fig:chains_vs_BH}
\end{figure}

To further highlight the benefit of matching through chains, we consider the following scenario: Suppose market $1$ has rates $(\lambda_{H,1}, \lambda_{E,1})$ with $\lambda_{H,1} < \lambda_{E,1}$ and is endowed with $d$ altruistic agents and employs policy {\it ChainMatch(d)}. Now consider a second market with arrival rates $(\lambda_{H,2}, \lambda_{E,2})$ that does not have any altruistic agents and therefore employs  {\it BilateralMatch(H)}. Further suppose $\lambda_{H,1}= \lambda_{H,2} = \lambda_{H}$; how many more $E$ agents does market $2$ need to attract to be able to compete with market $1$ in term of average waiting times of $H$ agents? In the limit $p_H \rightarrow 0$, by  Theorems \ref{th:bilat_prioH} and \ref{th:chain}, for this to happens it is necessary that:

$$\frac{\ln \left( \frac{\lambda_H}{\lambda_{E,1} (1 - (1 - p_E)^d)} + 1 \right)}{\lambda_H} \geq \frac{\ln \left( \frac{\lambda_{E,2}}{\lambda_{E,2} -\lambda_H} \right)}{\lambda_H p_E},$$
which is equivalent to:
$$ \lambda_{E,2} \geq \frac{\lambda_H (\lambda_H + \lambda_{E,1} (1 - (1 - p_E)^d))^{p_E}}{ (\lambda_H + \lambda_{E,1} (1 - (1 - p_E)^d))^{p_E} -  (\lambda_{E,1} (1 - (1 - p_E)^d))^{p_E}}.$$
Note that the above condition is only a necessary condition, and valid in the limit $p_H \rightarrow 0$.
In the case where $p_E = 1$,  Proposition \ref{prop:chain:p1} makes this also a sufficient condition, and it simplifies to $\lambda_{E,2} \geq \lambda_H + \lambda_{E,1}$. 
In table \ref{tab}, we report the numerical values for $\lambda_{E,2}$ such that in simulations $w_{H,2}^\BH = w_{H,1}^\Cd$. 


\begin{table}[]
\centering
\caption{$\lambda_{E,2}$ as a function of $p_E$ and $d$, for $p_H = 0.02$, $\lambda_H = 1$ and $\lambda_{E,1} = 2$, $T=10^6$.}
\label{tab}
\begin{tabular}{|l||l|l|l|l|l|}
\hline
 $ p_E $ & 0.1 & 0.3 & 0.5 & 0.9 & 1.0 \\ \hline \hline
d = 1 & 20.75 & 8.45 & 5.4 & 3.3 & 3.0 \\ \hline
d = 10 & 27.15 & 10.25 & 6.55 & 3.9 & 3.6 \\ \hline
d = 50 & 66.05 & 24.8 & 15.1 & 9.0 & 8.15 \\ \hline
\end{tabular}
\end{table}
\subsection{Theoretical bounds vs heuristics vs. simulation}
\label{sec:app:tightness}


In two cases, our theoretical results yield bounds which are are not tight. However, in each of these cases we generate a heuristic guess for the exact behaviour. We plot here the simulation results, our heuristically generated guess (described later in Section \ref{subsec:BE} and Appendix \ref{app:chains_heuristic}) and the theoretical bounds for a variety of parameters.
The first case is under the policy \textit{{BilateralMatch(E)}} when $\lambda_H <\lambda _E$. Figure \ref{wtimeHeuristicBE} shows that our heuristic analysis (described in Section \ref{subsec:BE}) results in a guess of $\frac{\ln \left(\frac{\lambda_E + \lambda_H}{\lambda_E - \lambda_H}\right)}{p_E p_H}$ that coincides with the simulation results. 
 The figure further illustrates the behavior of our theoretical bounds for different parameters.
\begin{figure}[h!]
  \centering
 \includegraphics[width=8cm]{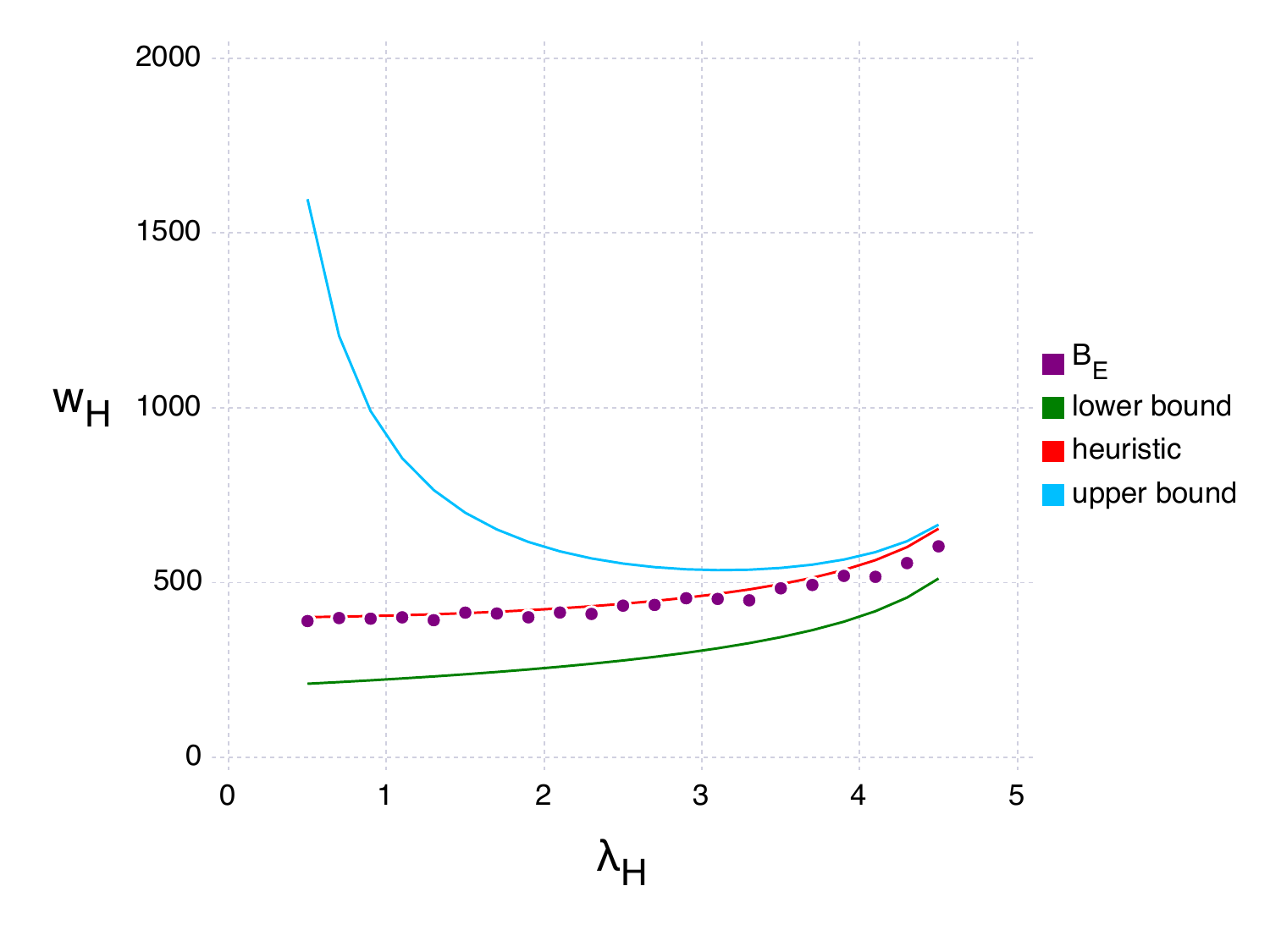}
  \includegraphics[width=8cm]{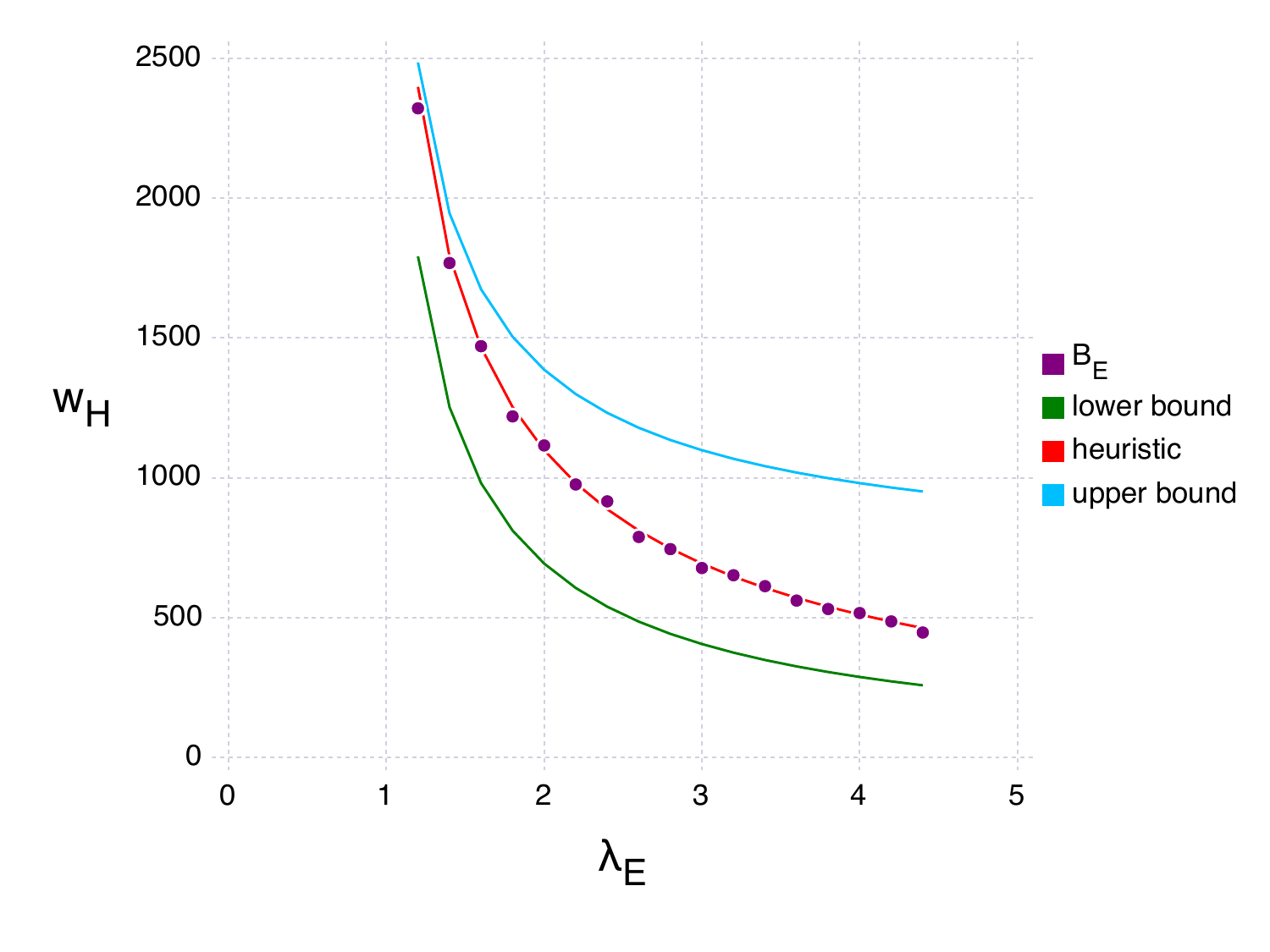}
\caption{Left: $w_H^{\BE}$ as a function of $\lambda_H$, for $\lambda_E = 5$, $T = 10^5$, $p_E = 0.5$, $p_H = 0.002$.
Right: $w_H^{\BE}$ as a function of $\lambda_E$, for $\lambda_H = 1$, $T = 10^5$, $p_E = 0.5$, $p_H = 0.002$. }
\label{wtimeHeuristicBE}
\end{figure}


\begin{figure}[h!]
  \centering
 \includegraphics[width=8cm]{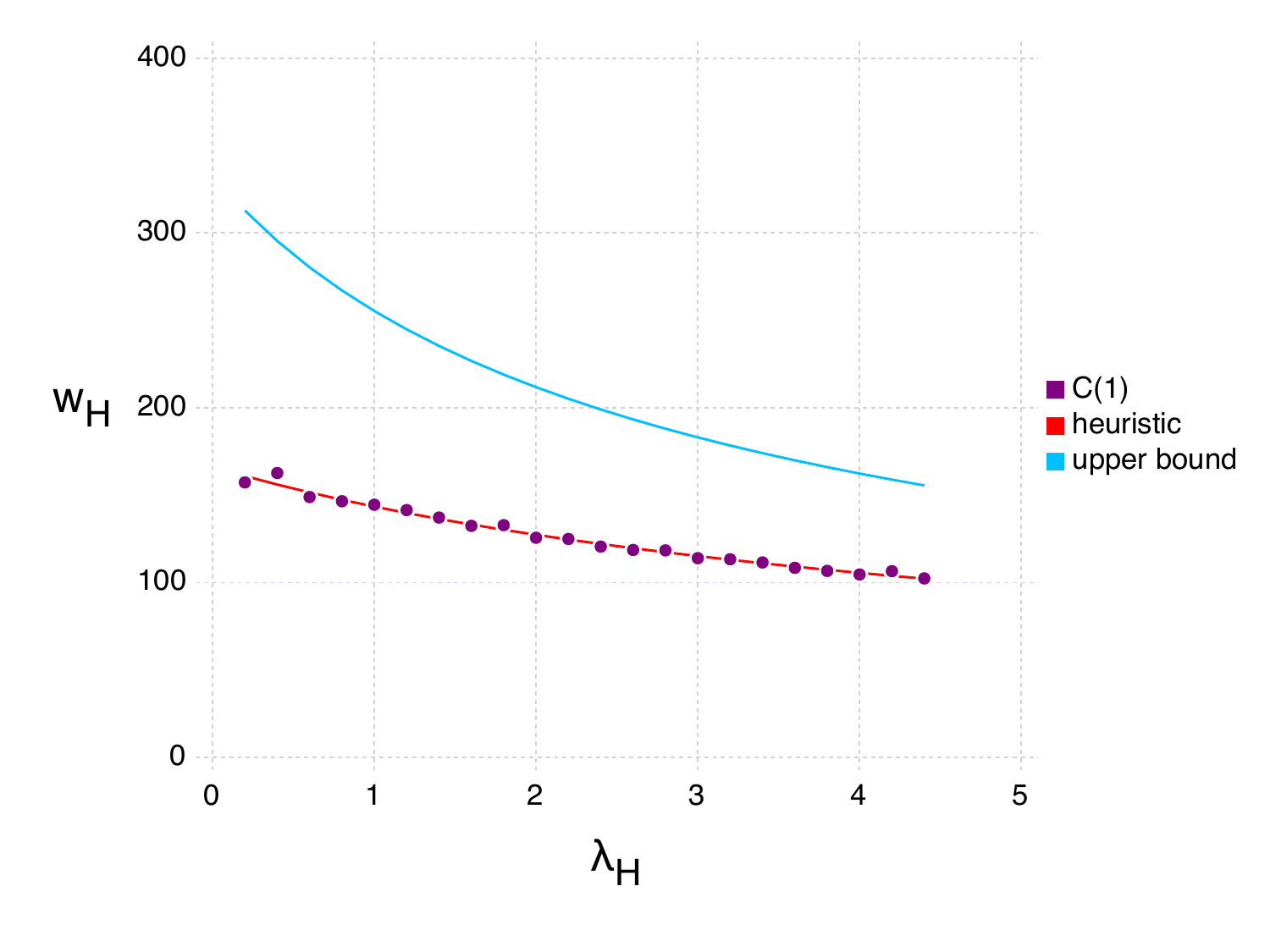}
 \includegraphics[width=8cm]{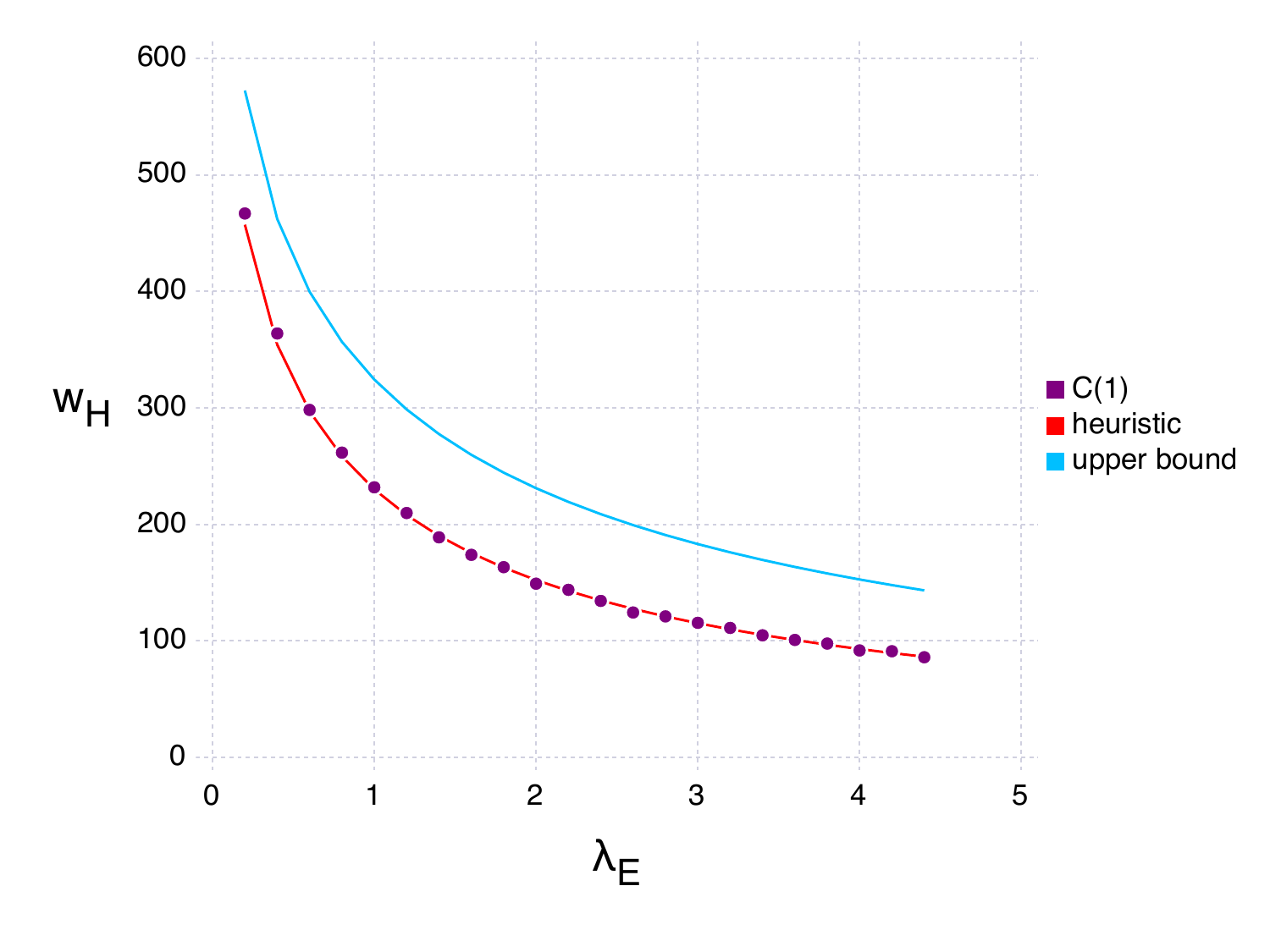}
\caption{Left: $w_H^{\mathcal{C}(1)}$ as a function of $\lambda_H$, for $\lambda_E = 3$, $T = 10^5$, $p_E = 0.5$, $p_H = 0.002$.
Right: $w_H^{\mathcal{C}(1)}$   as a function of $\lambda_E$, for $\lambda_H = 3$, $T = 10^5$, $p_E = 0.5$, $p_H = 0.002$.}
\label{wtimeHeuristicChains}
\end{figure}


The second case is under the policy \textit{{ChainMatch(d)}} when $p_E<1$.  Here too, Figure \ref{wtimeHeuristicChains} shows that  that our heuristic guess $\ln \left( \frac{\lambda_H + \lambda_E}{\lambda_H (1 - (1 - p_H)^d) + \lambda_E} \right)/p_H$ (described in Appendix \ref{app:chains_heuristic})  coincides with the numerical simulations.


\section{Proof ideas and outline of analysis}
\label{sec:proofs}


The analysis of each policy follows a similar pattern, although technically analyzing the bilateral setting and the chain setting are very different. 
For bilateral policies, we first offer a heuristic that will help guessing the value of  $\E[H^{\cP}]$ (which is proportional to the average waiting time) and then proceed to rigourously analyze $\E[H^{\cP}]$.
For the chain policy, we first couple the underlying Markov chain with a $1$-dimensional chain whose number of $H$ agents serve as an upper-bound on the number of $H$ agents under  \textit{{ChainMatch(d)}} policy. We then proceed to analyze the expected number of $H$ agents in the coupled chain.
In all three settings, the main idea is to prove that $H^{\cP}$ is concentrated around $\E[H^{\cP}]$ without directly computing the steady-state distribution, and based on the exponential decay of the tail distribution when moving away from the expected value.

We  often use the following  notations to avoid terms that vanish in the limit $p_H \rightarrow 0$.
Let $f,g: [0,1] \rightarrow \mathbb{R}$; We write that  $f=o(g)$ if $\lim_{p_H \rightarrow 0}\frac{f(p_H)}{g(p_H)}=0$ and write that  $f=O(g)$ if
$\limsup_{p_H \rightarrow 0}\frac{f(p_H)}{g(p_H)}< \infty$.


\subsection{The \textit{BilateralMatch(H)} policy}
\label{subsec:BH}

In this section we analyze the  policy $\BH$, which forms myopically bilateral exchanges while prioritizing $H$ agents. Under this policy, the evolution of the number of $H$ and $E$ agents in the market can be modeled by a  CTMC $[H_t,  E_t] \in \N^2$ with the following transition rates.

\begin{subequations}
\begin{align}
Q^{\BH}([h,e],[h+1,e]) &= \lambda_H (1 - p_H^2)^{h} (1 - p_E p_H)^{e} \label{eq:transtion:1}\\
Q^{\BH}([h,e],[h-1,e]) &=  \lambda_H (1 - (1 - p_H^2)^{h}) + \lambda_E (1 - (1 - p_E p_H)^{h}) \label{eq:transtion:2}\\
Q^{\BH}([h,e],[h,e + 1])&=\lambda_E (1 - p_E p_H)^{h} (1 - p_E^2)^{e} \label{eq:transtion:3}\\
Q^{\BH}([h,e],[h,e-1])&= \lambda_H (1 - p_H^2)^{h} (1 - (1 - p_E p_H)^{e}) + \lambda_E (1 - p_E p_H)^{h} (1 - (1 - p_E^2)^{e})) \label{eq:transtion:4}
\end{align}
\end{subequations}

The rates are computed based on the Poisson thinning property, simple counting arguments, and our assumption that edges are formed independently.

\begin{itemize}
\item[-] Rightward rate \eqref{eq:transtion:1}: moving from  $[h,e]$ to $[h+1,e]$ happens when an $H$ agent arrives and cannot form a cycle with any of the existing $H$ agents (with probability $(1 - p_H^2)^{h}$) nor with any of the existing $E$ agents (with probability $(1 - p_E p_H)^{e}$).
\item[-] Leftward rate \eqref{eq:transtion:2}: moving from $[h,e]$ to $[h-1,e]$ happens when an $H$ agent arrives and forms a cycle with at least one of the existing $H$ agents  (probability $(1 - (1 - p_H^2)^{h})$) or an $E$ agent arrives and forms a cycle with at least one of the existing $H$ agents (probability $(1 - (1 - p_E p_H)^{h}) $).
\item[-] Upward rate \eqref{eq:transtion:3}: moving  from $[h,e]$ to $[h,e+1]$ happens when an $E$ agent arrives and cannot form a cycle with any of the existing $H$ agents (probability $(1 - p_E p_H)^{h}$) nor with with any of the existing $E$ agents (probability $(1 - p_E^2 )^{e}$).
\item[-] Downward rate \eqref{eq:transtion:4}: moving from $[h,e]$ to $[h,e-1]$ happens when an $H$ agent arrives and cannot form a cycle with any of the existing $H$ agent (probability $(1 - p_H^2)^{h}$) but can form a cycle with an existing $E$ agents (probability $(1 - (1 - p_E p_H)^{e})$), or an $E$ agent arrives that cannot form a cycle with any of the existing $H$ agents (probability $(1 - p_Ep_H)^{h}$) but can form a cycle with an existing $E$ agent (probability $(1 - (1 - p_E^2)^{e})$).
\end{itemize}


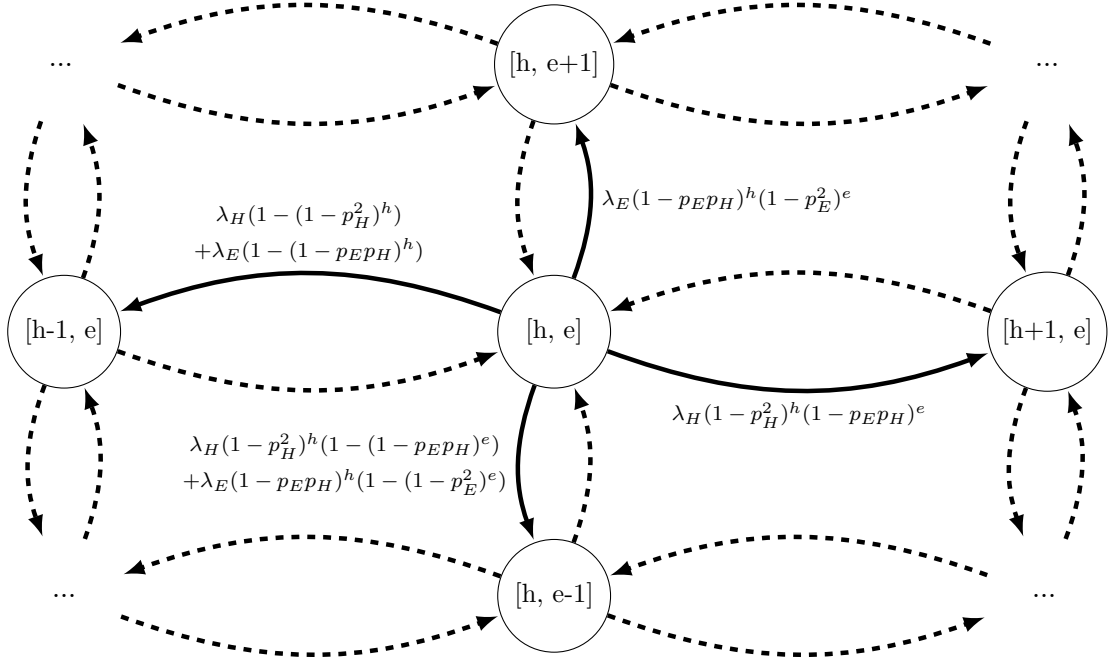
\begin{figure}[htbp]
\centering
\begin{tikzpicture}
[align=center,node distance=1.7cm]
\node[state, minimum size=1.5cm] (s) {\small [h, e]};
\node[state, minimum size=1.5cm, node distance=5cm, right=of s] (r) {\small [h+1, e]};
\node[state, minimum size=1.5cm, node distance=5cm, left=of s] (l) {\small [h-1, e]};
\node[state, minimum size=1.5cm, node distance=2cm, above=of s] (t) {\small [h, e+1]};
\node[state, minimum size=1.5cm, node distance=2cm, below=of s] (b) {\small [h, e-1]};
\node[draw=none, minimum size=1.5cm, node distance=5cm, right=of b] (br) {...};
\node[draw=none, minimum size=1.5cm, node distance=5cm, left=of b] (bl) {...};
\node[draw=none, minimum size=1.5cm, node distance=5cm, right=of t] (tr) {...};
\node[draw=none, minimum size=1.5cm, node distance=5cm, left=of t] (tl) {...};
        \draw[every loop, line width=0.6mm, >=latex]
		(s) edge[bend right = 20,auto=left] node[below] {\scriptsize $\lambda_H (1 - p_H^2)^{h} (1 - p_E p_H)^{e}$} (r)
		(r) edge[bend right = 20, dashed] node {} (s)
		(s) edge[bend right = 20, auto=right] node {{\scriptsize $\lambda_H (1 - (1 - p_H^2)^{h})$} \\ {\scriptsize $+ \lambda_E (1 - (1 - p_E p_H)^{h})$}} (l)
		(l) edge[bend right = 20, dashed] node {} (s)
		(s) edge[bend right = 20, auto=right] node {\scriptsize $\lambda_E (1 - p_E p_H)^{h} (1 - p_E^2)^{e}$} (t)
		(t) edge[bend right = 20, dashed] node {} (s)
		(s) edge[bend right = 20, auto=right] node {{\scriptsize $\lambda_H (1 - p_H^2)^{h} (1 - (1 - p_E p_H)^{e}) $} \\ {\scriptsize $+ \lambda_E (1 - p_E p_H)^{h} (1 - (1 - p_E^2)^{e})$}} (b)
		(b) edge[bend right = 20, dashed] node {} (s)
		(b) edge[bend right = 20, dashed] node {} (br)
		(br) edge[bend right = 20, dashed] node {} (b)
		(b) edge[bend right = 20, dashed] node {} (bl)
		(bl) edge[bend right = 20, dashed] node {} (b)
		(t) edge[bend right = 20, dashed] node {} (tr)
		(tr) edge[bend right = 20, dashed] node {} (t)
		(t) edge[bend right = 20, dashed] node {} (tl)
		(tl) edge[bend right = 20, dashed] node {} (t)
		(r) edge[bend right = 20, dashed] node {} (br)
		(br) edge[bend right = 20, dashed] node {} (r)
		(l) edge[bend right = 20, dashed] node {} (bl)
		(bl) edge[bend right = 20, dashed] node {} (l)
		(r) edge[bend right = 20, dashed] node {} (tr)
		(tr) edge[bend right = 20, dashed] node {} (r)
		(l) edge[bend right = 20, dashed] node {} (tl)
		(tl) edge[bend right = 20, dashed] node {} (l);

\end{tikzpicture}
\caption{Transition rates (solid arrows) under the CTMC induced by the $\BH$ policy.} 
\label{fig:2d:walk}
\end{figure}

Note that the process is a $2$-dimensional continuous-time spatially non-homogeneous random walk. Figure \ref{fig:2d:walk} illustrates this random walk along with its transition rates. Also observe that the
leftward and downward rates \eqref{eq:transtion:2} and \eqref{eq:transtion:4} depend on the priority assigned to  $H$ agents, and these rates will change when prioritizing $E$ agents, as we will see in the Subsection \ref{subsec:BE}. 
However, fixing the priority,  changing the tie-breaking rule between agents of the same type (for example, favoring agents with longer waiting times instead of selecting one at random) does not change the transition rates. 


In Appendix \ref{app:existence_proofs}, we prove that the above (irreducible) CTMC is positive recurrent, and therefore reaches steady-state. This is  intuitive given the above transition rates and the  ``self-regulating'' behavior of the process. The larger the market, the larger the probability that an arriving agent can form a cycle. 
Note that in steady-state, the expected drift for this CTMC in both horizontal and vertical dimension is zero. 
The drifts are given in \eqref{eq:transtion:1}-\eqref{eq:transtion:4}, and therefore,

\begin{subequations}
\begin{align}
& \E\left[\lambda_H (1 - p_H^2)^{H^{\BH}} (1 - p_E p_H)^{E^{\BH}}  - \lambda_H (1 - (1 - p_H^2)^{H^{\BH}}) - \lambda_E (1 - (1 - p_E p_H)^{H^{\BH}})\right] = 0  \label{eq:approx1}\\
& \E[\lambda_E (1 - p_E p_H)^{H^{\BH}} (1 - p_E^2)^{E^{\BH}}  -  \lambda_H(1 - p_H^2)^{H^{\BH}} (1 - (1 - p_E p_H)^{E^{\BH}}) \nonumber \\
&\text{     }  - \lambda_E (1 - p_E p_H)^{H^{\BH}} (1 - (1 - p_E^2)^{E^{\BH}})] = 0 \label{eq:approx2}
\end{align}
\end{subequations}

Assuming that the random variables $H^{\BH}$ and $E^{\BH}$ are very concentrated around their expectations, a reasonable approximation is to move the expectation inside the functions and solve the above system of nonlinear equations, and thus obtain  approximations for $\E[H^{\BH}]$ and $\E[E^{\BH}]$.

\begin{itemize}
\item[-]  For $\lambda_H < \lambda_E$, if we plug $[H^{\BH},E^{\BH}] = \left[\frac{\ln \left(\frac{\lambda_E}{\lambda_E - \lambda_H}\right)}{p_E p_H}, \frac{-\ln (2)}{\ln (1-p_E^2 )} \right]$ into \eqref{eq:approx1}-\eqref{eq:approx2},  the right-hand sides will be $O(p_H)$ terms.
\item[-] For $\lambda_H > \lambda_E$, if we plug $[H^{\BH},E^{\BH}] = \left[\frac{\ln \left( \frac{2 \lambda_H}{\lambda_H + \lambda_E} \right)}{p_H^2}, 0 \right]$  into \eqref{eq:approx1}-\eqref{eq:approx2},  the right-hand sides will be $O(p_H^2)$ terms.
\end{itemize}

This heuristic exercise provides us the correct value of $\E[H^{\BH}]$ in both cases. To establish this value rigorously and prove Theorem \ref{th:bilat_prioH} we show, in  the following two propositions, that  $H^{\BH}$ is  highly concentrated around its mean.


\begin{restatable}{proposition}{propLowerHprioH}\emph{[Lower-bound]}
\label{prop:lower_H_prio_H}
Under $\BH$ and in steady-state,
\begin{itemize}
\item[-] If $\lambda_H < \lambda_E$,  there exists a constant $c_1$ such that:
$$ \P\left[H^\BH \leq \frac{1}{p_E p_H} \left( \ln\left(\frac{\lambda_E}{\lambda_E - \lambda_H}\right) - c_1 p_H^{1/4}\right) \right]  \leq  o(p_H).$$
\item[-] If $\lambda_H > \lambda_E$,   there exists a constant $c_2$ such that:
$$ \P\left[H^\BH \leq \frac{1}{p_H^2} \left( \ln\left(\frac{2 \lambda_H}{\lambda_E + \lambda_H}\right) - c_2\sqrt{p_H}  \right) \right]  \leq  o(p_H).\footnote{While we do have closed form formulas for {$c_1, c_2$ (similarly for $c_3, c_4$ defined in the next proposition)}, these values are not informative. We refer the reader to the proofs for more details.}
$$
\end{itemize}
\end{restatable}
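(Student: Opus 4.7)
The plan is to show that $H^{\BH}$ cannot sit far below the heuristically computed equilibrium $h^\star$---equal to $\frac{1}{p_E p_H}\ln\bigl(\tfrac{\lambda_E}{\lambda_E-\lambda_H}\bigr)$ in the first regime and $\frac{1}{p_H^2}\ln\bigl(\tfrac{2\lambda_H}{\lambda_H+\lambda_E}\bigr)$ in the second---by combining an a priori tail bound on $E^{\BH}$ with a flux-balance recursion on the $h$-marginal of $\pi^{\BH}$. First I would invoke Lemma~\ref{lem:upper_E} to obtain $\P[E^{\BH} \ge M] = o(p_H)$ for some constant $M = M(p_E,\lambda_H,\lambda_E)$ independent of $p_H$; the intuition is that since $p_E$ is bounded away from zero, an arriving agent matches some waiting $E$ agent at a rate that saturates as soon as $E^{\BH} = \Omega(1)$, so $E^{\BH}$ stays $O(1)$ with overwhelming probability.

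Next, on the event $\{E^{\BH} \le M\}$, the factor $(1-p_E p_H)^{E}$ appearing in the transition rate \eqref{eq:transtion:1} equals $1+O(p_H)$, and Taylor-expanding \eqref{eq:transtion:1}--\eqref{eq:transtion:2} around $h^\star$ yields the strictly positive horizontal drift
\[
R(h,e) - L(h,e) \;=\; (\lambda_E - \lambda_H)\bigl(e^{p_E p_H \delta}-1\bigr)(1+o(1)) \qquad \text{at } h = h^\star - \delta
\]
in the first regime, and $(\lambda_H + \lambda_E)\bigl(e^{p_H^2 \delta}-1\bigr)(1+o(1))$ in the second. Stationarity of $\pi^{\BH}$ gives the flux-balance identity $\sum_e \pi(h,e)R(h,e) = \sum_e \pi(h+1,e) L(h+1,e)$ for every $h$, i.e.\ $\pi_H(h+1)/\pi_H(h) = \bar R(h)/\bar L(h+1)$, where $\pi_H(h) = \sum_e \pi(h,e)$ and the bars denote conditional averages over the $E$-marginal given $H$. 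Because Step~1 pins the conditional distribution of $E$ given $H$ essentially on $\{0,\ldots,M\}$, the drift computation survives the averaging and delivers $\bar L(h+1)/\bar R(h) \le 1 - c\, p_E p_H (h^\star - h)$ for $h < h^\star$ (with the $p_H^2$-analogue in the second regime). Telescoping the reciprocal identity over $h\in[h^\star-\delta, h^\star]$ produces $\pi_H(h^\star-\delta) \le \pi_H(h^\star)\exp\bigl(-\Omega(p_H \delta^2)\bigr)$; plugging $\delta = c_1 p_H^{1/4}/(p_E p_H)$ (respectively $\delta = c_2 \sqrt{p_H}/p_H^2$) makes the exponent diverge to $+\infty$ as $p_H \to 0$, and summing over $h \le h^\star - \delta$ gives the desired $o(p_H)$ tail.

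The main obstacle is turning the flux balance into a clean one-dimensional recursion: the rates $R(h,e)$ and $L(h,e)$ genuinely depend on $e$, so one has to control the conditional distribution of $E$ given $H=h$ uniformly for $h$ in a sizable window below $h^\star$, not merely the unconditional marginal of $E$. This is exactly what the authors warn against when they note that marginal analyses do not suffice, and is where the two-dimensional concentration lemmas of Subsection~\ref{subsec:lemmas} come in: they upgrade the unconditional $o(p_H)$ tail of Step~1 into the uniform-in-$h$ conditional tail needed to validate the averaged drift estimate and thereby close the argument.
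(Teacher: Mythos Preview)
Your overall strategy matches the paper's: control $E^{\BH}$ first via Lemma~\ref{lem:upper_E}, then run a flux-balance recursion on the $H$-coordinate, using the rate ratio below $h^\star$ to force geometric decay. Two corrections are worth flagging.

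First, the threshold $M$ in Lemma~\ref{lem:upper_E} is \emph{not} a constant independent of $p_H$: the paper proves $\P[E^{\BH}\ge p_H^{-1/2}+k]\le C\zeta^k$, and in the proof of Proposition~\ref{prop:lower_H_prio_H} takes $S=\{e\le 2p_H^{-1/2}\}$, so that $\P[E^{\BH}\notin S]=o(p_H^4)$. Correspondingly, for $e\in S$ one only has $(1-p_Ep_H)^e\ge 1-2p_E\sqrt{p_H}+o(\sqrt{p_H})$, not $1+O(p_H)$; this is why the shift in $\eta$ is of order $\sqrt{p_H}$ (in the logarithm) and the final window is $c_1p_H^{1/4}$ rather than anything sharper.

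Second, and more substantively, your description of how the two-dimensional lemmas resolve the obstacle is off. You write that they ``upgrade the unconditional $o(p_H)$ tail \dots\ into the uniform-in-$h$ conditional tail''. They do not: Lemma~\ref{lem:lower_bound} never touches $\P[E\ge M\mid H=h]$, which (as you correctly worry) could be large precisely where $\pi_H(h)$ is tiny. Instead, the lemma sidesteps conditional control entirely by working with the \emph{joint} mass $\pi_S(h)=\P[H=h,\,E\in S]$. Splitting the flux-balance identity into $e\in S$ and $e\notin S$ contributions and bounding the latter by the unconditional $\epsilon=\P[E\notin S]$ yields the inhomogeneous recursion $\pi_S(h)\le\rho\,\pi_S(h+1)+\epsilon/f(\eta)$, which one iterates directly. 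The error term then enters additively as $\eta\epsilon\bigl(1+\tfrac{1}{f(\eta)-g(\eta+1)}\bigr)$ in the final bound, and the choices $\epsilon=o(p_H^4)$, $\eta=O(1/p_H)$ or $O(1/p_H^2)$ make it $o(p_H)$. Your conditional-average route $\bar L(h+1)/\bar R(h)$ would require exactly the uniform conditional control that is unavailable; the paper's $\pi_S$ bookkeeping is what makes the argument go through.
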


\begin{restatable}{proposition}{propUpperHprioH}\emph{[Upper-bound]}
\label{prop:upper_H_prio_H}
Under $\BH$ and in steady-state, for any $k \geq 0$,
\begin{itemize}
\item[-] If $\lambda_H < \lambda_E$, there exists a function $\gamma(p_H) = 1 - \sqrt{p_H} + o(\sqrt{p_H})$ and a constant $c_3$ such that:
$$ \P\left[H^\BH \geq \frac{1}{p_E p_H} \left( \ln\left(\frac{\lambda_E}{\lambda_E - \lambda_H}\right) + c_3\sqrt{p_H} \right) + k\right]  \leq \frac{\gamma(p_H)^k}{1 - \gamma(p_H)}.$$
\item[-] 
If $\lambda_H > \lambda_E$, there exists a function $\gamma'(p_H) = 1 - \sqrt{p_H} + o(\sqrt{p_H})$ and a constant $c_4$ such that:
$$ \P\left[H^\BH \geq \frac{1}{p_H^2} \left( \ln\left(\frac{2 \lambda_H}{\lambda_E + \lambda_H}\right) + c_4 \sqrt{p_H} \right) +  k \right]  \leq \frac{\gamma'(p_H)^k}{1 - \gamma'(p_H)}.$$

\end{itemize}
\end{restatable}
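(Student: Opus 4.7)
The plan is to exploit the local flux-balance equations of the 2D CTMC to reduce the problem to a one-dimensional geometric-decay estimate on the $H$-marginal. Writing $R(h,e) := Q^{\BH}([h,e],[h{+}1,e])$ and $L(h,e) := Q^{\BH}([h,e],[h{-}1,e])$, and letting $\pi_h := \sum_e \pi^{\BH}(h,e)$, the steady-state flux across the vertical cut between columns $h$ and $h+1$ vanishes:
$$\sum_e \pi^{\BH}(h,e)\, R(h,e) \;=\; \sum_e \pi^{\BH}(h+1,e)\, L(h+1,e).$$
Inspecting \eqref{eq:transtion:1}--\eqref{eq:transtion:2} shows that $L(h+1,e)$ does not depend on $e$, while $R(h,e) \leq \lambda_H(1-p_H^2)^h$ with equality at $e=0$. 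Substituting these bounds into the flux identity and dividing by $\pi_h$ gives the clean one-dimensional recursion
$$\pi_{h+1} \;\leq\; \frac{\lambda_H(1-p_H^2)^h}{L(h+1)}\,\pi_h.$$

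The next step is to fix $h_0$ as in the statement and verify that, for an appropriate constant $c_3$, the ratio on the right is at most $\gamma(p_H) := 1 - \sqrt{p_H} + o(\sqrt{p_H})$ for every $h \geq h_0$. For the regime $\lambda_H < \lambda_E$ I would use $p_H^2 h_0 = O(\sqrt{p_H})$ together with the identity $(1-p_E p_H)^{h_0} = \tfrac{\lambda_E-\lambda_H}{\lambda_E} e^{-c_3\sqrt{p_H}}(1+o(1))$ to compute
$$\frac{\lambda_H(1-p_H^2)^{h_0}}{L(h_0+1)} \;=\; \frac{\lambda_H \bigl(1+o(\sqrt{p_H})\bigr)}{\lambda_H \,+\, (\lambda_E-\lambda_H)\,c_3\sqrt{p_H} \,+\, o(\sqrt{p_H})},$$
then pick $c_3$ large enough that the quotient is at most $1-\sqrt{p_H}+o(\sqrt{p_H})$. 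Since $(1-p_H^2)^h$ decreases and $L(h+1)$ increases in $h$, the same bound propagates to all $h \geq h_0$.

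Once $\pi_{h+1} \leq \gamma(p_H)\pi_h$ holds for every $h \geq h_0$, a telescoping argument yields $\pi_{h_0+j} \leq \gamma(p_H)^j \pi_{h_0} \leq \gamma(p_H)^j$, and summing the geometric series gives the stated tail bound:
$$\P\!\left[H^{\BH} \geq h_0 + k\right] \;=\; \sum_{j \geq k} \pi_{h_0+j} \;\leq\; \frac{\gamma(p_H)^k}{1-\gamma(p_H)}.$$

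The second case ($\lambda_H > \lambda_E$) follows the same flux-balance/Taylor-expansion recipe with $h_0 := \frac{1}{p_H^2}\bigl(\ln\tfrac{2\lambda_H}{\lambda_H+\lambda_E} + c_4\sqrt{p_H}\bigr)$, but the asymptotic analysis of $R(h_0,0)/L(h_0+1)$ changes: now $p_E p_H \cdot h_0 \to \infty$ so $(1-p_E p_H)^{h_0} \to 0$, whence $L(h_0+1) \to \tfrac{\lambda_H+\lambda_E}{2}\bigl(1 + c_4\sqrt{p_H}\bigr)+o(\sqrt{p_H})$ and $R(h_0,0) \to \tfrac{\lambda_H+\lambda_E}{2}\bigl(1 - c_4\sqrt{p_H}\bigr)+o(\sqrt{p_H})$, so balancing the $\sqrt{p_H}$ coefficients fixes $c_4$. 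The main obstacle, more a matter of uniform bookkeeping than of new ideas, is to ensure that the $o(\sqrt{p_H})$ term in $\gamma$ is genuinely independent of $h$ throughout $h \geq h_0$; the fact that the upper bound $R(h,e) \leq \lambda_H(1-p_H^2)^h$ is independent of $e$ is precisely what decouples the argument from the $E$-coordinate and makes the flux inequality clean.
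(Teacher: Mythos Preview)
Your proposal is correct and follows essentially the same approach as the paper. The paper factors the argument through a general concentration lemma (Lemma~\ref{lem:upper_bound}) for two-dimensional random walks and a slightly more general intermediate statement (Proposition~\ref{prop:upper_H_prio_H_general} with a free function $\xi(p_H)$), but in the application to $\BH$ that lemma is invoked with $S(h)=\N$ and $c=\delta=0$, which collapses exactly to your direct flux-balance inequality $\pi_{h+1}\le \frac{f(h)}{g(h+1)}\pi_h$ with $f(h)=\lambda_H(1-p_H^2)^h$ and $g(h)=L(h)$; the subsequent Taylor expansions, choice of threshold, monotonicity argument for $h\ge h_0$, and geometric tail sum are identical to yours.
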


Note that in both cases in Proposition \ref{prop:upper_H_prio_H}, if $k = p_H^{-3/4}$ then the right-hand sides  become $o(p_H^2)$.

\noindent The proof of Theorem \ref{th:bilat_prioH} is a straightforward application of these propositions and the details are presented in Appendix \ref{subsec:proof:thm1}. To prove these propositions we derive exponentially decaying bounds on tails of the steady-state distribution of
$H^{\BH}$ and $E^{\BH}$.
In the next subsection we present two auxiliary lemmas that establish such bounds for a general class of $2$-dimensional continuous-time  random walks that includes the random walk defined above. The proof of Propositions \ref{prop:lower_H_prio_H} and \ref{prop:upper_H_prio_H} amount to applying these lemmas with appropriately defined parameters. The proofs are presented in Appendix \ref{subsec:proof:pro2} and \ref{subsec:proof:pro1}, respectively. 

\subsection{Concentration bounds for a general class of $2$-dimensional random walks}
\label{subsec:lemmas}


In the analysis of both \textit{BilateralMatch(H)} and  \textit{BilateralMatch(E)} policies, we repeatedly bound the left-tail or the right-tail of the steady-state distribution of the number of $H$ agents in the market.
These bounds rely on certain properties of the corresponding
$2$-dimensional continuous-time  random walks, which allow us to establish exponential decay on each tail of the steady-state distribution.
To avoid repeating these concentration results for each particular setting, we take a unifying approach and state the following two  auxiliary lemmas that establish concentration results  for a general class of $2$-dimensional random walks under certain conditions. These lemmas maybe useful in other applications that give rise to  similar random walks.



\begin{restatable}{lemma}{concentrationlower}\emph{[Lower-bound]}
\label{lem:lower_bound}
Let  $[X_t, Y_t] \in \mathbb{N}^2$ be a positive recurrent continuous time random walk with transition rate matrix $Q$ and $[X,Y]$ be a corresponding random vector following its steady-state distribution. Suppose the following exist:
\begin{enumerate}[{Condition} 1.]
\item A set $S \subset \N$ and a constant $\epsilon > 0$ such that $\P[Y \not \in S] \leq \epsilon$. \label{cond:1}
\item A non-increasing function $f :  \N \mapsto (0, \infty)$ such that $\forall y \in S$, $Q( [x, y], [x+1, y] ) \geq f(x)$.\label{cond:2}
\item A non-decreasing function $g :  \N \mapsto (0, \infty)$ such that  $\forall y \in S$, $Q( [x, y], [x-1, y] ) \leq g(x)$.\label{cond:3}
\end{enumerate}
Then for all $\rho < 1$ and $\eta \in \N$ such that $\frac{g(\eta+1)}{f(\eta)} < \rho$, and any $k>0$ we have:
$$ \P[X \leq \eta - k] \leq \eta \epsilon \left( 1 + \frac{1}{f(\eta) - g(\eta+1)}\right)  + \frac{\rho^k}{1 - \rho}.$$
%
\end{restatable}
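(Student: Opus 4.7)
The plan is to reduce the two-dimensional analysis to an essentially one-dimensional birth-death argument on the $X$-coordinate, treating $\{Y \notin S\}$ as a small perturbation. Decompose the stationary mass at level $x$ as $\alpha_x := \P[X = x, Y \in S]$ and $\beta_x := \P[X = x, Y \notin S]$; Condition~\ref{cond:1} guarantees $\sum_x \beta_x \le \epsilon$.

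First, I would apply the stationary flow-balance across the cut $\{X \le x\}$ vs.\ $\{X \ge x+1\}$. Because transitions across this cut can only change $X$ by $\pm 1$ (with $Y$ held fixed), one obtains $\sum_y \pi(x,y)\, Q([x,y],[x+1,y]) = \sum_y \pi(x+1,y)\, Q([x+1,y],[x,y])$. Splitting each side by whether $y \in S$ and using Conditions~\ref{cond:2} and~\ref{cond:3} to control the $y \in S$ contributions yields the recursive inequality
\[
f(x)\, \alpha_x \;\le\; g(x+1)\, \alpha_{x+1} + \delta_{x+1},
\]
where $\delta_{x+1} := \sum_{y \notin S} \pi(x+1,y)\, Q([x+1,y],[x,y])$ is an error term accounting for ``leakage'' through states outside $S$.

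Second, for any $x \le \eta$, monotonicity of $f$ and $g$ together with the hypothesis $g(\eta+1)/f(\eta) < \rho$ give
\[
\alpha_x \;\le\; \rho_0\, \alpha_{x+1} \;+\; \frac{\delta_{x+1}}{f(\eta)}, \qquad \rho_0 := \frac{g(\eta+1)}{f(\eta)} < \rho.
\]
Iterating this recursion up to level $\eta$ and using $\alpha_\eta \le 1$ bounds each $\alpha_x$ by a geometric term $\rho_0^{\eta - x} \le \rho^{\eta - x}$ plus a $\rho_0$-weighted sum of the $\delta_n$ for $n > x$. Summing the inequality over $x = 0, 1, \dots, \eta - k$, the geometric pieces collapse to $\rho^k/(1-\rho)$, and swapping the order of summation on the error terms (together with a geometric-series bound on the inner $\rho_0$-weighting) turns the accumulated error into a quantity of the form $\sum_n \delta_n/(f(\eta) - g(\eta+1))$. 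Finally, adding $\sum_{x \le \eta - k} \beta_x \le \epsilon$ gives the overall bound on $\P[X \le \eta-k]$.

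The main obstacle will be bounding the cumulative leakage $\sum_n \delta_n$ in terms of $\epsilon$. The $\delta_n$'s are leftward $X$-transition rates emanating from states whose total stationary mass is at most $\epsilon$, so morally they should be chargeable to the $\epsilon$-budget; the key step is to invoke the stationary balance for the $Y$-marginal, relating the rate of crossings in and out of $\{Y \notin S\}$, so as to convert the $\epsilon$ mass bound into a bound on the flow $\sum_n \delta_n$. Carrying this through and multiplying by the $1/(f(\eta) - g(\eta+1))$ factor from Step~3, together with the extra $\eta$ incurred by summing over the $\eta - k + 1$ levels, yields the precise factor $\eta\,\epsilon\,\bigl(1 + 1/(f(\eta) - g(\eta+1))\bigr)$ in the claimed bound.
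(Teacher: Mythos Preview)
Your overall architecture matches the paper's proof exactly: balance the stationary flow across the cut $\{X\le x\}$ versus $\{X\ge x+1\}$, split by whether $y\in S$, use Conditions~\ref{cond:2} and~\ref{cond:3} to get a one-step recursion $\alpha_x \le \rho\,\alpha_{x+1} + (\text{error})$, iterate down from $\eta$, and sum. The paper does precisely this.

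The divergence is in how the error term is handled, and here your proposal has a real gap. Your $\delta_{x+1}=\sum_{y\notin S}\pi(x{+}1,y)\,Q([x{+}1,y],[x,y])$ is a \emph{leftward $X$-flow from within the slab $\{Y\notin S\}$}; it is not a crossing of the boundary between $\{Y\in S\}$ and $\{Y\notin S\}$. The $Y$-marginal balance equates the flow into $\{Y\notin S\}$ with the flow out of it, but both of those are $Y$-transitions, and $\delta_{x+1}$ is neither. So the proposed mechanism for converting the mass bound $\epsilon$ into a bound on $\sum_n\delta_n$ does not apply.

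The paper's treatment is much simpler and avoids this entirely: it bounds the error at each step crudely by $\P[X=x{+}1,\,Y\notin S]\le \P[Y\notin S]\le\epsilon$ (this step implicitly uses that the leftward rate is at most $1$, which is never stated in the lemma but holds in every application). The recursion then reads $\pi_S(x)\le \rho\,\pi_S(x{+}1)+\epsilon/f(\eta)$ with a \emph{constant} forcing term. Subtracting the fixed point $\frac{\epsilon/f(\eta)}{1-\rho}$ from both sides makes it purely geometric, so $\pi_S(\eta-j)\le \rho^j + \frac{\epsilon/f(\eta)}{1-\rho}$. Summing $j$ from $k$ to $\eta$ gives the $\rho^k/(1-\rho)$ term plus roughly $\eta\cdot\frac{\epsilon/f(\eta)}{1-\rho}$; adding $\sum_{x\le\eta-k}\beta_x$ (bounded termwise by $\epsilon$, contributing another $\eta\epsilon$) yields the stated form. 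The factor $\eta$ arises from summing $\sim\eta$ copies of the constant additive error, not from any flow-accounting argument.
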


\begin{proof}[Proof of Lemma \ref{lem:lower_bound}]
Let $\pi(x,y)$ be the joint distribution of $[X,Y]$, and let $\pi_X(x) = \sum_{y \geq 0} \pi(x,y)$ be the marginal distribution of $X$.
In steady-state, conservation of flow implies:

\begin{equation*}
\begin{split}
 \sum_{y \in S} \pi(x+ 1,y) Q( [x+1, y], [x, y] ) &+ \sum_{y \not \in S} \pi(x + 1,y) Q( [x+1, y], [x, y] ) \\ &= \sum_{y \in S} \pi(x,y) Q( [x, y], [x+1, y] ) + \sum_{y \not \in S} \pi(x,y) Q( [x, y], [x+1, y]).
\end{split}
\end{equation*}
Using Conditions \ref{cond:2} and \ref{cond:3}, we upper-bound the left hand side and lower-bound the right hand side which results in having:
\begin{equation*}
\begin{split}
 g(x+1) \P[X = x + 1, Y \in S] + \P[X = x+1, Y \not \in S] &\geq f(x) \P[X = x, Y \in S].
\end{split}
\end{equation*}

\noindent Let $\pi_S(x) = \P[X = x, Y \in S] = \sum_{y \in S} \pi(x,y)$. Observe that by Condition \ref{cond:1} we have: $\pi_X(x) \leq \pi_S(x) + \epsilon$.
Using the fact that $g$ is non-decreasing and $f$ is non-increasing, we get for $x \leq \eta $:
$$ \pi_S(x) \leq \frac{g(x+1)}{f(x)} \pi_S(x+1) + \frac{\P[Y \not \in S]}{f(x)} \leq \rho \pi_S(x+1)  + \frac{\epsilon}{f(\eta)}.$$
We can subtract $\frac{\epsilon/f(\eta)}{1 - \rho}$ from both sides and iterate: for all $j \geq 0$,
$$\pi_S(\eta - j) - \frac{\epsilon/f(\eta)}{1 - \rho}\leq \rho^j \left(\pi_S(\eta) - \frac{\epsilon / f(\eta)}{1 - \rho}\right) \leq \rho^j.$$
This allows us to conclude that for any $k>0$:
\begin{equation*}
\begin{split}
 \P[X \leq \eta - k]  &= \sum_{i = 0}^{ \eta - k} \pi_X(i)
 \leq (\eta - k)\epsilon + \sum_{j = k}^{ \eta} \pi_S(\eta - j)\\
 &\leq  (\eta - k)\epsilon  \left(1 +  \frac{1/f(\eta)}{1 - \rho}\right) + \sum_{j = k}^{ \eta}\rho^j
 \leq  \eta \epsilon \left(1 + \frac{1}{f(\eta)(1 - \rho)}\right) + \frac{\rho^k}{1 - \rho}.
 \end{split}
\end{equation*}
\end{proof}
%


\begin{restatable}{lemma}{concentrationupper}\emph{[Upper-bound]}
\label{lem:upper_bound}
Let  $[X_t, Y_t] \in \mathbb{N}^2$ be a positive recurrent continuous time random walk with transition rate matrix $Q$ and let $[X,Y]$ be a corresponding random vector following its steady-state distribution. Suppose the following exist:
\begin{enumerate}[{Condition} 1.]
\item A mapping $S : \N \mapsto 2^\N$ and two constants $c \in \R^{+}, \delta \in (0,1)$ such that \\ $\P[Y \not \in S(x)] \leq c \delta^x$.\label{cond:2_1}
\item Two functions $f, g :  \N \mapsto (0,\infty)$ such that  $\forall y \in S$, $Q( [x, y], [x+1, y] )   \leq f(x)$ and $Q( [x, y], [x-1, y] )  \geq g(x)$.\label{cond:2_2}
\end{enumerate}
Then for all $\eta > 0$ and $\rho \in [\delta, 1)$ such that  $\forall x \geq \eta$, $\frac{f(x)}{g(x+1)} \leq \rho$, and $\frac{\delta^x}{g(x+1)} \leq \frac{\rho^x}{g(\eta +1)}$, and for any $k>0$ we have:
\footnote{Note that the above conditions are weaker than that of Lemma \ref{lem:lower_bound} (where $f$ is non-increasing, $g$ is non-decreasing and $\frac{f(\eta)}{g(\eta + 1)} \leq \rho$). We will need this for the proofs of Propositions \ref{prop:lower_H_prio_H} and \ref{prop:lower_H_prio_E} where the corresponding function $g$ is not monotone.}
$$ \P[X \geq \eta + k] \leq \frac{\rho^k}{1 - \rho} \left(1 + c + \frac{ c(k+1)}{g(\eta + 1) - f(\eta)} \right).$$
\end{restatable}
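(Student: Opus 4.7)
My plan is to establish a one-step linear recursion $\pi_X(x+1) \leq \rho\pi_X(x) + \mathrm{err}(x)$ for $x \geq \eta$, iterate it from $x = \eta$, and then sum the resulting pointwise bounds over $x \geq \eta + k$ to control $\P[X \geq \eta + k]$. The recursion will come from steady-state flow conservation across the vertical cut between columns $x$ and $x+1$,
$$\sum_y \pi(x,y) Q([x,y],[x+1,y]) = \sum_y \pi(x+1,y) Q([x+1,y],[x,y]),$$
mirroring the leftward argument used for Lemma~\ref{lem:lower_bound}, but in the opposite direction.

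Concretely, I would split each side according to whether $y \in S(\cdot)$, upper-bound the LHS using $Q([x,y],[x+1,y]) \leq f(x)$ on $S(x)$ and Condition~\ref{cond:2_1} to control the complement, and lower-bound the RHS by $g(x+1)[\pi_X(x+1) - c\delta^{x+1}]$ via $\P[X=x+1,Y\in S(x+1)] \geq \pi_X(x+1) - c\delta^{x+1}$. This yields
$$g(x+1)\pi_X(x+1) \;\leq\; f(x)\pi_X(x) + c\delta^x + c\delta^{x+1} g(x+1).$$
Dividing by $g(x+1)$ and applying the two technical hypotheses (together with $\delta \leq \rho$) collapses the errors into a single geometric term, producing $\pi_X(x+1) \leq \rho\pi_X(x) + c\rho^x\alpha$ for $x\geq\eta$, with $\alpha = 1/g(\eta+1) + \rho$. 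A one-line induction then gives $\pi_X(\eta+k) \leq \rho^k\pi_X(\eta) + kc\alpha\rho^{\eta+k-1}$; summing this over $j \geq k$ via $\sum_{j \geq k} j\rho^j \leq (k+1)\rho^k/(1-\rho)^2$, together with $\pi_X(\eta) \leq 1$ and $g(\eta+1) - f(\eta) \geq g(\eta+1)(1-\rho)$, massages the constants into the form stated in the lemma.

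The main obstacle is the bookkeeping of the ``bad'' mass $\{y \notin S(x)\}$ in the flow equation, on which Condition~\ref{cond:2_2} gives no rate bound. In Lemma~\ref{lem:lower_bound} this error was a fixed constant $\epsilon$, so the affine recursion could be iterated via the subtraction trick applied to $a_n - \epsilon/[f(\eta)(1-\rho)]$. Here the analogous error decays geometrically at rate $\delta$, which is helpful (it tends to zero) but awkward (it introduces a second exponential alongside the contraction rate $\rho$). The role of the technical hypothesis $\delta^x/g(x+1) \leq \rho^x/g(\eta+1)$ is precisely to align these two rates, so that after iteration the error remains dominated by $\rho^k$ times a linear-in-$k$ factor rather than by a potentially mismatched $\delta^x$; everything else in the proof is routine summation.
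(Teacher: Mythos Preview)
Your proposal is correct and takes essentially the same route as the paper: flow conservation across the vertical cut, split by $y\in S(\cdot)$, derive a linear recursion with a geometrically decaying error (aligned to $\rho$ via the hypothesis $\delta^x/g(x+1)\le\rho^x/g(\eta+1)$), iterate, and sum. The only cosmetic difference is that the paper runs the recursion on $\pi_S(x)=\P[X=x,\,Y\in S(x)]$ and passes to $\pi_X$ only at the end via $\pi_X(x)\le\pi_S(x)+c\rho^x$, whereas you recur on $\pi_X$ directly; this is what produces your extra $+\rho$ in $\alpha$, but it does not change the argument.
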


%
\noindent{The proof of Lemma \ref{lem:upper_bound} follows similar arguments to that of Lemma \ref{lem:lower_bound} and is deferred to Appendix \ref{app:concentration_upper}. }

\subsection{The \textit{BilateralMatch(E)} policy}
\label{subsec:BE}

The policy $\BE$ forms myopically bilateral exchanges while prioritizing $E$ agents. The transition rate of the underlying CTMC are as follows.


\begin{subequations}
\begin{align}
Q^{\BE}([h,e],[h+1,e]) &= \lambda_H (1 - p_H^2)^{h} (1 - p_E p_H)^{e}\label{eq:transtionE:1} \\
Q^{\BE}([h,e],[h-1,e]) &=  \lambda_H (1 - p_E p_H)^{e}(1 - (1 - p_H^2)^{h}) + \lambda_E (1 - p_E^2)^{e}(1 - (1 - p_E p_H)^{h}) \label{eq:transtionE:2}\\
Q^{\BE}([h,e],[h,e + 1])&=\lambda_E (1 - p_E p_H)^{h} (1 - p_E^2)^{e} \label{eq:transtionE:3}\\
Q^{\BE}([h,e],[h,e-1])&= \lambda_H (1 - (1 - p_E p_H)^{e}) + \lambda_E (1 - (1 - p_E^2)^{e}) \label{eq:transtionE:4}
\end{align}
\end{subequations}
%

The rates are computed similarly to those under the \textit{BilateralMatch(H)}.
Observe that prioritizing $E$ results in different leftward and downward rates \eqref{eq:transtionE:2} and \eqref{eq:transtionE:4} than the corresponding rates under  \textit{BilateralMatch(H)}.
In particular note that in the leftward rate (moving from $[h,e]$ to $[h-1,e]$), the probability that an arriving $E$ agent matches an existing $H$ agent depends now on the current number of $E$ agents.
This dependency does not exist in \textit{BilateralMatch(H)}.
This makes the analysis of \textit{BilateralMatch(E)} more difficult since we need to compute tight bounds also on the number of $E$ agents in the market.
While we are able to prove such bounds in the case $\lambda_H > \lambda_E$, we are not able to do so in the case $\lambda_H < \lambda_E$. 

As before, we set the  expected drifts at steady-state in both dimensions to zero, resulting in the following system of equations.

%

\begin{subequations}
\begin{align}
& \E[ \lambda_H (1 - p_H^2)^{H^{\BE}} (1 - p_E p_H)^{E^{\BE}} -  \lambda_H (1 - p_E p_H)^{E^{\BE}}(1 - (1 - p_H^2)^{H^{\BE}}) \nonumber \\
& - \lambda_E (1 - p_E^2)^{E^{\BE}}(1 - (1 - p_E p_H)^{H^{\BE}}) ] = 0  \label{eq:approx3} \\
& \E\left[ \lambda_E (1 - p_E p_H)^{H^{\BE}} (1 - p_E^2)^{E^{\BE}} -  \lambda_H (1 - (1 - p_E p_H)^{E^{\BE}}) - \lambda_E (1 - (1 - p_E^2)^{E^{\BE}}) \right] = 0. \label{eq:approx4}
\end{align}
\end{subequations}


Similar to the heuristic analysis for \textit{BilateralMatch(H)}, we can obtain 
the following approximations for $\E[H^{\BE}]$ and $\E[E^{\BE}]$.


\begin{itemize}
\item[-] For the case $\lambda_H < \lambda_E$, if we plug $\left[\frac{\ln \left(\frac{\lambda_E + \lambda_H}{\lambda_E - \lambda_H}\right)}{p_E p_H}, \frac{\ln \left( \frac{\lambda_E + \lambda_H}{2 \lambda_E}\right)}{\ln (1-p_E^2 )} \right]$ into \eqref{eq:approx3}-\eqref{eq:approx4},  the right-hand sides will be $O(p_H)$ terms.
\item[-] For the case, $\lambda_H > \lambda_E$, if we plug  $\left[\frac{\ln \left( \frac{2 \lambda_H}{\lambda_H + \lambda_E} \right)}{p_H^2}, 0 \right]$  into \eqref{eq:approx3}-\eqref{eq:approx4},  the right-hand sides will be $O(p_H^2)$ terms.
\end{itemize}

As  stated in Theorem \ref{th:bilat_prioE}, for the case $\lambda_H > \lambda_E$ the constant for the limit of $\frac{\E[H^{\BE}]}{\lambda_H}$ coincides with the solution given by the above heuristic. For the case, $\lambda_H < \lambda_E$, the constant resulting from the above heuristic argument lies in between the constants of the lower and upper bounds we can prove (in Theorem \ref{th:bilat_prioE}), i.e.,

$$
\frac{\ln\left(\frac{\lambda_E}{\lambda_E - \lambda_H}\right)}{p_E } \leq \frac{\ln \left(\frac{\lambda_E + \lambda_H}{\lambda_E - \lambda_H}\right)}{p_E} \leq \frac{ \ln\left(\frac{2 \lambda_E}{\lambda_E - \lambda_H}\right)}{p_E}. 
$$
In Figure \ref{wtimeHeuristicBE} (Section \ref{sec:app:tightness}), we numerically show that  $\frac{\ln \left(\frac{\lambda_E + \lambda_H}{\lambda_E - \lambda_H}\right)}{p_E} $ is indeed the right constant.

The proof of the case $\lambda_H > \lambda_E$, and the lower bound when $\lambda_H < \lambda_E$ in Theorem \ref{th:bilat_prioE} follows similar steps as that of Theorem \ref{th:bilat_prioH}, and it uses the concentration results of the lemmas stated in the previous subsection.
The difficulty in closing the gap between our lower and upper bounds for the case $\lambda_H < \lambda_E$ comes from the dependency of the leftward rate on the current number of $E$ agents (i.e., the second term in \eqref{eq:transtionE:2}).
Our bounds on the right-tail of the distribution of number of $E$ agent are not tight enough to result in a matching lower and upper bounds.
Closing this gap remains an open question.
A notable difference is that in \eqref{eq:transtion:1} and \eqref{eq:transtion:2}, knowing that $E$ is bounded above by a constant (independent of $p_H$) is enough to get matching upper and lower bounds (up to a vanishing term).
This, however, is not the case in \eqref{eq:transtionE:2}. To prove the upper bound in the case $\lambda_H < \lambda_E$ we couple the Markov process underlying policy $\BE$ with another process in which an $E$ agent that cannot form a match upon arrival turns into an $H$ agent. See subsection \ref{app:prioE_coupling}.

\subsection{The \textit{{ChainMatch(d)}} policy}
\label{sec:proof_chains}

This section proves Theorem \ref{th:chain} and Proposition \ref{prop:chain:p1}. As we could establish only an upper bound for the average waiting time when $p_E<1$, we refer the reader to Appendix \ref{app:chains_heuristic} for a heuristic analysis that leads us to guess the constant that we can numerically verify to be the correct one (See Figure \ref{wtimeHeuristicChains} of Subsection \ref{sec:app:tightness}).


Instead of directly analyzing the \textit{{ChainMatch(d)}} policy under our setting, we consider a modified setting, in which  an $E$ agent that does not match immediately upon arrival is removed from the system.
We refer to this new setting under the policy \textit{{ChainMatch(d)}} by $\tCd$. Observe that $H^{\tCd}_t$ is a $1$-dimensional CTMC with the following  transition rates:

\begin{subequations}
\begin{align}
Q^{\tCd}(h,h+1) &= \lambda_H (1 - p_H)^{d} \label{eq:trans1},\\
Q^{\tCd}(h,h-i) &= \left( \lambda_H (1 - (1 - p_H)^{d}) + \lambda_E (1 - (1 - p_E)^{d}) \right) (1-p_H)^{h-i}   \prod_{j=0}^{i-1}(1-(1-p_H)^{h - j}), \nonumber\\
& \quad \quad \quad  \quad  \quad  \quad  \quad  \quad \quad \quad  \quad  \quad  \quad  \quad  \quad  \quad  \quad  \quad  \quad \quad \quad  \quad  \quad  \quad  \quad i \in \{1,2,\ldots, h\}.\label{eq:trans2} \\ \nonumber
\end{align}
\end{subequations}
The first expression, \eqref{eq:trans1}, corresponds to rate, at which an $H$ agent arrives, but cannot be matched by a bridge agent.  The second expression, \eqref{eq:trans2}, corresponds to the rate, at which  an agent arrives,  is matched by a bridge agent and forms a chain-segment of length $i$.\footnote{Observe that the case $i = 0$ is possible, and corresponds to an arriving agent that can receive from a bridge agent but cannot continue the chain further. In that case the CTMC does not transition and we consider the chain-segment to have length $1$.}  In Appendix \ref{app:existence_proofs}, we show that the above CTMC reaches steady-state.

We introduce some notation to simplify \eqref{eq:trans2}.
Set $\Lambda = \lambda_H (1 - (1 - p_H)^{d}) + \lambda_E(1 -  (1 - p_E)^{d})$, which is the rate at which a new chain-segment (possibly of length $1$) starts, regardless of the current state, and let $S_{h}$ be the random number of agents removed from the system, starting from state $h$.\footnote{Note that using the notation from Section \ref{sec:chain-segment}, $S_{h} + 1$ corresponds to the length of the chain-segment $L_k$ for the 1-dimensional Markov chain.} For any $i\leq h$ we can write

\begin{align}
\label{eq:def:Q}
Q^{\tCd}(h,h-i) = \Lambda \P[S_{h} = i] = \Lambda (1-p_H)^{h-i}   \prod_{j=0}^{i-1}(1-(1-p_H)^{h - j}).
\end{align}

Observe that we have

\begin{align}
\label{eq:def:Sh}
\P[S_{h} \geq k] = \prod_{j=0}^{k-1}(1-(1-p_H)^{h-j}) \\ \nonumber
\end{align}



The proof proceeds by showing that  $\E[H^{\tCd}] $ serves as an upper bound for $\E[H^{\Cd}]$ (Lemma \ref{cl:chains_coupling}) and then computing the limit of $\E[H^{\tCd}]$ (Proposition \ref{prop:upper_chains}). Before that, we make the following  crucial observation: the process of chain-segment formation under $\tCd$ exhibits a memoryless property.
That is, for any state $h$ and any $\tilde{i} \leq i  \leq h$:
\begin{align}
\label{eq:memoryless}
\P[S_{h} = i] &=   (1-p_H)^{h-i}   \prod_{j=0}^{i - 1}(1-(1-p_H)^{h - j})\\ \nonumber
&=  \prod_{j=0}^{\tilde{i} - 1} \left(1-(1-p_H)^{h - j} \right) \left[ (1-p_H)^{(h - \tilde{i}) -(i - \tilde{i})}   \prod_{j=0}^{(i - \tilde{i}) - 1} \left(1-(1-p_H)^{(h - \tilde{i}) - j}\right) \right]\\ \nonumber
&=  \P[S_{h} \geq  \tilde{i}] \P[S_{h - \tilde{i}} = i - \tilde{i}], \nonumber
\end{align}

In other words, the event of forming a chain-segment of length $i$ can be decomposed into two independent events: forming a chain-segment of length at least $\tilde{i}$ and then forming a chain-segment of length $i-\tilde{i}$ starting with $h- \tilde{i}$ agents in the market. 
This heavily relies on the fact that chain-segments proceed in a local search (one by one) fashion and the independence assumption. Indeed, the chain-segment formation in the original $2$-dimensional chain $\Cd$ has a similar property.

We now show that  $\E[H^{\tCd}]$ is an upper bound for  $\E[H^{\Cd}]$.

\begin{restatable}{lemma}{claimChainsCoupling}
\label{cl:chains_coupling}
The expected number of $H$ agents in steady-state under $\tCd$ satisfies:
$$ \E[H^{\Cd}] \leq \E[H^{\tCd}].$$
\end{restatable}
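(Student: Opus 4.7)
The plan is to establish the inequality via a pathwise coupling argument: I would construct $\Cd$ and $\tCd$ on a common probability space so that $H_t^{\Cd} \leq H_t^{\tCd}$ holds almost surely for every $t \geq 0$, and then pass to steady state. The common probability space should share the Poisson arrival streams for $H$ and $E$ agents, the Bernoulli random variables realizing each directed edge between any two labelled agents (where labels are assigned by arrival order, with the initial $d$ altruistic agents labelled first), and a family of uniform random variables used for tie-breaking in chain-segment advancement, keyed by the labels of the candidate continuation agents rather than by their position in the chain.

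The structural invariant I would maintain is $V_H^{\Cd}(t) \subseteq V_H^{\tCd}(t)$, where $V_H^{\cP}(t)$ denotes the set of (labelled) waiting $H$ agents in process $\cP$ at time $t$. This immediately implies $H_t^{\Cd} \leq H_t^{\tCd}$, and the intuition is that the only difference between the two processes is the presence of additional waiting $E$ agents in $\Cd$, which can only help match more $H$ agents by giving rise to longer chain-segments that hop through $E$-intermediate nodes to reach otherwise-unreachable $H$ agents. Preservation of the invariant is immediate for arrivals that do not trigger a chain-segment. For arrivals that do, I would use the memoryless decomposition \eqref{eq:memoryless} to advance both chain-segments jointly, one step at a time: because the tie-breaking randomness is keyed by agent label and because $H$ agents are prioritized in both processes, whenever $\tCd$'s chain matches an $H$ agent $h$ that also lies in $V_H^{\Cd}$, $\Cd$'s chain (which has the same $H$-continuation options plus possibly some $E$-continuation options) selects the same $h$; if $\tCd$'s chain instead matches an $H$ agent $h \in V_H^{\tCd} \setminus V_H^{\Cd}$, which is permitted by the invariant, then $\Cd$ selects any valid continuation of its own and a direct set-theoretic check shows that the containment survives the event. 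When $\tCd$'s chain terminates for lack of continuations, $\Cd$'s chain may still continue through $E$ agents and remove further $H$ agents, which only strengthens the containment.

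The main obstacle will be the careful verification of the inductive step at a chain-segment, particularly in handling the case where the candidate continuation sets of the two chains differ by extra $E$ agents present only in $\Cd$; this is where it becomes essential to key the tie-breaking randomness by agent identity, so that the two chains make consistent choices whenever their candidate sets intersect. Once the pathwise dominance $H_t^{\Cd} \leq H_t^{\tCd}$ is established for all $t \geq 0$, taking expectations and passing to $t \to \infty$ via positive recurrence of both chains (established in Appendix \ref{app:existence_proofs}) yields $\E[H^{\Cd}] \leq \E[H^{\tCd}]$ in steady state.
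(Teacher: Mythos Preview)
Your coupling has a genuine gap. You propose to share the Bernoulli edge indicators keyed by agent labels and to maintain the set inclusion $V_H^{\Cd}\subseteq V_H^{\tCd}$. The problem is that once a chain-segment in $\Cd$ passes through a waiting $E$ agent, the bridge agent in $\Cd$ becomes a different labelled agent than the bridge agent in $\tCd$. From that point on, whether an arriving agent starts a chain-segment is governed by \emph{different} edge indicators in the two systems, and these are independent under your coupling. Your key claim, that ``$\Cd$'s chain has the same $H$-continuation options'' as $\tCd$'s chain, presupposes that the two chains are at the same labelled node, which need not hold. Concretely: suppose $A$ is the initial altruistic agent, $e_1$ arrives with no edge $A\to e_1$ (so $e_1$ waits only in $\Cd$), then $h_1$ arrives with no edge $A\to h_1$, then $h_2$ arrives with edge $A\to h_2$, no edge $h_2\to h_1$, edge $h_2\to e_1$, edge $e_1\to h_1$. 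After this, the bridge in $\Cd$ is $h_1$ with $V_H^{\Cd}=\emptyset$, while in $\tCd$ the bridge is $h_2$ with $V_H^{\tCd}=\{h_1\}$. Now let $h_3$ arrive with edge $h_2\to h_3$ but no edge $h_1\to h_3$, and edge $h_3\to h_1$. Then $V_H^{\Cd}=\{h_3\}$ while $V_H^{\tCd}=\emptyset$, so both the set inclusion and the count inequality $H^{\Cd}\le H^{\tCd}$ fail.

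The paper's proof avoids this by coupling at the level of \emph{counts} rather than identities. It couples the event ``a chain-segment starts'' via a single biased coin (probability $(1-p_H)^d$ or $(1-p_E)^d$), which is legitimate because this probability depends only on the type of the arriving agent, not on who the bridge agents are. For the segment itself, it uses the memoryless decomposition \eqref{eq:memoryless} to align the first all-$H$ sub-segment $L_1^H$ in $\Cd$ (which has law $S_h$ when there are $h$ waiting $H$ agents in $\Cd$) with the residual of $S_{\tilde h}$ in $\tCd$ after it has removed the excess $\tilde h-h$ agents. Subsequent sub-segments in $\Cd$ are generated independently and only remove further $H$ agents. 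This count-level coupling yields $H_k^{\Cd}\le H_k^{\tCd}$ by induction without ever tracking which specific agents sit at the head of each chain.
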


\begin{proof}
The proof is based on a coupling argument.
Consider two copies of the arrival process, one under the setting of $\Cd$ and one under $\tCd$. Let $[H^{\Cd}_k, E^{\Cd}_k]$ and ${H}^{\tCd}_k$ denote the  embedded discrete-time Markov chain resulting from observing the two dynamic systems at arrival epochs. 
We  prove a stronger result: at any step $k$, $H_k^{\Cd} \leq H_k^{\tCd}$. We prove this using the following coupling:

\begin{enumerate}
\item Upon arrival of an $H$ agent we flip a biased coin with probability $(1 - p_H)^d$. If the coin flip is head, the $H$ agent cannot start a chain-segment, and both $H_{k+1}^{\Cd}$ and $H_{k+1}^{\tCd}$ increment by one.
    If the coin flip is tail, the $H$ agent starts a chain-segment in both systems. Suppose that $[H^{\Cd}_k, E^{\Cd}_k] = [h,e]$ and ${H}^{\tCd}_k = \tilde{h}$, and let $[L^H_{[h,e]}, L^E_{[h,e]}]$ denote the random number of $H$ and $E$ agents in the chain-segment formed under $\Cd$ at state $[h,e]$; similarly let $S_{\tilde{h}}$ be the length of chain-segment formed under $\tCd$ at state ($\tilde{h}$).
    We dinstinguish between three cases:
\begin{enumerate}[(a)]
\item $\tilde{h} \geq h$ and the event $\{S_{\tilde{h}} <  (\tilde{h}-h)\}$ occurs: we let $[L^H_{[h,e]}, L^E_{[h,e]}]$ be realized independently of $S_{\tilde{h}}$. \label{case:1}

\item $\tilde{h} \geq h$ and the event $\{S_{\tilde{h}} \geq  (\tilde{h}-h)\}$ occurs. In this case the memoryless property of $\tCd$ in \eqref{eq:memoryless} can be rewritten as: $\P[S_{\tilde{h}} = i \mid S_{\tilde{h}} \geq  (\tilde{h}-h)] = \P[S_{h} = i - (\tilde{h}-h) ]$. This divides the chain-segment formation into two independent events: a subchain-segment of length $ (\tilde{h}-h)$ is formed, and then a subchain-segment of length $\xi$, where $\xi$ is a random variable drawn from the distribution of $S_h$.

	Now we focus on the chain-segment formation under $\Cd$. Because $H$ agents get a higher  priority, the chain-segment can be computed in steps. Starting with $[h,e]$ agents, we first look for a subchain-segment $L_1^H$ consisting of only $H$ agents. When this chain-segment cannot be continued further with only $H$ agents, we look for an $E$ agent to continue the chain. If this happens (with probability $(1- (1 - p_E)^e$), we look for a second subchain-segment $L_2^H$ of only $H$ agents, etc.

	Note that the first subchain-segment $L_1^H$ also has the same distribution as $S_h$. We can therefore set $L_1^H = \xi$.
	All further subchain-segments $L_i^H$ are realised independently.\label{case:2}

\item $\tilde{h} < h$: we let $[L^H_{[h,e]}, L^E_{[h,e]}]$ and $S_{\tilde{h}}$ be realized independently.\footnote{This case is only defined here for the sake of completeness, the induction will ensure that this never happens.} \label{case:3}
\end{enumerate}
\item Upon arrival of an $E$ agent we flip a biased coin with probability $(1 - p_E)^d$. If the flip is head, the $E$ agent cannot start a chain-segment in either system and we have: $H_{k+1}^{\Cd} = H_k^{\Cd} $, and $H_{k+1}^{\tCd} = H_k^{\tCd} $; on the other hand, if the flip is tail, the $E$ agent starts a chain-segment in both systems. The chain-segment formation in this case is exactly the same as the one for an $H$ arrival.
\end{enumerate}
Having the above coupling, we finish the proof by induction: The base case $k = 0$ is trivial: $H_0^{\Cd} = H_0^{\tCd} = 0$. Suppose $H_k^{\Cd} \leq H_k^{\tCd}$ holds for $k$, we show that it also holds for $k+1$: if an $H$/$E$ arrival does not start a chain-segment then by coupling construction $H_{k+1}^{\Cd} \leq H_{k+1}^{\tCd}$. If an $H$ arrival does start a chain-segment then we are either in Case \eqref{case:1} or \eqref{case:2}. In the former the length of the chain-segment in $\tCd$ was not even long enough to bring the number of $H$ agents back to $H_k^{\Cd} = h$; therefore $H_{k+1}^{\Cd} \leq H_{k+1}^{\tCd}$ holds. In the latter case, again by coupling construction $L^H_{[h,e]} \geq S_{{h}} + (\tilde{h} - h)$, which implies that $H_{k+1}^{\Cd} \leq H_{k+1}^{\tCd}$ holds. A similar argument holds if an $E$ arrival starts a chain-segment.

\end{proof}

%
%

The next proposition computes $ \E[H^\tCd]$ in the limit.
Together with Lemma \ref{cl:chains_coupling}, this completes the proof of  Theorem \ref{th:chain}.

\begin{proposition}
\label{prop:upper_chains}
Under $\tCd$ and in steady-state, the expected number of $H$ agents satisfies:
$$ \lim_{p_H \rightarrow 0} p_H \E[H^\tCd]  = \ln \left(1 + \frac{\lambda_H}{\lambda_E (1 - (1 - p_E)^d)} \right).$$
\end{proposition}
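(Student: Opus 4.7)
My plan is to exploit the fact that under $\tCd$ the state space is $1$-dimensional and to derive an explicit recursion for the stationary distribution $\pi_h := \P[H^{\tCd}=h]$ via the standard cut equation. Writing the flow-balance cut between states $h$ and $h+1$, the upward flow $\pi_h\lambda_H(1-p_H)^d$ must equal the downward flow $\Lambda R_h$, where $R_h := \sum_{h'>h}\pi_{h'}\prod_{i=h+1}^{h'}(1-(1-p_H)^i)$ (using \eqref{eq:def:Q} and \eqref{eq:def:Sh}). Pulling the $i=h+1$ factor out of each product --- exactly the memoryless decomposition in \eqref{eq:memoryless} --- yields $R_h = (1-(1-p_H)^{h+1})(\pi_{h+1}+R_{h+1})$. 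Substituting $R_{h+1}=\pi_{h+1}\lambda_H(1-p_H)^d/\Lambda$ collapses this into the clean recursion
\begin{equation*}
\frac{\pi_{h+1}}{\pi_h} \;=\; \frac{1}{\beta\bigl(1-(1-p_H)^{h+1}\bigr)}, \qquad \beta \;:=\; 1 + \frac{\Lambda}{\lambda_H(1-p_H)^d}.
\end{equation*}

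Because this ratio is strictly decreasing in $h$, $\pi$ is unimodal, with mode at the unique integer $h^\star$ satisfying $(1-p_H)^{h^\star+1} \approx 1 - 1/\beta$, i.e.\ $h^\star = \tfrac{1}{p_H}\ln\bigl(1 + \tfrac{\lambda_H(1-p_H)^d}{\Lambda}\bigr) + O(1)$. As $p_H\to 0$, $\Lambda\to\lambda_E(1-(1-p_E)^d)$ and $\lambda_H(1-p_H)^d\to\lambda_H$, so $p_H h^\star \to \ln\bigl(1+\tfrac{\lambda_H}{\lambda_E(1-(1-p_E)^d)}\bigr)$, which is exactly the target constant.

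To conclude, I would show $\E[H^{\tCd}] = h^\star + o(1/p_H)$ by concentration. Using $(1-p_H)^{h^\star+k+1}=(1-1/\beta)(1-p_H)^k$, a Taylor expansion gives
\begin{equation*}
\ln\frac{\pi_{h^\star+k+1}}{\pi_{h^\star+k}} \;=\; -\ln\!\Bigl[1+(\beta-1)\bigl(1-(1-p_H)^k\bigr)\Bigr] \;=\; -(\beta-1)\,k\,p_H \;+\; O(k^2 p_H^2),
\end{equation*}
with the analogous expansion for negative $k$. Summing over $k$ then yields Gaussian decay of $\pi_{h^\star+k}/\pi_{h^\star}$ on the natural scale $k=\Theta(1/\sqrt{p_H})$. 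For $|k|$ large enough (of order $1/p_H$), the ratio $\pi_{h+1}/\pi_h$ is bounded away from $1$ by a constant independent of $p_H$ (since $\beta$ is bounded away from $1$ uniformly in small $p_H$), giving geometric tails. Combining the two regimes produces $H^{\tCd}=h^\star+O(1/\sqrt{p_H})$ both in probability and in expectation, and multiplying by $p_H$ and sending $p_H\to 0$ completes the proof.

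The main obstacle is the quantitative concentration step: each single-step log-ratio near the mode is only $O(k p_H)$, so one must sum $\Theta(1/\sqrt{p_H})$ such terms to see the $O(1)$ Gaussian decay exponent while uniformly controlling the $O(k^2 p_H^2)$ second-order error, and then match this near-mode bound with the geometric far-tail bound at a crossover scale (e.g.\ $|k|\sim p_H^{-3/4}$). Once this bookkeeping is in place, the bound $\E[H^{\tCd}] - h^\star = O(1/\sqrt{p_H}) = o(1/p_H)$ is immediate. An alternative route is to package the same estimates through a $1$-dimensional analogue of Lemma \ref{lem:upper_bound}, with $f(h)$ and $g(h)$ taken as the upward and (expected) downward rates dictated by the explicit ratio formula above.
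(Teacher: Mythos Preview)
Your proposal is correct and is essentially the same argument as the paper's proof. You derive the identical recursion $\pi_{h+1}/\pi_h = \lambda_H(1-p_H)^d / \bigl[(1-(1-p_H)^{h+1})(\Lambda+\lambda_H(1-p_H)^d)\bigr]$ by the same telescoping of cut equations via the memoryless factorization \eqref{eq:memoryless}, and your concentration strategy with crossover at $|k|\sim p_H^{-3/4}$ matches the paper's exactly; the Gaussian-decay refinement near the mode is correct but unnecessary, since the paper simply uses the geometric rate $1-c\sqrt{p_H}$ available at distance $1/\sqrt{p_H}$ from $h^\star$.
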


\begin{proof}
Let $\pi $ be the steady-state probability distribution. By the conservation of flow from state $h$ to $h+1$, we have:
$$ \pi(h) \lambda_H (1 - p_H)^d = \sum_{k \geq 1} \pi(h+ k) \left( \sum_{i \leq h} Q^\tCd(h + k, i) \right).$$
Note that $\sum_{i \leq h} Q^\tCd(h + k, i)$ is the total leftward flow starting from state $h + k$ and ending at state $i \leq h$.
Using \eqref{eq:def:Q} and \eqref{eq:def:Sh}, we have:
$$
\sum_{i \leq h} Q^\tCd(h + k, i) = \Lambda \P[S_{h+k} \geq k],
$$
and therefore,
\begin{align}
\label{eq:pr:1}
\pi(h) \lambda_H (1 - p_H)^d =  \Lambda \sum_{k \geq 1} \pi(h+ k) \P[S_{h+k} \geq k].
\end{align}
  Observe that applying Definition \eqref{eq:def:Sh}, we have $\P[S_{h+k} \geq k] = \P[S_{h+k} \geq k-1] \P[S_{h+1} \geq 1]$.
Therefore we can rewrite \eqref{eq:pr:1} as:
\begin{align}
\label{eq:pr:2}
\pi(h) \lambda_H (1 - p_H)^d =  \Lambda \left( \pi(h+ 1) \P[S_{h+1} \geq 1] + \P[S_{h+1} \geq 1] \sum_{k \geq 2} \pi(h+ k) \P[S_{h+k} \geq k-1] \right).
\end{align}

\noindent{Similarly we write the conservation of flow from state $h+1$ to $h+2$:}
\begin{align}
\label{eq:pr:3}
\pi(h+1) \lambda_H (1 - p_H)^d =  \Lambda \sum_{k \geq 1} \pi(h+1+k) \P[S_{h+k+1} \geq k]  =  \Lambda \sum_{k' \geq 2} \pi(h+k') \P[S_{h+k'} \geq k'-1],
\end{align}
where the last step follows from a change of variable $k'= k+1$. Note that the summation in the RHS of \eqref{eq:pr:3} also appears in the second term of RHS of \eqref{eq:pr:2}.
Substituting $\sum_{k' \geq 2} \pi(h+k') \P[S_{h+k'} \geq k'-1]$ with $\pi(h+1) \lambda_H (1 - p_H)^d/\Lambda$ in \eqref{eq:pr:2} gives that

\begin{align}
\label{eq:pr:4}
\pi(h) \lambda_H (1 - p_H)^d =  \pi(h+ 1) \P[S_{h+1} \geq 1] \left(\Lambda + \lambda_H (1 - p_H)^d\right).
\end{align}

We can now compute $\E[H^\tCd]$ by proving an upper and lower bound separately. We  use the fact that for states far enough from the expectation, the distribution decays geometrically.
We start with the upper-bound.
Let   $\eta = \ln \left(1 + \frac{\lambda_H}{\lambda_E (1 - (1 - p_E)^d)} \right)/p_H  + 1/\sqrt{p_H}$.
We know from \eqref{eq:def:Sh} that $\P[S_{h+1} \geq 1] = 1 - (1 - p_H)^{h+1}$. This implies that for $h \geq \eta$,
\begin{align*}
\P[S_{h+1} \geq 1] &\geq 1 - (1 - p_H)^{\eta + 1} \\
&= 1- e^{(\eta + 1) \ln (1-p_H)} \\
& =  1 -   \frac{\lambda_E (1 - (1 - p_E)^d) }{\lambda_H+ \lambda_E (1 - (1 - p_E)^d) } (1 - \sqrt{p_H} ) + o(p_H),
\end{align*}
where we used the Taylor expansion $\ln(1- x) = - x - x^2/2 - x^3/3 - \ldots$.

Using  \eqref{eq:pr:4}  for $h \geq \eta$, we have:

\begin{equation}
 \begin{split}
\frac{ \pi(h+1)}{\pi(h)} &=   \frac{\lambda_H (1 - p_H)^d}{\P[S_{h+1} \geq 1] \left(  \Lambda  +  \lambda_H (1 - p_H)^d \right)}\\
&\leq  \frac{\lambda_H (1 - p_H)^d}{ \lambda_H +  \lambda_E (1 - (1 - p_E)^d)\sqrt{p_H}} + o(\sqrt{p_H})\\
&= 1 - c \sqrt{p_H} + o(\sqrt{p_H})
=: \delta,
\end{split}
\label{eq:pr:5}
\end{equation}
where $c = \frac{\lambda_E (1 - (1 - p_E)^d)  }{\lambda_H }$. Having \eqref{eq:pr:5}, we upper bound $\E[H^{\tCd}]$ as follows:

\begin{equation*}
\begin{split}
\E[H^{\tCd}] &= \sum_{h \leq \eta +  p_H^{-3/4}} \pi(h) + \sum_{h \geq \eta +  p_H^{-3/4}+ 1} \pi(h) \\
& \leq \eta + p_H^{-3/4} + \pi(\eta) \frac{ \delta^{ p_H^{-3/4}+1}}{(1 - \delta)} \\
& = \frac{\ln \left(1 + \frac{\lambda_H}{\lambda_E (1 - (1 - p_E)^d)} \right)}{p_H }+ o(1/p_H).
\end{split}
\end{equation*}

\noindent{Similarly we lower bound $\E[H^{\tCd}] $: let $\hat \eta = \ln \left(1 + \frac{\lambda_H}{\lambda_E (1 - (1 - p_E)^d)} \right)/p_H  - 1/\sqrt{p_H}$, we can find $\hat c$ such that for $h \leq \hat \eta$}:
$$\frac{ \pi(h)}{\pi(h+1)} \leq  1 - \hat c \sqrt{p_H} + o(\sqrt{p_H})$$
The above inequality combined with Markov inequality enables us to lower bound $\E[H^{\tCd}] $ as follows:
$$
\E[H^{\tCd}] \geq (\eta - p_H^{-3/4}) \left( 1 - \sum_{h = 0}^{\eta - p_H^{-3/4}} \pi(h) \right) =  \frac{\ln \left(1 + \frac{\lambda_H}{\lambda_E (1 - (1 - p_E)^d)} \right)}{p_H } + o(1/p_H).
$$
\end{proof}

Finally, note in the special case $p_E = 1$, an arriving $E$ agent is matched immediately by a bridge agent, implying that $E^{\Cd}_t = 0$ and $H^{\Cd}_t  = H^{\tCd}_t$; consequently  Proposition \ref{prop:upper_chains} implies the limit stated in  Proposition \ref{prop:chain:p1}.

\section{Final comments}
\label{sec:conclusion}

In matching markets where monetary transfers are not allowed, exogenous thickness increases exchange opportunities (\cite{RothHahn}).
Using a simple dynamic model with heterogeneous agents we find a tight connection between market thickness and the desired matching technology;  matching through chains is significantly more efficient than (simple) bilateral matching only when the market is sufficiently thin. Furthermore, increasing the arrival rate of hard-to-match agents may have, under bilateral matching, an adverse effect on such agents who will face a harsher competition for matching with easy-to-match agents.

An important dynamic matching market is kidney exchange, which enables  incompatible patient-donor pairs to exchange donors. While our stylized model abstracts away from many details in this market, our findings may provide some useful insights to  policy issues. When merging  markets, which is an ongoing effort in various countries (see Section \ref{sec:kidney}), or attracting different types of pairs, there may be  negative effects on some pairs. This effect is well known for  pairs with O patients and non-O donors who compete to match with scarce O donors in the pool (\cite{RothKidneyAER});  our findings suggest that this negative effect extends also to blood-type compatible pairs (like O-O), many of which have   very highly sensitized  patients.
Understanding these externalities are a key step  towards aligning incentives towards cooperation between the relevant  players (\cite{AshlagiRothAER}). 
Our findings further provide some insights about  tradeoffs from prioritizing different types of pairs.


Next we discuss some limitations and possible extensions. One  interesting challenge is to  quantify the exact loss from restricting attention to myopic policies that do not wait before matching, rather than finding the optimal Markovian policy that may make some agents wait in order to increase matching opportunities.
\footnote{Similar to \cite{AndersonDynamic} we can show our policies achieve the same scaling as the best anonymous Markovian policy (see Proposition \ref{prop:lowBound} in Appendix \ref{sec:anyAlg})
but characterising the best constants is an open question.}
Another interesting direction is to extend the model to allow departures.\footnote{For example, \cite{Akbarpour} allow agents to depart prior to being matched and consider the match rate as the measure for efficiency.}
Finally, our focus has been on marketplaces, in which  any pair of agents have a non-zero probability of forming a match. We found that the composition of the market crucially impacts the efficiency of the market. An interesting direction for future research would be to extend this study to two-sided marketplaces, in particular explore what features determine waiting times; for example, whether it is more beneficial to be on the short side or have a large ex ante   match probability. 

\bibliographystyle{chicago}
\bibliography{kidneysbib}

\begin{thebibliography}{}

\bibitem[\protect\citeauthoryear{Adan and Weiss}{Adan and Weiss}{2012}]{Weiss2}
Adan, I. and G.~Weiss (2012).
\newblock {Exact FCFS Matching Rates for Two Infinite Multitype Sequences}.
\newblock {\em Operations Research\/}~{\em 60\/}(2), 475--489.

\bibitem[\protect\citeauthoryear{Agarwal, I., Ashlagi, Azevedo, Featherstone,
  and Karaduman}{Agarwal et~al.}{2017}]{marketfailure}
Agarwal, N., I., Ashlagi, E.~Azevedo, C.~Featherstone, and O.~Karaduman (2017).
\newblock Market failure in kidney exchange.
\newblock Technical report, mimeo.

\bibitem[\protect\citeauthoryear{Akbarpour, Li, and Oveis~Gharan}{Akbarpour
  et~al.}{2014}]{Akbarpour}
Akbarpour, M., S.~Li, and S.~Oveis~Gharan (2014).
\newblock Dynamic matching market design.
\newblock Working paper.

\bibitem[\protect\citeauthoryear{Anderson, Ashlagi, Gamarnik, Rees, Roth,
  S{\"o}nmez, and {\"U}nver}{Anderson et~al.}{2015}]{anderson2015kidney}
Anderson, R., I.~Ashlagi, D.~Gamarnik, M.~Rees, A.~Roth, T.~S{\"o}nmez, and
  M.~{\"U}nver (2015).
\newblock Kidney exchange and the alliance for paired donation: Operations
  research changes the way kidneys are transplanted.
\newblock {\em Interfaces\/}~{\em 45\/}(1), 26--42.

\bibitem[\protect\citeauthoryear{Anderson, Ashlagi, Kanoria, and
  Gamarnik}{Anderson et~al.}{2017}]{AndersonDynamic}
Anderson, R., I.~Ashlagi, Y.~Kanoria, and D.~Gamarnik (2017).
\newblock {Efficient dynamic barter exchange}.
\newblock Forthcoming in Operations Research.

\bibitem[\protect\citeauthoryear{Ashlagi, Bingaman, Burq, Manshadi, Melcher,
  Murphey, Roth, and Rees}{Ashlagi et~al.}{2017}]{frequency}
Ashlagi, I., A.~Bingaman, M.~Burq, V.~Manshadi, M.~Melcher, M.~Murphey, A.~E.
  Roth, and M.~Rees (2017).
\newblock The effect of match-run frequencies on the number of transplants and
  waiting times in kidney exchange.
\newblock Forthcoming in American Journal of Transplantation.

\bibitem[\protect\citeauthoryear{Ashlagi, Gamarnik, Rees, and Roth}{Ashlagi
  et~al.}{2012}]{AshalgiGamarnikRoth}
Ashlagi, I., D.~Gamarnik, M.~Rees, and A.~Roth (2012).
\newblock The need for (long) chains in kidney exchange.
\newblock Technical report, National Bureau of Economic Research.

\bibitem[\protect\citeauthoryear{Ashlagi and Roth}{Ashlagi and
  Roth}{2012}]{AshlagiRothAER}
Ashlagi, I. and A.~Roth (2012).
\newblock {New Challenges in Multi-hospital Kidney Exchange}.
\newblock {\em American Economic Review, Papers and Proceedings\/}~{\em
  102(3)}, 354--359.

\bibitem[\protect\citeauthoryear{Ashlagi and Roth}{Ashlagi and
  Roth}{2014}]{AshlagiRothIR}
Ashlagi, I. and A.~E. Roth (2014).
\newblock Free riding and participation in large scale, multi-hospital kidney
  exchange.
\newblock {\em Theoretical Economics\/}~{\em 9\/}(3), 817--863.

\bibitem[\protect\citeauthoryear{Baccara, Lee, and Yariv}{Baccara
  et~al.}{2015}]{baccara2015optimal}
Baccara, M., S.~Lee, and L.~Yariv (2015).
\newblock Optimal dynamic matching.
\newblock Working paper.

\bibitem[\protect\citeauthoryear{Biro, Burnapp, Bernadette, Hemke, Johnson,
  van~de Klundert, and Manlove}{Biro et~al.}{2017}]{eurokpd}
Biro, P., L.~Burnapp, H.~Bernadette, A.~Hemke, R.~Johnson, J.~van~de Klundert,
  and D.~Manlove (2017).
\newblock {First Handbook of the COST Action CA15210: European Network for
  Collaboration on Kidney Exchange Programmes (ENCKEP)}.
\newblock
  \url{http://www.enckep-cost.eu/assets/content/57/handbook1_28july2017-20170731121404-57.pdf}.
\newblock [Online].

\bibitem[\protect\citeauthoryear{Bloch and Cantala}{Bloch and
  Cantala}{2014}]{bloch2014dynamic}
Bloch, F. and D.~Cantala (2014).
\newblock Dynamic allocation of objects to queuing agents: The discrete model.

\bibitem[\protect\citeauthoryear{B{\"o}hmig, Fronek, Slavcev, Fischer,
  Berlakovich, and Viklicky}{B{\"o}hmig et~al.}{2017}]{bohmig2017czech}
B{\"o}hmig, G.~A., J.~Fronek, A.~Slavcev, G.~F. Fischer, G.~Berlakovich, and
  O.~Viklicky (2017).
\newblock Czech-austrian kidney paired donation: first european cross-border
  living donor kidney exchange.
\newblock {\em Transplant International\/}~{\em 30\/}(6), 638--639.

\bibitem[\protect\citeauthoryear{Caldentey, Kaplan, and Weiss}{Caldentey
  et~al.}{2009}]{InfiniteBMatching}
Caldentey, R., E.~H. Kaplan, and G.~Weiss (2009).
\newblock Fcfs infinite bipartite matching of servers and customers.
\newblock {\em Adv. Appl. Prob\/}~{\em 41\/}(3), 695--730.

\bibitem[\protect\citeauthoryear{Dickerson, Procaccia, and Sandholm}{Dickerson
  et~al.}{2012a}]{ProcacciaSandholm2}
Dickerson, J.~P., A.~D. Procaccia, and T.~Sandholm (2012a).
\newblock {Dynamic Matching via Weighted Myopia with Application to Kidney
  Exchange}.
\newblock {\em Proc of the6th AAAI Conference on Artificial Intelligence\/},
  1340--1346.

\bibitem[\protect\citeauthoryear{Dickerson, Procaccia, and Sandholm}{Dickerson
  et~al.}{2012b}]{ProcacciaSandholm}
Dickerson, J.~P., A.~D. Procaccia, and T.~Sandholm (2012b).
\newblock {Optimizing Kidney Exchange with Transplant Chains: Theory and
  Reality}.
\newblock {\em Proc of the eleventh international conference on autonomous
  agents and multiagent systems\/}.

\bibitem[\protect\citeauthoryear{Ding, Ge, He, and Ryan}{Ding
  et~al.}{2015}]{ding2015non}
Ding, Y., D.~Ge, S.~He, and C.~T. Ryan (2015).
\newblock A non-asymptotic approach to analyzing kidney exchange graphs.
\newblock In {\em Proceedings of the Sixteenth ACM Conference on Economics and
  Computation}, pp.\  257--258. ACM.

\bibitem[\protect\citeauthoryear{Doval}{Doval}{2014}]{doval2014theory}
Doval, L. (2014).
\newblock A theory of stability in dynamic matching markets.
\newblock Technical report, mimeo.

\bibitem[\protect\citeauthoryear{Feldman, Mehta, Mirrokni, and
  Muthukrishnan}{Feldman et~al.}{2009}]{aryanak_stmatching}
Feldman, J., A.~Mehta, V.~S. Mirrokni, and S.~Muthukrishnan (2009).
\newblock Online stochastic matching: Beating 1-1/e.
\newblock In {\em Proceedings of the 50th Annual IEEE Symposium on Foundations
  of Computer Science (FOCS)}, pp.\  117--126.

\bibitem[\protect\citeauthoryear{Ferrari, Fidler, Holdsworth, Woodroffe,
  Tassone, Watson, Cantwell, Bennett, Thornton, Christiansen, et~al.}{Ferrari
  et~al.}{2012}]{ferrari2012high}
Ferrari, P., S.~Fidler, R.~Holdsworth, C.~Woodroffe, G.~Tassone, N.~Watson,
  L.~Cantwell, G.~Bennett, A.~Thornton, F.~T. Christiansen, et~al. (2012).
\newblock High transplant rates of highly sensitized recipients with virtual
  crossmatching in kidney paired donation.
\newblock {\em Transplantation\/}~{\em 94\/}(7), 744--749.

\bibitem[\protect\citeauthoryear{Ferrari, Weimar, Johnson, Lim, and
  Tinckam}{Ferrari et~al.}{2014}]{ferrari2014kidney}
Ferrari, P., W.~Weimar, R.~J. Johnson, W.~Lim, and K.~Tinckam (2014).
\newblock Kidney paired donation: principles, protocols and programs.
\newblock {\em Nephrology Dialysis Transplantation\/}~{\em 30\/}(8),
  1276--1285.

\bibitem[\protect\citeauthoryear{Fershtman and Pavan}{Fershtman and
  Pavan}{2015}]{fershtmanre}
Fershtman, D. and A.~Pavan (2015).
\newblock Re-matching, experimentation, and cross-subsidization.
\newblock Working paper.

\bibitem[\protect\citeauthoryear{Glorie, van~de Klundert, and Wagelmans}{Glorie
  et~al.}{2014}]{glorie2014kidney}
Glorie, K.~M., J.~J. van~de Klundert, and A.~P. Wagelmans (2014).
\newblock Kidney exchange with long chains: An efficient pricing algorithm for
  clearing barter exchanges with branch-and-price.
\newblock {\em Manufacturing \& Service Operations Management\/}~{\em 16\/}(4),
  498--512.

\bibitem[\protect\citeauthoryear{Goel and Mehta}{Goel and
  Mehta}{2008}]{aryanak_randominput}
Goel, G. and A.~Mehta (2008).
\newblock Online budgeted matching in random input models with applications to
  adwords.
\newblock In {\em Proceedings of the nineteenth annual ACM-SIAM symposium on
  Discrete algorithms (SODA)}, pp.\  982--991.

\bibitem[\protect\citeauthoryear{Gurvich, Ward, et~al.}{Gurvich
  et~al.}{2014}]{gurvich2014dynamic}
Gurvich, I., A.~Ward, et~al. (2014).
\newblock On the dynamic control of matching queues.
\newblock {\em Stochastic Systems\/}~{\em 4\/}(2), 479--523.

\bibitem[\protect\citeauthoryear{Jaillet and Lu}{Jaillet and
  Lu}{2013}]{STMatchingPatrick}
Jaillet, P. and X.~Lu (2013).
\newblock Online stochastic matching: New algorithms with better bounds.
\newblock {\em Mathematics of Operations Research\/}~{\em 39\/}(3), 624--646.

\bibitem[\protect\citeauthoryear{Johnson, Allen, Fuggle, Bradley, Rudge,
  et~al.}{Johnson et~al.}{2008}]{johnson2008early}
Johnson, R., J.~Allen, S.~Fuggle, J.~A. Bradley, C.~Rudge, et~al. (2008).
\newblock Early experience of paired living kidney donation in the united
  kingdom.
\newblock {\em Transplantation\/}~{\em 86\/}(12), 1672--1677.

\bibitem[\protect\citeauthoryear{Kadam and Kotowski}{Kadam and
  Kotowski}{2014}]{kadam2014multi}
Kadam, S. and M.~Kotowski (2014).
\newblock Multi-period matching.
\newblock Technical report, Mimeo.

\bibitem[\protect\citeauthoryear{Karp, Vazirani, and Vazirani}{Karp
  et~al.}{1990}]{kvv}
Karp, R.~M., U.~V. Vazirani, and V.~V. Vazirani (1990).
\newblock An optimal algorithm for on-line bipartite matching.
\newblock In {\em Proceedings of the twenty-second annual ACM symposium on
  Theory of computing (STOC)}, pp.\  352--358.

\bibitem[\protect\citeauthoryear{Leshno}{Leshno}{2014}]{Leshno}
Leshno, J. (2014).
\newblock {Dynamic Matching in Overloaded Waiting Lists}.
\newblock Working paper.

\bibitem[\protect\citeauthoryear{Malik and Cole}{Malik and
  Cole}{2014}]{MalikCole}
Malik, S. and E.~Cole (2014).
\newblock Foundations and principles of the canadian living donor paired
  exchange program.
\newblock {\em Canadian Journal of Kidney Health and Disease\/}~{\em 1}, 6.
\newblock PMID: 25780601.

\bibitem[\protect\citeauthoryear{Manshadi, Oveis-Gharan, and Saberi}{Manshadi
  et~al.}{2011}]{mos}
Manshadi, V.~H., S.~Oveis-Gharan, and A.~Saberi (2011).
\newblock Online stochastic matching: online actions based on offline
  statistics.
\newblock In {\em Proceedings of the Twenty-Second Annual ACM-SIAM Symposium on
  Discrete Algorithms (SODA)}, pp.\  1285--1294.

\bibitem[\protect\citeauthoryear{Meyn and Tweedie}{Meyn and
  Tweedie}{1993}]{meyn1993stability}
Meyn, S.~P. and R.~L. Tweedie (1993).
\newblock Stability of markovian processes iii: Foster--lyapunov criteria for
  continuous-time processes.
\newblock {\em Advances in Applied Probability\/}~{\em 25\/}(03), 518--548.

\bibitem[\protect\citeauthoryear{Nikzad, Akbarpour, Rees, and Roth}{Nikzad
  et~al.}{2017}]{nikzad2017financing}
Nikzad, A., M.~Akbarpour, M.~A. Rees, and A.~E. Roth (2017).
\newblock Financing transplants' costs of the poor: A dynamic model of global
  kidney exchange.
\newblock Technical report, Stanford University Working Paper.

\bibitem[\protect\citeauthoryear{NKR}{NKR}{2017}]{NKRreport}
NKR (2017).
\newblock Paired exchange results quarterly report.
\newblock Technical report.

\bibitem[\protect\citeauthoryear{Prieto-Rumeau and
  Hern{\'a}ndez-Lerma}{Prieto-Rumeau and
  Hern{\'a}ndez-Lerma}{2016}]{Prieto-Rumeau2016}
Prieto-Rumeau, T. and O.~Hern{\'a}ndez-Lerma (2016).
\newblock Uniform ergodicity of continuous-time controlled markov chains: A
  survey and new results.
\newblock {\em Annals of Operations Research\/}~{\em 241\/}(1), 249--293.

\bibitem[\protect\citeauthoryear{Rees, Kopke, Pelletier, Segev, Rutter,
  Fabrega, Rogers, Pankewycz, Hiller, Roth, Sandholm, {\"U}nver, and
  Montgomery}{Rees et~al.}{2009}]{ReesNEJM}
Rees, M.~A., J.~E. Kopke, R.~P. Pelletier, D.~L. Segev, M.~E. Rutter, A.~J.
  Fabrega, J.~Rogers, O.~G. Pankewycz, J.~Hiller, A.~E. Roth, T.~Sandholm,
  M.~U. {\"U}nver, and R.~A. Montgomery (2009).
\newblock {A non-simultaneous extended altruistic donor chain}.
\newblock {\em New England Journal of Medecine\/}~{\em 360}, 1096--1101.

\bibitem[\protect\citeauthoryear{Roth}{Roth}{2008}]{RothHahn}
Roth, A.~E. (2008).
\newblock {What have we learned from market design?}
\newblock {\em Economic Journal\/}~{\em 118}, 285--310.

\bibitem[\protect\citeauthoryear{Roth, S{\"o}nmez, and {\"U}nver}{Roth
  et~al.}{2004}]{RothKidneyQJE}
Roth, A.~E., T.~S{\"o}nmez, and M.~U. {\"U}nver (2004).
\newblock {Kidney exchange}.
\newblock {\em Quarterly Journal of Economics\/}~{\em 119}, 457--488.

\bibitem[\protect\citeauthoryear{Roth, S{\"o}nmez, and {\"U}nver}{Roth
  et~al.}{2005}]{RothKidneyJET}
Roth, A.~E., T.~S{\"o}nmez, and M.~U. {\"U}nver (2005).
\newblock {Pairwise kidney exchange}.
\newblock {\em Journal of Economic Theory\/}~{\em 125}, 151--188.

\bibitem[\protect\citeauthoryear{Roth, S{\"o}nmez, and {\"U}nver}{Roth
  et~al.}{2007}]{RothKidneyAER}
Roth, A.~E., T.~S{\"o}nmez, and M.~U. {\"U}nver (2007).
\newblock {Efficient kidney exchange: coincidence of wants in markets with
  compatibility-based preferences}.
\newblock {\em American Economic Review\/}~{\em 97}, 828--851.

\bibitem[\protect\citeauthoryear{Roth, S{\"o}nmez, {\"U}nver, Delmonico, and
  Saidman}{Roth et~al.}{2006}]{RothAJT}
Roth, A.~E., T.~S{\"o}nmez, M.~U. {\"U}nver, F.~L. Delmonico, and S.~L. Saidman
  (2006).
\newblock {Utilizing list exchange and nondirected donation through chain
  kidney paired donations}.
\newblock {\em American Journal of Transplantation\/}~{\em 6}, 2694--2705.

\bibitem[\protect\citeauthoryear{Siegel-Itzkovich}{Siegel-Itzkovich}{2017}]{israelcyprus}
Siegel-Itzkovich, J. (2017).
\newblock {Israel and Cyprus sign accord for sharing kidneys fromlive donors}.
\newblock
  \url{http://www.jpost.com/Israel-News/Israel-and-Cyprus-sign-accord-for-sharing-kidneys-from-live-donors-497422}.
\newblock [Online].

\bibitem[\protect\citeauthoryear{S{\"o}nmez, Unver, and Yenmez}{S{\"o}nmez
  et~al.}{2017}]{sonmez2017incentivized}
S{\"o}nmez, T., M.~U. Unver, and M.~B. Yenmez (2017).
\newblock Incentivized kidney exchange.
\newblock Technical report, Boston College Department of Economics.

\bibitem[\protect\citeauthoryear{S{\"o}nmez, {\"U}nver, et~al.}{S{\"o}nmez
  et~al.}{2017}]{sonmez2017market}
S{\"o}nmez, T., U.~{\"U}nver, et~al. (2017).
\newblock Market design for living-donor organ exchanges: An economic policy
  perspective.
\newblock {\em Oxford Review of Economic Policy\/}~{\em 33\/}(4).

\bibitem[\protect\citeauthoryear{{\"U}nver}{{\"U}nver}{2010}]{Utku}
{\"U}nver, M.~U. (2010).
\newblock {Dynamic Kidney Exchange}.
\newblock {\em Review of Economic Studies\/}~{\em 77(1)}, 372--414.

\end{thebibliography}


\appendix
\section{Proof of Lemma \ref{lem:upper_bound}}
\label{app:concentration_upper}

\begin{proof}[Proof of Lemma \ref{lem:upper_bound}]
Let $\pi(x,y)$ be the state  joint distribution of $[X,Y]$, and let $\pi_X(x)= \sum_{y \geq 0} \pi(x,y)$ be the marginal distribution of $X$. For a given $x \geq 0$, we have:
\begin{equation*}
\begin{split}
\sum_{y \in S(x)} \pi(x,y) &Q( [x, y], [x+1, y] ) + \sum_{y \not \in S(x)} \pi(x,y) Q( [x, y], [x+1, y] ) \\
&= \sum_{y \in S(x+1)} \pi(x+ 1,y) Q( [x+1, y], [x, y] ) + \sum_{y \not \in S(x+1)} \pi(x + 1,y) Q( [x+1, y], [x, y] ).
\end{split}
\end{equation*}
Using Condition \ref{cond:2_2}, we upper-bound the LHS and lower-bound the RHS, which results in having:
$$f(x) \P[X = x, Y \in S(x)] + \P[X = x, Y \not \in S(x)] \geq g(x+1) \P[X = x + 1, Y \in S(x+1)].$$
Let $\pi_S(x) = \P[X = x, Y \in S(x)] = \sum_{y \in S(x)} \pi(x,y)$. Observe that $\pi_{X}(x) \leq \pi_S(x) + c \delta^x \leq \pi_S(x) + c \rho^x$ from Condition \ref{cond:2_1} and for any $\rho \in [\delta, 1)$. Assuming that for $x \geq \eta$, $\frac{f(x)}{g(x + 1)} \leq \rho$ , we get:
$$ \pi_S(x+1) \leq \frac{f(x)}{g(x+1)} \pi_S(x) + \frac{\P[Y \not \in S(x)]}{g(x+1)} \leq  \rho \pi_S(x)  + \frac{c \delta^x}{g(x+1)} \leq  \rho \pi_S(x)  + \frac{c \rho^x}{g(\eta+1)},$$
where the last inequality results from the assumption $\frac{\delta^x}{g(x+1)} \leq \frac{\rho^x}{g(\eta +1)}$.
We can now prove by induction that:
$$ \pi_S(\eta+i) \leq \rho^i \left( \pi_S(\eta) + i\frac{c  \rho^{\eta -1} }{g(\eta+1)}\right)$$
This allows us to conclude:
\begin{equation*}
\begin{split}
 \P[X \geq \eta + k]  &= \sum_{i = \eta + k}^\infty \pi_X(i)  \\
 &\leq \sum_{i =  k}^\infty \pi_S(\eta + i) + \sum_{i = k}^\infty c \rho^{\eta + i} \\
  &\leq \sum_{i =  k}^\infty \rho^{i} \pi_S(\eta) + \frac{c  \rho^{\eta -1} }{g(\eta+1)}\sum_{i =  k}^\infty i \rho^{i} +  \sum_{i = k}^\infty c \rho^{\eta + i} \\
 &\leq \frac{\rho^k}{1 - \rho} \left(1 + c \rho^{\eta} + \frac{c  \rho^{\eta -1} }{g(\eta+1)}  \frac{k+1}{1 - \rho}\right)\\
 &\leq\frac{\rho^k}{1 - \rho} \left(1 + c +  \frac{ c(k+1)}{g(\eta + 1) - f(\eta)} \right)
 \end{split}
\end{equation*}

\end{proof}

\section{Missing proofs for \textit{ BilateralMatch(H)} Policy}
\label{app:bilat_prioH}

In the proofs of this section and of the next ones we will use the following facts: for any bounded, non-negative function $\xi : \R_+ \mapsto \R_+$ and any constant $u >0$, in the limit where $p_H \rightarrow 0$,

\begin{fact}
\label{fact:1}
For $\eta = \frac{(\ln u) + \xi(p_H)}{p_H^2}$, we have $ (1 - p_H^2)^\eta = \frac{e^{-\xi(p_H)}}{u} + O(p_H^2) $.
\end{fact}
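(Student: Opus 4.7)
The plan is a direct Taylor expansion, invoking the boundedness of $\xi$ to absorb the lower-order correction terms into the $O(p_H^2)$ error.

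First I would rewrite the quantity using logarithms: $(1-p_H^2)^\eta = \exp\bigl(\eta \ln(1-p_H^2)\bigr)$. Then I expand $\ln(1-p_H^2) = -p_H^2 - \tfrac{1}{2}p_H^4 + \cdots = -p_H^2 + O(p_H^4)$ as $p_H \to 0$. Multiplying by $\eta = \frac{(\ln u) + \xi(p_H)}{p_H^2}$ and using the fact that $\xi$ is bounded (so that $(\ln u) + \xi(p_H)$ is bounded uniformly in $p_H$), I obtain
\[
\eta \ln(1-p_H^2) \;=\; -\bigl((\ln u) + \xi(p_H)\bigr) \;+\; O(p_H^2).
\]

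Next I exponentiate both sides. Since $e^{O(p_H^2)} = 1 + O(p_H^2)$, this gives
\[
(1-p_H^2)^\eta \;=\; e^{-\xi(p_H)} \cdot u^{-1} \cdot \bigl(1 + O(p_H^2)\bigr) \;=\; \frac{e^{-\xi(p_H)}}{u} \;+\; \frac{e^{-\xi(p_H)}}{u}\,O(p_H^2).
\]
Finally, invoking once more the boundedness of $\xi$ (so that $e^{-\xi(p_H)}/u$ is bounded), the error term simplifies to $O(p_H^2)$, yielding the claimed identity.

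The only subtlety to watch is the order in which boundedness of $\xi$ is used: it is needed both when multiplying the $O(p_H^4)$ remainder by $\eta$ (to get an $O(p_H^2)$ error rather than something depending unboundedly on $\xi$) and when converting the multiplicative $(1+O(p_H^2))$ factor into an additive $O(p_H^2)$ error. Beyond that, the estimate is a routine first-order asymptotic calculation with no substantive obstacle.
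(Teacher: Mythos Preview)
Your proof is correct. The paper does not actually prove this fact; it simply states Facts~\ref{fact:1}--4 as routine asymptotic identities (under the standing hypothesis that $\xi$ is bounded and non-negative and $u>0$ is constant) and invokes them freely thereafter. Your Taylor-expansion argument is exactly the natural justification one would supply, and your remark about where boundedness of $\xi$ enters---once when multiplying the $O(p_H^4)$ remainder of $\ln(1-p_H^2)$ by $\eta$, and once when converting the multiplicative error $(1+O(p_H^2))$ to an additive $O(p_H^2)$---is precisely the point.
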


\begin{fact}
\label{fact:2}
For any constants $p_E$, $r>0$ and for $\eta = \frac{(\ln u)+ \xi(p)}{p_H^2}$, we have $ (1 - p_E p_H)^\eta = o(p_H^r)$.
\end{fact}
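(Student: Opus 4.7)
The plan is to take logarithms and reduce the statement to an elementary comparison of growth rates. Writing $L(p_H) := \ln\bigl((1 - p_E p_H)^\eta\bigr) = \eta \ln(1 - p_E p_H)$, it suffices to show that $L(p_H) - r \ln p_H \to -\infty$ as $p_H \to 0$, since this is precisely the statement $(1 - p_E p_H)^\eta / p_H^r \to 0$.

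First I would Taylor-expand $\ln(1 - p_E p_H) = -p_E p_H + O(p_H^2)$ and substitute $\eta = \frac{\ln u + \xi(p_H)}{p_H^2}$ to obtain
\[
L(p_H) \;=\; -\frac{p_E\bigl(\ln u + \xi(p_H)\bigr)}{p_H} \;+\; O(1),
\]
where the $O(1)$ remainder absorbs $\eta \cdot O(p_H^2) = O\bigl(\ln u + \xi(p_H)\bigr)$, which is genuinely bounded by the hypothesis that $\xi$ is bounded. In every downstream invocation of this Fact, $\ln u + \xi(p_H)$ stays bounded below by a strictly positive constant (this is automatic whenever $u \geq 1$, since $\xi \geq 0$), so the leading term of $L(p_H)$ is of order $-\Theta(1/p_H)$.

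The second step is just to compare rates: as $p_H \to 0$ the quantity $r \ln p_H = -r \ln(1/p_H)$ diverges to $-\infty$ only at a logarithmic rate, whereas $L(p_H)$ diverges at the much faster rate $\Theta(1/p_H)$. Hence $L(p_H) - r \ln p_H \to -\infty$, which exponentiates to $(1 - p_E p_H)^\eta = o(p_H^r)$ as required.

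No part of this argument is substantively hard; the only mild point of care is verifying that $\ln u + \xi(p_H)$ is bounded away from $0$ from above, so that $\eta \to +\infty$ and the leading $-p_E(\ln u + \xi(p_H))/p_H$ term in $L(p_H)$ really does diverge to $-\infty$. This is transparent in the way the Fact is applied in the proofs of Propositions \ref{prop:lower_H_prio_H} and \ref{prop:upper_H_prio_H}, where $\eta$ serves as a threshold on the number of $H$ agents in the market and grows like $1/p_H^2$.
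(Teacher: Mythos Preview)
Your argument is correct. The paper states Fact~\ref{fact:2} without proof, so there is no approach to compare against; your method of taking logarithms and comparing the $\Theta(1/p_H)$ decay of $L(p_H)$ against the merely logarithmic growth of $r\ln(1/p_H)$ is the natural one-line justification.

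The caveat you flag is real and worth keeping: as literally stated (any $u>0$, any non-negative bounded $\xi$), the Fact fails when $\ln u + \xi(p_H)$ is not eventually bounded below by a positive constant (e.g.\ $u=1$, $\xi\equiv 0$ gives $(1-p_Ep_H)^0=1$). You correctly observe that in every application in the paper one has $u>1$ (in both Propositions~\ref{prop:lower_H_prio_H} and~\ref{prop:upper_H_prio_H_general} the relevant $u$ is $2\lambda_H/(\lambda_H+\lambda_E)>1$ in the regime $\lambda_H>\lambda_E$), so $\ln u>0$ and the leading term $-p_E(\ln u + \xi(p_H))/p_H$ genuinely diverges to $-\infty$. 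One minor wording fix: you wrote ``bounded away from $0$ from above''; you mean bounded away from $0$ from below.
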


\begin{fact}
\label{fact:3}
For any constant $p_E$, and for $\eta = \frac{(\ln u)+ \xi(p_H)}{p_E p_H}$, we have $ (1 - p_H^2)^\eta = 1 - O(p_H)$.
\end{fact}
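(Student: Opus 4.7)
The plan is to handle Fact 3 (and note that Facts 1 and 2 go through by the same type of Taylor-expansion argument). I would first take logarithms to convert the exponential expression into a linear one, then use the local expansion of $\ln(1-x)$ around $x=0$ to control the result with the assumed form of $\eta$, and finally exponentiate back.

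Concretely, write
\[
(1-p_H^2)^\eta = \exp\bigl(\eta \ln(1-p_H^2)\bigr).
\]
Using the Taylor expansion $\ln(1-x) = -x - x^2/2 - \dots$, we get $\ln(1-p_H^2) = -p_H^2 + O(p_H^4)$ as $p_H \to 0$. Substituting $\eta = \dfrac{(\ln u)+\xi(p_H)}{p_E p_H}$ yields
\[
\eta \ln(1-p_H^2) = -\frac{(\ln u)+\xi(p_H)}{p_E}\,p_H \;+\; O(p_H^3).
\]
Since $\xi$ is bounded and non-negative by hypothesis and $u>0$ is fixed, the coefficient $\frac{(\ln u)+\xi(p_H)}{p_E}$ is bounded in $p_H$, so the right-hand side is $O(p_H)$.

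Exponentiating and using $e^z = 1 + z + O(z^2)$ for $z \to 0$ gives
\[
(1-p_H^2)^\eta = 1 + O(p_H) + O(p_H^2) = 1 - O(p_H),
\]
as claimed. There is no serious obstacle here; the one point to be careful about is that $\xi$ is assumed bounded so that $\eta \ln(1-p_H^2)$ stays $O(p_H)$ rather than growing, which is exactly what lets the final exponential sit near $1$.

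For completeness, Fact 1 follows by the identical computation but with $\eta$ of order $1/p_H^2$, so $\eta \ln(1-p_H^2) = -\ln u - \xi(p_H) + O(p_H^2)$ and the exponential produces $e^{-\xi(p_H)}/u$ up to $O(p_H^2)$; Fact 2 follows by expanding $\ln(1-p_E p_H) = -p_E p_H + O(p_H^2)$ against $\eta = \Theta(1/p_H^2)$, giving $\eta \ln(1-p_E p_H) \to -\infty$ like $-p_E/p_H$, so $(1-p_E p_H)^\eta$ decays faster than any polynomial in $p_H$.
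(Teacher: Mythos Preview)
Your proof is correct. The paper actually states Facts~1--4 without proof, relying implicitly on exactly the Taylor-expansion argument you give (take logs, expand $\ln(1-x)$, use boundedness of $\xi$, exponentiate back), so your approach is the intended one.
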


\begin{fact}
\label{fact:4}
For any constant $p_E$ and for $\eta = \frac{(\ln u)+ \xi(p)}{p_E p_H}$, we have $ (1 - p_E p_H)^\eta = \frac{e^{-\xi(p)}}{u} + O(p_H)$.
\end{fact}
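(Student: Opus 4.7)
}

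The plan is to take the natural logarithm of $(1-p_E p_H)^\eta$, apply a Taylor expansion, and then exponentiate. Specifically, I would write
$$\ln\!\left((1-p_E p_H)^\eta\right) \;=\; \eta \,\ln(1 - p_E p_H),$$
and use the expansion $\ln(1-x) = -x - x^2/2 - x^3/3 - \cdots = -x + O(x^2)$ (valid for $x$ small), with $x = p_E p_H$. This yields $\ln(1-p_E p_H) = -p_E p_H + O(p_H^2)$, where the implicit constant depends only on $p_E$, which is fixed.

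Next, multiply by $\eta = \frac{\ln u + \xi(p_H)}{p_E p_H}$. Since $u>0$ is a positive constant and $\xi$ is bounded (by hypothesis), the quantity $\ln u + \xi(p_H)$ is bounded uniformly in $p_H$, so
$$\eta \,\ln(1-p_E p_H) \;=\; -(\ln u + \xi(p_H)) \;+\; \frac{\ln u + \xi(p_H)}{p_E p_H}\cdot O(p_H^2) \;=\; -(\ln u + \xi(p_H)) + O(p_H).$$
Exponentiating and using $e^{O(p_H)} = 1 + O(p_H)$ gives
$$(1-p_E p_H)^\eta \;=\; e^{-(\ln u + \xi(p_H))}\cdot e^{O(p_H)} \;=\; \frac{e^{-\xi(p_H)}}{u}\bigl(1 + O(p_H)\bigr) \;=\; \frac{e^{-\xi(p_H)}}{u} + O(p_H),$$
where in the last step I again used that $e^{-\xi(p_H)}/u$ is bounded.

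The calculation is entirely routine; the only mild subtlety is checking that the scaling $\eta = \Theta(1/p_H)$ interacts with the $O(p_H^2)$ error in the logarithm to produce an overall $O(p_H)$ error (not worse), which is exactly what the statement claims. The same template (log-expand-exponentiate) handles Facts \ref{fact:1}--\ref{fact:3} as well, the only difference being the scaling of $\eta$ relative to $p_H$, which determines whether the leading exponential contribution survives ($\eta \sim 1/p_H^2$ paired with $\ln(1-p_H^2)$, or $\eta \sim 1/p_H$ paired with $\ln(1-p_E p_H)$) or vanishes (Facts \ref{fact:2} and \ref{fact:3}).
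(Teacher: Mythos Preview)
Your proposal is correct and is exactly the standard argument; the paper does not spell out a proof of Fact~\ref{fact:4} at all, simply stating it as a fact to be used. The log-expand-exponentiate template you describe is precisely what the paper invokes implicitly elsewhere (e.g., in the proof of Proposition~\ref{prop:upper_chains}, where the Taylor expansion $\ln(1-x) = -x - x^2/2 - \cdots$ is cited explicitly), so your write-up is fully in line with the paper's intended reasoning.
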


\subsection{Proof of Theorem \ref{th:bilat_prioH}}
\label{subsec:proof:thm1}

\begin{proof}[Proof of Theorem \ref{th:bilat_prioH}]
We first upper-bound  $\E[H^\BH] $. Let
\begin{equation*}
v(\lambda_H, \lambda_E, p_E, p_H) = \Bigg\{
\label{eq:2d-walk}
\begin{matrix}
&\frac{1}{p_E p_H} \left( \ln\left(\frac{\lambda_E}{\lambda_E - \lambda_H}\right) + c_3 \sqrt{p_H} \right) + p_H^{-3/4} &\text{ when } \lambda_H < \lambda_E,\\
& \frac{1}{p_H^2} \left( \ln\left(\frac{2 \lambda_H}{\lambda_E + \lambda_H}\right) + c_4 \sqrt{p_H} \right) + p_H^{-3/4} &\text{ when } \lambda_H > \lambda_E.
\end{matrix}
\end{equation*}
Where $c_3$ and $c_4$ are the constants from Proposition \ref{prop:upper_H_prio_H}.
Using the equality  $\E[H^{\BH}] = \sum_{i = 1}^{\infty} \P[H^{\BH} \geq i]$, we have:
\begin{equation*}
\begin{split}
\E[H^{\BH}] &= \sum_{i = 1}^{v(\lambda_H, \lambda_E, p_E, p_H) - 1} \P[H^{\BH} \geq i] +  \sum_{i = v(\lambda_H, \lambda_E, p_E, p_H)}^{\infty} \P[H^{\BH} \geq i]\\
&\leq v(\lambda_H, \lambda_E, p_E, p_H) + \sum_{j = p_H^{-3/4}}^{\infty} \frac{\gamma(p_H)^j}{1- \gamma(p_H)} \\
& \leq  v(\lambda_H, \lambda_E, p_E, p_H) + \frac{\gamma''(p_H)^{p_H^{-3/4}}}{(1- \gamma''(p_H))^2}.
\end{split}
\end{equation*}

Where we denote $\gamma'' = \max(\gamma, \gamma')$ and we used the result from Proposition \ref{prop:upper_H_prio_H} : $\P[H^\BH \geq v(\lambda_H, \lambda_E, p_E, p_H) + j] \leq \frac{\gamma''(p_H)^j}{1- \gamma''(p_H)}$.

Applying the fact that $\gamma''(p_H)^{p_H^{-3/4}} = \left(1 - \sqrt{p_H} + o(\sqrt{p_H}) \right)^{p_H^{-3/4}} = o(p_H^2)$, and some algebra we get the following upper-bound on $\E[H^\BH]$:
\begin{itemize}
\item[-] If $\lambda_H < \lambda_E$, then $\E[H^\BH] \leq \frac{\ln\left(\frac{\lambda_E}{\lambda_E - \lambda_H}\right)}{p_E p_H} + o\left( \frac{1}{p_H} \right)$.
\item[-] If $\lambda_H > \lambda_E$, then $\E[H^\BH] \leq \frac{\ln\left(\frac{2 \lambda_H}{\lambda_H + \lambda_E} \right)}{p_H^2} + o \left( \frac{1}{p_H^2} \right)$.
\end{itemize}
Now we proceed to lower-bound  $\E[H^\BH] $: Applying Markov inequality to $\E[H^\BH] $ and  using Proposition \ref{prop:lower_H_prio_H}, we get the following lower-bound on $\E[H^\BH]$:

\begin{itemize}
\item[-] If $\lambda_H < \lambda_E$, then $\E[H^\BH] \geq \frac{\ln\left(\frac{\lambda_E}{\lambda_E - \lambda_H}\right)}{p_E p_H} + o\left( \frac{1}{p_H} \right)$.
\item[-] If $\lambda_H > \lambda_E$, then $\E[H^\BH] \geq \frac{\ln\left(\frac{2 \lambda_H}{\lambda_H + \lambda_E} \right)}{p_H^2} + o \left( \frac{1}{p_H^2} \right)$.
\end{itemize}
This completes the proof.
\end{proof}

\subsection{Proof of Corollaries \ref{cor:monotone} and \ref{cor:nonmonotone}}
\label{subsec:corrs}
\begin{proof}[Proof of Corollary \ref{cor:monotone}]
Define $x = \lambda_H/\lambda_E$, and $f(x) = \frac{\ln \left(\frac{1}{1-x}\right)}{x}$.
Note that the constant $\frac{\ln\left(\frac{\lambda_E}{\lambda_E - \lambda_H}\right)}{p_E  \lambda_H } = \frac{f(x)}{\lambda_E p_E}$.
It is easy check that $f'(x) >0$ in $x \in (0,1)$, and therefore $f(x)$ is increasing in $x \in (0,1)$.
\end{proof}

\begin{proof}[Proof of Corollary \ref{cor:nonmonotone}]
Define $y = \lambda_H/\lambda_E$, and $g(y) = \frac{\ln \left(\frac{2y}{1+y}\right)}{y}$.
Note that the constant $\frac{\ln \left( \frac{2 \lambda_H}{\lambda_H + \lambda_E} \right)}{\lambda_H } = \frac{g(y)}{\lambda_E}$.
It is easy to check that $g'(y) > 0$ when $y \in (1,y^*)$, and $g'(y) < 0$ in $y> y^*$ where $y^*$ is the solution of $g'(y) = 0$:
$$
g'(y^*) = 0 \Leftrightarrow \frac{1}{y^*+1} = \ln \left(\frac{2y^*}{1+y^*}\right) \Leftrightarrow (y^*+1) \ln(2-2/(y^*+1)) = 1,
$$
\end{proof}

\subsection{Proof of Proposition \ref{prop:upper_H_prio_H}}
\label{subsec:proof:pro1}
Instead of proving Proposition \ref{prop:upper_H_prio_H}, we prove a stronger result. This will be useful later on to prove an upper bound for $E$ agents (see Lemma \ref{lem:upper_E} in \ref{subsec:proof:pro2}).


\begin{restatable}{proposition}{propUpperHprioH}
\label{prop:upper_H_prio_H_general}
Under $\BH$, if $\lambda_H < \lambda_E$, for any non-negative bounded function $\xi(p_H)$, for all $k \geq 0$:
$$ \P\left[H^\BH \geq \frac{1}{p_E p_H} \left( \ln\left(\frac{\lambda_E}{\lambda_E - \lambda_H}\right) + \xi(p_H) \right) + k \right]  \leq \frac{\gamma(p_H)^k}{1 - \gamma(p_H)},$$
where $\gamma(p_H) =  \frac{\lambda_H}{\lambda_E - (\lambda_E - \lambda_H)e^{- \xi(p_H)}} + O(p_H)$.
\par
If $\lambda_H > \lambda_E$, for any non-negative bounded function $\xi'(p_H)$, for all $k \geq 0$:
$$ \P\left[H^\BH \geq \frac{1}{p_H^2} \left( \ln\left(\frac{2 \lambda_H}{\lambda_E + \lambda_H}\right) + \xi'(p_H) \right) +k \right]  \leq \frac{\gamma'(p_H)^k}{1- \gamma'(p_H)},$$
where $\gamma'(p_H) = \frac{ e^{- \xi'(p_H)} }{ 2 -  e^{- \xi'(p_H)}} +O(p_H^2) $.
\end{restatable}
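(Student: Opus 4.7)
The plan is to apply Lemma \ref{lem:upper_bound} to the $2$-dimensional CTMC with transition rates $Q^\BH$ given in \eqref{eq:transtion:1}–\eqref{eq:transtion:4}, letting $X = H^\BH$ and $Y = E^\BH$. The key structural observation is that the leftward rate \eqref{eq:transtion:2} depends on $h$ only (not on $e$), and the rightward rate \eqref{eq:transtion:1} can be upper bounded by a function of $h$ alone (since $(1-p_Ep_H)^e \leq 1$). This lets us bypass any delicate control of $Y$: take $S(x) = \N$ for all $x$, so that $\P[Y \notin S(x)] = 0$ and Condition~\ref{cond:2_1} holds trivially with $c = 0$ (and any $\delta \in (0,1)$).

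Concretely, I define
\begin{align*}
f(x) &= \lambda_H (1-p_H^2)^x, \\
g(x) &= \lambda_H \bigl(1-(1-p_H^2)^x\bigr) + \lambda_E \bigl(1-(1-p_E p_H)^x\bigr),
\end{align*}
which satisfy Condition~\ref{cond:2_2} with equality in the leftward direction and upper bound in the rightward one. Since $f$ is non-increasing and $g$ is non-decreasing, the ratio $f(x)/g(x+1)$ is non-increasing in $x$, so it suffices to verify $f(\eta)/g(\eta+1) \leq \rho$ at the chosen threshold $\eta$. Moreover, the auxiliary condition $\delta^x/g(x+1) \leq \rho^x/g(\eta+1)$ for $x \geq \eta$ holds for $\delta = \rho$ because $g$ is increasing.

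The remaining work is then a routine asymptotic computation to identify $\rho = \gamma(p_H)$ (or $\gamma'(p_H)$) in each regime. For the case $\lambda_H < \lambda_E$, I take $\eta = \frac{1}{p_E p_H}\bigl(\ln\frac{\lambda_E}{\lambda_E - \lambda_H} + \xi(p_H)\bigr)$ and apply Facts~\ref{fact:3} and~\ref{fact:4} to obtain $f(\eta) = \lambda_H + O(p_H)$ and
$$g(\eta+1) = \lambda_E - (\lambda_E - \lambda_H)e^{-\xi(p_H)} + O(p_H),$$
yielding $f(\eta)/g(\eta+1) = \frac{\lambda_H}{\lambda_E - (\lambda_E-\lambda_H)e^{-\xi(p_H)}} + O(p_H)$, which is exactly $\gamma(p_H)$. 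For the case $\lambda_H > \lambda_E$, with $\eta = \frac{1}{p_H^2}\bigl(\ln\frac{2\lambda_H}{\lambda_E+\lambda_H} + \xi'(p_H)\bigr)$, Facts~\ref{fact:1} and~\ref{fact:2} give $(1-p_H^2)^\eta = \frac{(\lambda_E+\lambda_H)e^{-\xi'(p_H)}}{2\lambda_H} + O(p_H^2)$ and $(1-p_Ep_H)^\eta = o(p_H^r)$, from which $f(\eta) = \frac{(\lambda_E+\lambda_H)e^{-\xi'(p_H)}}{2} + O(p_H^2)$ and $g(\eta+1) = (\lambda_E+\lambda_H)\bigl(1 - \tfrac{1}{2}e^{-\xi'(p_H)}\bigr) + o(1)$, so $f(\eta)/g(\eta+1) = \frac{e^{-\xi'(p_H)}}{2 - e^{-\xi'(p_H)}} + O(p_H^2) = \gamma'(p_H)$.

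Since $c = 0$, the conclusion of Lemma~\ref{lem:upper_bound} simplifies to $\P[X \geq \eta + k] \leq \frac{\rho^k}{1-\rho}$, which is precisely the claimed bound in each regime. I do not anticipate a genuine obstacle: the main subtlety is recognizing that the dependence of the rates on $e$ is harmless (so we can avoid analyzing the marginal of $E^\BH$ entirely, which is important because tight control of that marginal is exactly where the $\BE$ analysis runs into difficulty); after that, everything reduces to bookkeeping with the Taylor-expansion Facts~\ref{fact:1}–\ref{fact:4}.
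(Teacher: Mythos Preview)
Your proposal is correct and follows essentially the same approach as the paper: apply Lemma~\ref{lem:upper_bound} with $S(h)=\N$ and $c=0$ (the paper also sets $\delta=0$, while you note any $\delta$ works since $c=0$), use the same $f(h)=\lambda_H(1-p_H^2)^h$ and $g(h)=\lambda_H(1-(1-p_H^2)^h)+\lambda_E(1-(1-p_Ep_H)^h)$, and then invoke Facts~\ref{fact:1}--\ref{fact:4} at the stated thresholds $\eta$ to identify $\gamma(p_H)$ and $\gamma'(p_H)$. Your explicit verification of the monotonicity of $f,g$ and of the auxiliary condition on $\delta^x/g(x+1)$ is a slight elaboration, but the argument is otherwise identical.
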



\begin{proof}
We wish to apply Lemma \ref{lem:upper_bound} with $[X(t), Y(t)] = [H^\BH(t), E^\BH(t)]$, in the special case where $S(h) = \N$ for all $h$, and $c = \delta = 0$.
This implies that we need to find functions $f(\cdot)$ and $g(\cdot)$ such that for all $e \in \N$, $f(h) \geq Q^\BH([h,e],[h+1,e])$, and $g(h) \leq Q^\BH([h,e],[h-1,e])$.

Let $f(h) = \lambda_H (1 - p_H^2)^h$ and  $g(h) = \lambda_H (1 - (1 - p_H^2)^h) + \lambda_E (1 - (1 - p_E p_H)^h)$. We have
$$ \frac{f(h)}{g(h+1)} = \frac{\lambda_H (1 - p_H^2)^h}{ \lambda_H (1 - (1 - p_H^2)^{h+1}) + \lambda_E (1 - (1 - p_E p_H)^{h+1})}.$$

 \noindent {\bf Case $\lambda_H <\lambda_E$:}
Take $\eta =  \frac{1}{p_E p_H} \left( \ln\left(\frac{\lambda_E}{\lambda_E - \lambda_H}\right) + \xi(p_H) \right)$.
Facts \ref{fact:3} and \ref{fact:4} imply respectively that  $(1 - p_H^2)^\eta = 1 + O(p_H)$ and $(1 - p_Ep_H)^\eta = \frac{\lambda_E - \lambda_H}{\lambda_E} e^{- \xi(p_H)} +O(p_H)$. This yields for all $h \geq \eta$:
$$\frac{f(h)}{g(h+1)} \leq   \frac{\lambda_H}{\lambda_E - (\lambda_E - \lambda_H)e^{- \xi(p_H)}} + O(p_H) := \gamma(p_H).$$

\noindent {\bf Case $\lambda_H > \lambda_E$:}
Taking $\eta =  \frac{1}{p_H^2} \left( \ln\left(\frac{2 \lambda_H}{\lambda_E + \lambda_H}\right) + \xi'(p_H) \right)$.
Facts \ref{fact:1} and \ref{fact:2} imply respectively $(1 - p_H^2)^\eta = \frac{\lambda_E + \lambda_H}{2\lambda_H} e^{- \xi'(p_H)} +O(p_H^2)$ and $(1 - p_Ep_H)^\eta = o(p_H^2)$.
This yields for all $h \geq \eta$:
$$ \frac{f(h)}{g(h+1)} \leq  \frac{\frac{1}{2} e^{- \xi'(p_H)} }{ 1 - \frac{1}{2 } e^{- \xi'(p_H)}} +O(p_H^2) := \gamma'(p_H).$$

In both cases, we conclude by applying Lemma \ref{lem:upper_bound}  with $\rho = \gamma(p_H)$ or $\rho = \gamma'(p_H)$ .

\end{proof}

\begin{proof}[Proof of Proposition \ref{prop:upper_H_prio_H}]
The proof of Proposition \ref{prop:upper_H_prio_H} is a consequence of Proposition \ref{prop:upper_H_prio_H_general}:
\begin{itemize}
  \item[-] In the case where $\lambda_H < \lambda_E$, take $\xi(p_H) = \sqrt{p_H}\frac{\lambda_H}{\lambda_E - \lambda_H}$. This implies that $\gamma(p_H) =  \frac{\lambda_H}{\lambda_E - (\lambda_E - \lambda_H)e^{- \xi(p_H)}} + O(p_H) = 1 - \sqrt{p_H} + O(\sqrt{p_H})$.
  \item[-] In the case where $\lambda_H > \lambda_E$, take $\xi'(p_H) = 2\sqrt{p_H}$. This implies that $\gamma'(p_H) = \frac{ e^{- \xi'(p_H)} }{ 2 -  e^{- \xi'(p_H)}} +O(p_H^2) = 1 - \sqrt{p_H} + O(\sqrt{p_H})$.
\end{itemize}
\end{proof}

\subsection{Proof of Proposition \ref{prop:lower_H_prio_H}}
\label{subsec:proof:pro2}
The proof of Proposition \ref{prop:lower_H_prio_H} requires a concentration bound on the number of $E$ agents, which we state in Lemma \ref{lem:upper_E}.

\begin{lemma}
\label{lem:upper_E}
Under $\BH$, and assuming $p_H \leq p_E$, there exist constants $C$ and $\zeta < 1$ (which only depend on $\lambda_H, \lambda_E,$ and $p_E$) such that for any $k \geq 0$, there exists $p$ such that for any $p_H < p$, we have:
$$\P \left[E^\BH \geq \frac{1}{\sqrt{p_H}} + k \right] \leq C \zeta^k.$$
\end{lemma}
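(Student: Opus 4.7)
The plan is to apply Lemma \ref{lem:upper_bound} with $[X_t, Y_t] := [E^\BH_t, H^\BH_t]$, exploiting that the upward rate for $E$ in \eqref{eq:transtion:3} decays geometrically in $e$ with ratio $(1-p_E^2)$, while the downward rate in \eqref{eq:transtion:4} contains factors of the form $(1-(1-p_E^2)^e)$ or $(1-(1-p_Ep_H)^e)$ that are bounded away from zero for $e \geq 1$, provided $h$ is not too large. Concentration of $H^\BH$ via Proposition \ref{prop:upper_H_prio_H_general} will supply the required decay $\P[Y \notin S(x)] \leq c\delta^x$.

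Applying Proposition \ref{prop:upper_H_prio_H_general} with a constant $\xi \equiv \xi_0$ gives $\P[H^\BH \geq \tilde h + j] \leq c_0\gamma_0^j$ with constants $c_0 > 0$ and $\gamma_0 \in (0,1)$ independent of $p_H$, where $\tilde h = \Theta(1/(p_E p_H))$ when $\lambda_H < \lambda_E$ and $\tilde h = \Theta(1/p_H^2)$ when $\lambda_H > \lambda_E$. Fixing an integer $M$ large enough that $\delta := \gamma_0^M$ is as small as desired, I would set $S(e) := \{0, 1, \ldots, \tilde h + Me\}$, yielding $\P[H \notin S(e)] \leq c_0 \delta^e$. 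For $h \in S(e)$, the natural upper bound on the upward rate is $f(e) := \lambda_E (1-p_E^2)^e$, and monotonicity in $h$ of the right-hand side of \eqref{eq:transtion:4} allows me to define $g(e)$ by evaluating the downward rate at $h = \tilde h + Me$. In the case $\lambda_H < \lambda_E$, the factor $(1-p_Ep_H)^{\tilde h + Me}$ converges to a positive constant uniformly over $e \leq 1/\sqrt{p_H}+k$, hence $g(e) \geq C_1 (1-(1-p_E^2)^e)$ with $C_1 > 0$ constant, and $f(e)/g(e+1) \leq \rho < 1$ holds for all $e$ above a constant threshold $\eta_0$. In the case $\lambda_H > \lambda_E$, the quantity $(1-p_Ep_H)^{\tilde h}$ vanishes super-polynomially, so the $\lambda_E$ summand of \eqref{eq:transtion:4} dies out and only the $\lambda_H$ summand survives, giving $g(e) \geq C_2 (1-(1-p_Ep_H)^e) \sim C_2\, e p_E p_H$ for moderate $e$; this forces the threshold to grow as $\eta \sim c_3\ln(1/p_H)$, which remains at most $1/\sqrt{p_H}$ for $p_H$ sufficiently small.

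After choosing $M$ so that $\delta \leq \rho$ and verifying the technical condition $\delta^x/g(x+1) \leq \rho^x/g(\eta+1)$ (which follows immediately from monotonicity of $g$), Lemma \ref{lem:upper_bound} gives $\P[E^\BH \geq \eta + k'] \leq C'(k'+1)\rho^{k'}$. Setting $k' := k + (1/\sqrt{p_H} - \eta)$ and absorbing both the $(k'+1)$ prefactor and the vanishing factor $\rho^{1/\sqrt{p_H}-\eta}$ into the constants produces $\P[E^\BH \geq 1/\sqrt{p_H}+k] \leq C \zeta^k$ for $p_H$ small enough (depending on $k$), with $\zeta := \sqrt{\rho} \in (0,1)$, matching the statement. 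The main obstacle will be the case $\lambda_H > \lambda_E$: because $H^\BH$ is then of order $1/p_H^2$, the $\lambda_E$ contribution to the downward rate of $E$ is killed by $(1-p_Ep_H)^h \to 0$ and only the $\lambda_H$ contribution survives, itself of order $p_H$ for small $e$, so the threshold $\eta$ must grow logarithmically in $1/p_H$ and the analysis must carefully control how the $(k'+1)$ prefactor in Lemma \ref{lem:upper_bound} combines with this growth.
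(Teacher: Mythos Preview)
Your plan is essentially the paper's: apply Lemma~\ref{lem:upper_bound} with $[X_t,Y_t]=[E^\BH_t,H^\BH_t]$, take $S(e)=\{h\le \tilde h+Me\}$ (the paper uses $M=1$ and $\xi_0=\ln 2$), invoke Proposition~\ref{prop:upper_H_prio_H_general} for $\P[H\notin S(e)]\le c\delta^e$, and in the two regimes keep, respectively, the $\lambda_E$ term and the $\lambda_H$ term of the downward rate~\eqref{eq:transtion:4}. The paper simply sets $\eta=1/\sqrt{p_H}$ in both regimes rather than choosing a smaller $\eta$ and translating, which removes your final absorption step.

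There is, however, a real technical gap. Your simplified lower bounds $g(e)=C_1(1-(1-p_E^2)^e)$ and $g(e)=C_2(1-(1-p_Ep_H)^e)$ are \emph{not} valid for all $e\ge\eta$: the factor you dropped, $(1-p_Ep_H)^{\tilde h+Me}$ (resp.\ $(1-p_H^2)^{\tilde h+Me}$), is bounded below only for $e$ up to order $1/p_H$ (resp.\ $1/p_H^2$), not uniformly. But Lemma~\ref{lem:upper_bound} needs $Q([x,y],[x-1,y])\ge g(x)$ for every $x\ge\eta$, since the underlying recursion is iterated all the way to infinity. Your appeal to ``monotonicity of $g$'' for the side condition $\delta^x/g(x+1)\le\rho^x/g(\eta+1)$ is then circular: it holds for the simplified $g$, which is not globally valid, and it fails for the correct $g$, which is not monotone.

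The paper's remedy is to keep the slow decay factor inside $g$---e.g.\ $g(e)=\tfrac{\lambda_E-\lambda_H}{2}(1-p_Ep_H)^e(1-(1-p_E^2)^e)$ when $\lambda_H<\lambda_E$, and $g(e)=\tfrac{\lambda_H+\lambda_E}{4}(1-p_H^2)^e(1-(1-p_Ep_H)^e)$ when $\lambda_H>\lambda_E$---and then verify the side condition directly: since $g(e+1)\ge g(\eta+1)(1-p_Ep_H)^{e-\eta}$ (resp.\ with $(1-p_H^2)$), one has $\delta^e/g(e+1)\le(\delta/(1-p_Ep_H))^e\cdot g(\eta+1)^{-1}\le\rho^e/g(\eta+1)$ for $\rho=(1+\delta)/2$ and $p_H$ small. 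With this fix your argument goes through and coincides with the paper's.
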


\begin{proof}
The proof is based on a bound on the right-tail distribution of $E$ agents in the market. To do this, we will apply Lemma \ref{lem:upper_bound} with $[X(t), Y(t)] = [E^\BH(t), H^\BH(t)]$. Therefore, we find an upper-bound $f(e) = \lambda_E (1 - p_E^2)^e \geq Q^\BH([h,e],[h,e+1])$ on the upward transition \eqref{eq:transtion:3}. Similarly, we would like to find a lower-bound $g(e)$ on the downward transition \eqref{eq:transtion:4}, but we cannot find one for any $h \in \N$. Therefore we need to restrict our attention to some subset $S(e) \subset \N$.
Recall that $ Q^\BH([h,e],[h,e-1]) = \lambda_H (1 - p_H^2)^{h} (1 - (1 - p_E p_H)^{e}) + \lambda_E (1 - p_E p_H)^{h} (1 - (1 - p_E^2)^{e}))$.

\subsubsection*{Case $\lambda_H < \lambda_E$:}
$$S(e) = \left\{ h \in  \N \mid h \leq \frac{1}{p_E p_H} \left( \ln\left(\frac{\lambda_E}{\lambda_E - \lambda_H}\right) + \ln 2 \right) + e \right\}.$$

Applying Proposition \ref{prop:upper_H_prio_H_general} with $\xi(p_H) = \ln 2$, we get that $\P[H^\BH \not \in S(e)] \leq \frac{\gamma(p_H)^e}{1 - \gamma(p_H)} =c \delta^e$, where $c = \frac{\lambda_H + \lambda_E}{\lambda_E - \lambda_H}$, $\delta = \frac{2\lambda_H}{\lambda_E + \lambda_H}$.
Fact \ref{fact:4} implies that for all $h \in S(e)$, 
$(1 - p_E p_H)^h \geq \frac{\lambda_E - \lambda_H}{2 \lambda_E} (1 - p_E p_H)^e + O(p_H)$. Keeping only the second term in $Q^\BH([h,e], [h, e-1])$, we get the lower-bound:

\begin{equation*}
\label{eq:lower_bound}
Q^\BH([h,e], [h, e-1]) \geq  \frac{\lambda_E - \lambda_H}{2}(1 - p_E p_H)^e(1 - (1 - p_E^2)^e) + O(p_H) := g(e)
\end{equation*}
This yields:
$$\frac{f(e)}{g(e+1)}= \frac{2 \lambda_E }{(\lambda_E - \lambda_H)(1 - (1 - p_E^2)^{e+1})} \left( \frac{1 - p_E^2}{1 - p_E p_H}\right)^{e+1} + O(p_H).$$
We get for all $e \geq \frac{1}{\sqrt{p_H}}$, and $p_H$ small enough, $\frac{f(e)}{g(e+1)} \leq \delta$. Furthermore, for $\rho = \frac{1 + \delta}{2}$ and $p_H$ small enough,
$$ \frac{\delta^e}{g(e+1)} \leq \delta^e \frac{(1 - p_E p_H)^{ \frac{1}{\sqrt{p_H}}- e-1}}{g( \frac{1}{\sqrt{p_H}} +1)} \leq \left( \frac{\delta}{1 - p_E p_H}\right)^e \frac{1}{g( \frac{1}{\sqrt{p_H}} +1)}\leq \frac{\rho^e}{g( \frac{1}{\sqrt{p_H}} + 1)}$$

We can apply Lemma \ref{lem:upper_bound} which yields the desired bound:
\begin{equation*}
\begin{split}
\P \left[E^\BH \geq \frac{1}{\sqrt{p_H}}+ k \right] &\leq \frac{\rho^k}{1 - \rho} \left( 1 + c + \frac{c (k+1)}{g(\frac{1}{\sqrt{p_H}} + 1) - f(\frac{1}{\sqrt{p_H}})} \right) \\
&\leq  \frac{\rho ^k}{1 - \rho} \left(1 + c + \frac{c(k+1) (\lambda_H + \lambda_E)}{2\lambda_H g(1 / \sqrt{p_H} + 1)} \right). \\
&\leq  \frac{\rho ^k}{1 - \rho} \left(1 + c +  \frac{c(k+1) (\lambda_H + \lambda_E)}{\lambda_H(\lambda_E - \lambda_H)} \right). \\
 \end{split}
\end{equation*}
Where we used first the fact that $\frac{f(1/\sqrt{p_H})}{g(1/\sqrt{p_H}+ 1)} \leq \delta =  \frac{2\lambda_H}{\lambda_E + \lambda_H}$ and therefore $g(1/\sqrt{p_H}+ 1) - f(1/\sqrt{p_H}) \geq \frac{\lambda_E - \lambda_H}{\lambda_E + \lambda_H}$ and second the fact that $g(1/\sqrt{p_H} + 1) = \frac{\lambda_E - \lambda_H}{2} + O(p_H)$.

\subsubsection*{Case $\lambda_H > \lambda_E$:}
$$ S(e) = \left\{ h \in  \N \mid h \leq \frac{1}{p_H^2} \left( \ln\left(\frac{2 \lambda_H}{\lambda_E + \lambda_H}\right) + \ln 2 \right) + e \right\}.$$
Applying Proposition \ref{prop:upper_H_prio_H_general} with $\xi(p_H) = \ln(2)$, we have $\P[H^\BE \not \in S(e)] \leq \frac{\gamma(p_H)^e}{1 - \gamma(p_H)} = c \delta^e$, with $c = 3/2$ and $\delta = 1/3$.
Fact \ref{fact:1} implies that for all $h \in S(e)$, $(1 - p_H^2)^h \geq \frac{\lambda_H + \lambda_E}{4 \lambda_H}(1 - p_H^2)^e+ O(p_H^2) $. Keeping only the first term in $Q^\BH([h,e], [h, e-1])$, we get the lower-bound:
$$Q^\BH([h,e],[h, e-1]) \geq  \frac{\lambda_H + \lambda_E}{4\lambda_H}(1 - p_H^2)^e(1 - (1 - p_E p_H)^e) + O(p_H^2) = g(e).$$
This yields:
$$\frac{f(e)}{g(e+1)} \leq \frac{4 \lambda_H \lambda_E}{(\lambda_H + \lambda_E)(1 - (1 - p_E p_H)^e)}\left( \frac{1 - p_E^2}{1 - p_H^2}\right)^e + O(p_H^2).$$


Furthermore, with $\rho = 1/2$ we get that for $p_H$ small enough, $\frac{\delta}{1 - p_H^2} \leq \rho$. This leads to
$$ \frac{\delta^e}{g(e+1)} \leq \delta^e \frac{(1 - p_H^2)^{p_H^{-1/2} - e}}{g\left(p_H^{-1/2} +1\right)} \leq \left(\frac{\delta}{1 - p_H^2}\right)^e \frac{1}{g\left(p_H^{-1/2} +1\right)} \leq \frac{\rho^e}{g\left(p_H^{-1/2} + 1\right)}.$$

Therefore we can apply Lemma \ref{lem:upper_bound}:
\begin{equation*}
\begin{split}
\P \left[E^\BH \geq n_E+ k \right] &\leq \frac{\rho^k}{1 - \rho}\left( 1 + c +   \frac{c (k+1)}{g\left(p_H^{-1/2} + 1\right) - f(p_H^{-1/2})} \right)\\
&\leq \frac{\rho^k}{1 - \rho}\left( 1 + c +   \frac{c (k+1)3}{2g(p_H^{-1/2})} \right)\\
&\leq \frac{\rho^k}{1 - \rho}\left( 1 + c + \frac{6c(k+1)}{\lambda_H + \lambda_E} \right) \\
 \end{split}
\end{equation*}
Where we used the fact that for all $e \geq \frac{1}{\sqrt{p_H}}$, $\frac{f(e)}{g(e+1)} = o(p_H)$. Therefore for $p_H$ small enough, $\frac{f(e)}{g(e+1)} \leq \frac{1}{3}$ and therefore $g(e + 1) - f(e) \geq \frac{2}{3} g(e + 1) $.
Which concludes the proof.

\end{proof}

We can now prove Proposition \ref{prop:lower_H_prio_H}. The main idea is to apply Lemma \ref{lem:lower_bound}, in two cases separately, where in one case we have few ($\leq \frac{2}{\sqrt{p_H}}$) $E$ agents, and in the other case where we have many. Using Lemma \ref{lem:upper_E}, we can exponentially bound the second case.



\begin{proof}[Proof of Proposition \ref{prop:lower_H_prio_H}.]
We will apply Lemma \ref{lem:lower_bound}.
Consider $S = \left\{e \in \mathbb{R}, e \leq  p_H^{-1/2} + p_H^{-1/2}  \right\}.$
Using Lemma \ref{lem:upper_E} with $k = p_H^{-1/2}$, we get that $\P[E^\BH \not \in S] = o(p_H^4)$.

For $e \in S$, we have $(1 - p_E p_H)^e \geq 1 - p_E p_H \left( p_H^{-1/2} + p_H^{-1/2} +o(\sqrt{p_H}) \right)= 1 - 2p_E \sqrt{p_H} +o(\sqrt{p_H})$.
Taking $f(h) = \lambda_H (1 - p_H^2)^h \left(1 - 2p_E \sqrt{p_H}\right) + o(\sqrt{p_H})$, we have for $e \in S$, $Q^\BH([h,e], [h+1,e]) \geq f(h)$. Let $g(h) = Q^\BH([h,e], [h-1,e]) = \lambda_H (1 - (1 - p_H^2)^h) + \lambda_E (1 - (1 - p_E p_H)^h)$, and note that $f(h)$ is non-increasing and $g(h)$ is non-decreasing. We have:
$$\frac{g(h+1)}{f(h)} = \frac{\lambda_H + \lambda_E - \lambda_H (1 - p_H^2)^{h+1} - \lambda_E (1 - p_E p_H)^{h+1}}{\lambda_H (1 - p_H^2)^h \left(1 - 2p_E \sqrt{p_H} + o(\sqrt{p_H})\right)} $$

\noindent{\bf In the case $\lambda_H < \lambda_E$},
Take $\eta =  \frac{1}{p_E p_H} \left( \ln\left(\frac{\lambda_E}{\lambda_E - \lambda_H}\right) - c'_1\sqrt{p_H} \right)$. Using Facts \ref{fact:3} and \ref{fact:4}, we have $(1 - p_H^2)^\eta = 1 + O(p_H)$ and $(1 - p_Ep_H)^\eta = \frac{ \lambda_E - \lambda_H}{\lambda_E} e^{c'_1 \sqrt{p_H}} + O(p_H)= \frac{\lambda_E - \lambda_H}{\lambda_E} (1  + c'_1\sqrt{p_H} + o(\sqrt{p_H}))$, therefore:
\begin{equation*}
\begin{split}
\frac{g(\eta+1)}{f(\eta)}  &= \frac{\lambda_H + \lambda_E - \lambda_H - (\lambda_E - \lambda_H) (1 +c'_1\sqrt{p_H}) + o(\sqrt{p_H}) }{\lambda_H \left(1 - 2p_E \sqrt{p_H}\right) + O(p_H)} \\
&= 1 - \left( \frac{\lambda_E - \lambda_H}{\lambda_H} c'_1 - 2 p_E \right) \sqrt{p_H} + o(\sqrt{p_H}).\\
&= 1 - \sqrt{p_H} + o(\sqrt{p_H}).
 \end{split}
 \end{equation*}
Where we fixed $c'_1 = \frac{\lambda_H (1 + 2p_E)}{\lambda_E - \lambda_H}$. Using Lemma \ref{lem:lower_bound} with $[X(t), Y(t)] = [H^\BH(t), E^\BH(t)]$, $k = p_H^{-3/4}$ and $\rho = \frac{g(\eta+1)}{f(\eta)}$, we get $\rho^k = o(p_H^2)$ and:
\begin{equation*}
\begin{split}
 \P[H^\BH \leq \eta - k ] &\leq \eta \cdot o(p_H^4) \left(1 + \frac{1 }{f(\eta) - g(\eta+1)}\right)  + \frac{\rho^k}{1 - \rho}  \\
  &\leq \eta  \cdot o(p_H^4) \left(1 + \frac{1}{f(\eta) \sqrt{p_H} + o(\sqrt{p_H})}\right)  + \frac{o(p_H^2)}{\sqrt{p_H} + o(\sqrt{p_H})} \\
  &=  o(p_H).
 \end{split}
 \end{equation*}

 {Taking $c_1 = c'_1 + p_E$, this enables us to conclude that $\P[H^\BH \leq \frac{1}{p_E p_H} \left( \ln\left(\frac{\lambda_E}{\lambda_E - \lambda_H}\right) - c_1p_H^{1/4} \right) ] \leq o(p_H)$.}

\noindent{\bf In the case $\lambda_H > \lambda_E$},
Let $\eta =  \frac{1}{p_H^2} \left(\ln\left(\frac{2 \lambda_H}{\lambda_H + \lambda_E}\right) - c'_2 p_H^{1/4}\right)$. Using Facts \ref{fact:1} and \ref{fact:2}, we have $(1 - p_H^2)^\eta = \frac{\lambda_H + \lambda_E}{2 \lambda_H} e^{c'_2 p_H^{1/4}} + O(p_H^2)$ and $(1 - p_Ep_H)^\eta = O(p_H^2)$. This implies that:
\begin{equation*}
  \begin{split}
\rho = \frac{g(\eta+1)}{f(\eta)}  &= \frac{ (\lambda_H + \lambda_E) - \frac{(\lambda_H + \lambda_E)}{2} e^{c'_2 \sqrt{p_H}} + O(p_H^2) }{\frac{(\lambda_H + \lambda_E)}{2}e^{c'_2 \sqrt{p_H}} \left(1 - 2p_E \sqrt{p_H}\right) + O(p_H^2)} \\
&= \frac{2 - (1 + c'_2 \sqrt{p_H})}{(1 + c'_2 \sqrt{p_H})(1 - 2 p_E \sqrt{p_H})} + o(\sqrt{p_H}) \\
&= 1 - (2 c'_2 - 2 p_E) \sqrt{p_H} + o(\sqrt{p_H})\\
&=1 - \sqrt{p_H} + o(\sqrt{p_H}).
 \end{split}
\end{equation*}
Where we have chosen $c'_2 = \frac{1 + {2}p_E}{2}$.
Taking $k = p_H^{-3/4}$, we have $ \rho^k = o(p_H^2)$. Applying Lemma \ref{lem:lower_bound}, we get that for all :
\begin{equation*}
\begin{split}
 \P[H^\BH \leq \eta - k ]  &\leq \eta \cdot o(p_H^4) \left( 1 + \frac{1} {f(\eta) - g(\eta+1)} \right) + \frac{\rho^k}{1 - \rho} \\
  &= \eta \cdot o(p_H^3)  + \frac{o(p_H^2)}{f(\eta) \sqrt{p_H} + o(\sqrt{p_H})} \\
  &= o(p_H)
 \end{split}
 \end{equation*}

 {Taking $c_2 = c'_2 + 1$, this enables us to conclude that $\P[H^\BH \leq \frac{1}{p_H^2} \left(\ln\left(\frac{2 \lambda_H}{\lambda_H + \lambda_E}\right) - c_2\sqrt{p_H} \right) ] \leq o(p_H)$.}

\end{proof}

\section{Missing proofs for \textit{ BilateralMatch(E)} Policy}
\label{sec:bilatE:proof}
The proof of Theorem \ref{th:bilat_prioE} relies on a concentration result on the left tail of the distribution of $H^\BE$ (Proposition \ref{prop:lower_H_prio_E}), and a coupling argument to upper-bound $\E[H^\BE]$ (Proposition \ref{prop:upper_H_prio_E}).

\begin{proposition}\emph{[Lower Bound]}
\label{prop:lower_H_prio_E}
Under $\BE$ in steady-state,
\begin{itemize}
\item[-] If $\lambda_H < \lambda_E$, there exist a function $\gamma(p_H) = 1 - \sqrt{p_H} + o(\sqrt{p_H})$ and a constant $c_5 \geq 0$ such that for all $k \geq 0$:
$$ \P\left[H^\BE \leq   \frac{1}{p_E p_H} \left( \ln\left(\frac{\lambda_E}{\lambda_E - \lambda_H}\right) - c_5 p_H^{1/4}\right) -k \right]  \leq \frac{\gamma(p_H)^k}{1- \gamma(p_H)}.$$
\item[-] If $\lambda_H > \lambda_E$, there exist a function $\gamma'(p_H) = 1 - \sqrt{p_H} + o(\sqrt{p_H})$  and a constant $c_6 \geq 0$ such that for all $k \geq 0$:
$$ \P\left[H^\BE \leq  \frac{1}{p_H^2} \left( \ln\left(\frac{2 \lambda_H}{\lambda_H + \lambda_E}\right)- c_6 \sqrt{p_H} \right) - k \right]  \leq \frac{\gamma'(p_H)^k}{1 - \gamma'(p_H)}.$$
\end{itemize}
\end{proposition}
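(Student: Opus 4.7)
The plan is to adapt the argument of Proposition \ref{prop:lower_H_prio_H} to the $\BE$ policy: apply Lemma \ref{lem:lower_bound} to the process $[H_t^\BE, E_t^\BE]$ after first establishing a tail bound on $E^\BE$. A helpful structural observation is that under $\BE$ the vertical transitions \eqref{eq:transtionE:3}--\eqref{eq:transtionE:4} on $E$ depend on $h$ only through factors bounded above by $1$, so the dynamics of $E^\BE$ can be controlled independently of $h$. This makes the analogue of Lemma \ref{lem:upper_E} for $\BE$ significantly cleaner than for $\BH$.

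Concretely, I would first note that $Q^\BE([h,e],[h,e+1]) \leq \lambda_E(1-p_E^2)^e$ and $Q^\BE([h,e],[h,e-1]) \geq \lambda_E(1-(1-p_E^2)^e)$ for every $h \in \N$ (by keeping only the second summand in \eqref{eq:transtionE:4}). Applying Lemma \ref{lem:upper_bound} to $[E_t^\BE, H_t^\BE]$ with the trivial conditioning $S(e) = \N$ (so $c = \delta = 0$) yields constants $e_0, C$ and $\zeta < 1$, all independent of $p_H$, such that $\P[E^\BE \geq e_0 + k] \leq C\zeta^k$ for every $k \geq 0$. Setting $e^* = e_0 + p_H^{-1/2}$, the quantity $\epsilon := \P[E^\BE > e^*]$ therefore decays faster than any polynomial in $p_H$.

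Next I apply Lemma \ref{lem:lower_bound} with $[X,Y] = [H^\BE, E^\BE]$ and the set $S = \{e \in \N : e \leq e^*\}$. For $e \in S$ one has $(1-p_Ep_H)^e \geq 1 - 2p_E\sqrt{p_H} + o(\sqrt{p_H})$, giving the lower bound $f(h) := \lambda_H(1-p_H^2)^h(1-2p_E\sqrt{p_H}) + o(\sqrt{p_H})$ on the rightward rate \eqref{eq:transtionE:1}. For the leftward rate \eqref{eq:transtionE:2}, replacing $(1-p_Ep_H)^e$ and $(1-p_E^2)^e$ by $1$ yields the universal upper bound $g(h) := \lambda_H(1-(1-p_H^2)^h) + \lambda_E(1-(1-p_Ep_H)^h)$, which is exactly the function $g$ used in the proof of Proposition \ref{prop:lower_H_prio_H}. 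Setting $\eta$ exactly as in that proof --- $\eta = \frac{1}{p_Ep_H}(\ln\tfrac{\lambda_E}{\lambda_E-\lambda_H} - c'_5 p_H^{1/4})$ when $\lambda_H < \lambda_E$, respectively $\eta = \frac{1}{p_H^2}(\ln\tfrac{2\lambda_H}{\lambda_H+\lambda_E} - c'_6\sqrt{p_H})$ when $\lambda_H > \lambda_E$ --- the same algebra as in Appendix \ref{subsec:proof:pro2} gives $g(\eta+1)/f(\eta) \leq 1 - \sqrt{p_H} + o(\sqrt{p_H})$ for appropriate constants $c'_5, c'_6$.

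The main obstacle is strengthening the single-threshold $o(p_H)$ bound of Proposition \ref{prop:lower_H_prio_H} into the uniform-in-$k$ geometric bound required here. Lemma \ref{lem:lower_bound} produces a bound of the form $\eta\epsilon(1 + 1/(f(\eta)(1-\rho))) + \rho^k/(1-\rho)$, and the first summand is independent of $k$. Since $\eta = O(1/p_H)$, $1/(1-\rho) = O(1/\sqrt{p_H})$, and $\epsilon$ decays super-polynomially, this first summand is $o(p_H)$. To absorb it while keeping the final statement clean, I would choose $\rho$ slightly smaller than the target $\gamma(p_H) = 1 - \sqrt{p_H} + o(\sqrt{p_H})$; the gap $\gamma^k/(1-\gamma) - \rho^k/(1-\rho)$ is minimized at $k = 0$ (where it is of order $1/\sqrt{p_H}$) and nondecreasing in $k$, which comfortably accommodates the $o(p_H)$ error for every $k \geq 0$. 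Rolling the resulting slack into the constants $c_5 \geq c'_5$ and $c_6 \geq c'_6$ delivers the stated bound.
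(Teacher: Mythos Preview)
Your strategy coincides with the paper's own proof: observe that the rightward rate under $\BE$ equals that under $\BH$, that the leftward rate under $\BE$ is dominated by the one under $\BH$ (since the extra factors $(1-p_Ep_H)^e$ and $(1-p_E^2)^e$ in \eqref{eq:transtionE:2} are at most $1$), and that under $\BE$ the tail of $E^\BE$ can be controlled uniformly in $h$ via Lemma~\ref{lem:upper_bound} with the trivial set $S(e)=\N$ (this is exactly the paper's Lemma~\ref{lem:upper_E_prioE}). With those ingredients, the application of Lemma~\ref{lem:lower_bound} proceeds verbatim as in the proof of Proposition~\ref{prop:lower_H_prio_H}. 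So the core of your argument is correct and matches the paper.

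Your final ``absorption'' step, however, contains a false claim. You assert that with $\rho<\gamma<1$ the gap
\[
h(k)\;=\;\frac{\gamma^k}{1-\gamma}-\frac{\rho^k}{1-\rho}
\]
is minimized at $k=0$ and is nondecreasing in $k$. In fact $h(k)>0$ for every $k\ge 0$, but $h(k)\to 0$ as $k\to\infty$, so $h$ cannot be nondecreasing; it first increases, attains a maximum, and then decays to zero. In particular, for $k$ near $\eta$ one has $h(k)\asymp \gamma^{\eta}/(1-\gamma)$, which is only of order $e^{-c/\sqrt{p_H}}/\sqrt{p_H}$, so whether this still dominates the residual term $\eta\,\epsilon\,(1+1/(f(\eta)(1-\rho)))$ depends on the precise exponential rate of $\epsilon$ versus the constants in $\eta$ and $\gamma$; your argument does not verify this comparison. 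The paper's own proof is equally silent on this point: it simply invokes the proof of Proposition~\ref{prop:lower_H_prio_H}, which establishes only the single-threshold $o(p_H)$ bound, and the only place Proposition~\ref{prop:lower_H_prio_E} is used (in the proof of Theorem~\ref{th:bilat_prioE}) is at the specific value $k=p_H^{-3/4}$. So the cleanest repair is either to weaken the statement to that single $k$, or to take $e^*$ large enough (e.g.\ of order $1/p_H$ rather than $1/\sqrt{p_H}$) that $\epsilon$ decays faster than $\gamma^{\eta}$, making $h(k)\ge C_1$ checkable on the whole range $0\le k\le\eta$.
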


\begin{restatable}{proposition}{propUpperHprioE}\emph{[Upper-bound]}
\label{prop:upper_H_prio_E}
Under $\BE$ and in steady-state,
\begin{itemize}
\item[-] If $\lambda_H < \lambda_E$, then $\E[H^\BE]   \leq \frac{\ln\left(\frac{2\lambda_E}{\lambda_E - \lambda_H}\right)}{p_E p_H} + o\left( \frac{1}{p_H}\right)$
\item[-] If $\lambda_H > \lambda_E$, then $\E[H^\BE]   \leq \frac{1}{p_H^2} \ln \left( \frac{2 \lambda_H}{\lambda_H + \lambda_E} \right) + o\left( \frac{1}{p_H^2}\right)$.
\end{itemize}
\end{restatable}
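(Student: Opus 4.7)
The proof splits into two cases according to the sign of $\lambda_E - \lambda_H$, handled by quite different techniques.

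For the case $\lambda_H > \lambda_E$, I would closely mirror the proof of Proposition \ref{prop:upper_H_prio_H_general}, applying Lemma \ref{lem:upper_bound} to the $2$-dimensional CTMC underlying $\BE$. The new wrinkle relative to $\BH$ is that the leftward rate \eqref{eq:transtionE:2} now carries factors $(1-p_Ep_H)^e$ and $(1-p_E^2)^e$ that depend on the number of $E$ agents. A concentration bound for $E^\BE$ analogous to Lemma \ref{lem:upper_E} (in fact easier here, because prioritizing $E$ makes them depart even faster) would give $E^\BE = O(1/\sqrt{p_H})$ with exponentially decaying tails; on that high-probability event both factors are $1 - o(1)$, so the leftward rate admits the lower bound $g(h) = (1-o(1))[\lambda_H(1-(1-p_H^2)^h) + \lambda_E(1-(1-p_Ep_H)^h)]$, while $f(h) = \lambda_H(1-p_H^2)^h$ upper-bounds the rightward rate \eqref{eq:transtionE:1}. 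At $\eta = (\ln(2\lambda_H/(\lambda_H+\lambda_E)) + c\sqrt{p_H})/p_H^2$, Facts \ref{fact:1} and \ref{fact:2} yield $f(\eta)/g(\eta+1) = 1 - \Omega(\sqrt{p_H})$, and the summation $\E[H^\BE] = \sum_i \P[H^\BE \geq i]$ from Appendix \ref{subsec:proof:thm1} produces the stated bound.

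For the case $\lambda_H < \lambda_E$, I would introduce a coupled $1$-dimensional process $\tBE$ in which every arriving $E$ agent that fails to match upon arrival is immediately relabeled as an $H$ agent. All agents in $\tBE$ are then of type $H$, and $H^\tBE$ is a $1$-d CTMC with upward rate $\lambda_H(1-p_H^2)^h + \lambda_E(1-p_Ep_H)^h$ and downward rate $\lambda_H(1-(1-p_H^2)^h) + \lambda_E(1-(1-p_Ep_H)^h)$. The coupling is built via common uniform random variables on ordered pairs of arrivals, declaring a directed edge realized in each process iff the uniform is below the relevant compatibility probability; since relabeling only turns $E$ into $H$ and $p_H \leq p_E$, the edge probabilities in $\tBE$ are pointwise no larger than in $\BE$. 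An induction on arrival events then yields the pathwise domination $H^\BE(t) + E^\BE(t) \leq H^\tBE(t)$, hence $\E[H^\BE] \leq \E[H^\tBE]$. Setting the drift of $H^\tBE$ to zero and using $(1-p_H^2)^h \to 1$ for $h = \Theta(1/(p_Ep_H))$ gives $(1-p_Ep_H)^{h^*}\to (\lambda_E-\lambda_H)/(2\lambda_E)$, i.e.\ $h^* \sim \ln(2\lambda_E/(\lambda_E-\lambda_H))/(p_Ep_H)$. A $1$-dimensional conservation-of-flow estimate in the style of Proposition \ref{prop:upper_chains} shows $\pi(h+1)/\pi(h) \leq 1 - c\sqrt{p_H} + o(\sqrt{p_H})$ for $h \geq h^* + 1/\sqrt{p_H}$, and summing the resulting geometric tail gives the advertised bound $\E[H^\tBE] \leq \ln(2\lambda_E/(\lambda_E-\lambda_H))/(p_Ep_H) + o(1/p_H)$.

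The main obstacle is the pathwise domination used in the second case. The delicate point is that at time $t$ the active agents in $\BE$ and $\tBE$ need not be the same set, even when $|\BE(t)| = |\tBE(t)|$; the invariant must also be maintained when $\tBE$ matches through an agent that is no longer present in $\BE$. Any such discrepancy is, however, paid for by an earlier $\BE$-match that was not paralleled in $\tBE$, so the cumulative match counts satisfy $M_\BE(t) \geq M_\tBE(t)$. Formalizing this bookkeeping---essentially an invariant asserting that the excess matches in $\BE$ over $\tBE$ cover all agent discrepancies---is the technical heart of the argument and requires careful case analysis over the four combinations of match/no-match in the two processes at each arrival event.
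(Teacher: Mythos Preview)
Your proposal contains a genuine error in the case $\lambda_H > \lambda_E$. The leftward rate \eqref{eq:transtionE:2} carries the factor $(1-p_E^2)^e$ in its second term, and this factor is \emph{not} $1-o(1)$ when $e = O(1/\sqrt{p_H})$: since $p_E$ is a fixed constant, $(1-p_E^2)^{1/\sqrt{p_H}} \to 0$. If you drop that second term and bound $g(h)$ by the first term alone, the balance point of $f(\eta)/g(\eta+1)=1$ shifts to $(1-p_H^2)^\eta = 1/2$, i.e.\ $\eta \sim (\ln 2)/p_H^2$, which is strictly larger than the target $\ln\bigl(2\lambda_H/(\lambda_H+\lambda_E)\bigr)/p_H^2$. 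So the direct two-dimensional argument gives the right scaling but not the right constant. This is precisely the dependency on $e$ in the leftward rate that the paper flags in Section~\ref{subsec:BE} as making $\BE$ harder than $\BH$.

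The paper's solution is to use the coupling with $\tBE$ for \emph{both} regimes, not just for $\lambda_H < \lambda_E$. Once you have $\E[H^\BE] \leq \E[H^{\tBE}] + 1$, the one-dimensional chain $H^{\tBE}$ is analyzed uniformly, and the balance equation $\rho(\eta)=1$ directly yields the constant $\ln\bigl(2\lambda_H/(\lambda_H+\lambda_E)\bigr)/p_H^2$ when $\lambda_H>\lambda_E$ (and $\ln\bigl(2\lambda_E/(\lambda_E-\lambda_H)\bigr)/(p_Ep_H)$ when $\lambda_H<\lambda_E$). So your coupling idea is exactly right---just apply it in both cases.

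On the coupling itself: your edge-level construction via common uniforms and the ``excess matches'' bookkeeping is more elaborate than needed. The paper avoids tracking agent identities entirely. It couples only the Bernoulli outcome ``match/no-match'' at each arrival, and only when the two totals are within one of each other; when $H_k^{\tBE}=h \leq H_k^\BE + E_k^\BE \leq h+1$, the no-match probability in $\BE$ is at most $(1-p_H^2)^h$ (resp.\ $(1-p_Ep_H)^h$) for an $H$ (resp.\ $E$) arrival, so one can write it as a product of two Bernoullis, the first of which governs $\tBE$. This yields the invariant $H_k^\BE + E_k^\BE \leq H_k^{\tBE} + 1$ by a short induction, with no need to align agent sets across the two processes.
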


%


\begin{proof}[Proof of Theorem \ref{th:bilat_prioE}]
We will first compute a lower bound for $\E[H^\BE] $: Applying Markov inequality to $\E[H^\BE] $ and  using Proposition \ref{prop:lower_H_prio_E} with $k = p_H^{{-}3/4}$, we get the following lower-bound on $\E[H^\BE]$:

\begin{itemize}
\item[-] If $\lambda_H < \lambda_E$, then $\E[H^\BE] \geq \frac{\ln\left(\frac{\lambda_E}{\lambda_E - \lambda_H}\right)}{p_E p_H} + o\left( \frac{1}{p_H} \right)$.
\item[-] If $\lambda_H > \lambda_E$, then $\E[H^\BE] \geq \frac{\ln\left(\frac{2 \lambda_H}{\lambda_H + \lambda_E} \right)}{p_H^2} + o \left( \frac{1}{p_H^2} \right)$.
\end{itemize}

Using Proposition \ref{prop:upper_H_prio_E}, we can get the desired upper-bounds for  $\E[H^\BE] $. We can conclude using Little's law: $w_H = \frac{\E[H^\BE]}{\lambda_H}$.

\end{proof}

\subsection{Proof of Proposition \ref{prop:lower_H_prio_E}.}
In order to prove  Proposition \ref{prop:lower_H_prio_E}, we first need a concentration result on $E^\BE$ agents. This is stated in Lemma \ref{lem:upper_E_prioE}.

\begin{lemma}
\label{lem:upper_E_prioE}
Under $\BE$ in steady-state, for any $\alpha \in [0,1]$, let $n_E(\alpha) = \frac{\ln(\alpha/(1 + \alpha(1 - p_E^2))}{\ln(1 - p_E^2))}$. For any $k \geq 0$, we have:
$$\P \left[E^\BE \geq n_E(\alpha) + k \right] \leq  \frac{\alpha^k}{1 - \alpha}.$$
Also, $(1 - p_E^2)^{n_E(\alpha)} = \frac{\alpha}{1 + \alpha(1 - p_E^2)}$.
\end{lemma}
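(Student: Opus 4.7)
The plan is to apply the general upper-bound concentration result of Lemma~\ref{lem:upper_bound} with the roles of the two coordinates swapped: take $[X_t, Y_t] = [E_t^\BE, H_t^\BE]$, so that we are bounding the right tail of the number of $E$-agents. The key observation that makes this clean is that, under policy $\BE$, the downward rate of $E_t^\BE$, namely $Q^\BE([h,e],[h,e-1]) = \lambda_H(1-(1-p_Ep_H)^e) + \lambda_E(1-(1-p_E^2)^e)$, depends only on $e$ (not on $h$), and the upward rate $\lambda_E(1-p_Ep_H)^h(1-p_E^2)^e$ is uniformly bounded above in $h$. Hence we will not need any auxiliary restriction on $Y = H^\BE$; we simply take $S(e) = \N$ and set $c = 0$ (with any admissible $\delta$) in Lemma~\ref{lem:upper_bound}, which kills the second term in its conclusion and leaves only $\rho^k/(1-\rho)$, matching the target bound $\alpha^k/(1-\alpha)$.

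Concretely, I set $f(e) = \lambda_E(1-p_E^2)^e$ as an upper bound for the upward rate (using $(1-p_Ep_H)^h \le 1$) and $g(e) = \lambda_E(1-(1-p_E^2)^e)$ as a lower bound for the downward rate (dropping the non-negative $\lambda_H$-term). Then
\[
\frac{f(e)}{g(e+1)} = \frac{(1-p_E^2)^e}{1-(1-p_E^2)^{e+1}}.
\]
Rearranging $\frac{(1-p_E^2)^e}{1-(1-p_E^2)^{e+1}} \le \alpha$ is equivalent to $(1-p_E^2)^e\bigl(1 + \alpha(1-p_E^2)\bigr) \le \alpha$, i.e.\ $(1-p_E^2)^e \le \alpha/(1+\alpha(1-p_E^2))$. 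Taking logarithms (and flipping the inequality because $\ln(1-p_E^2)<0$), this is precisely $e \ge n_E(\alpha)$. So Condition~2 of Lemma~\ref{lem:upper_bound} is met for $\eta = \lceil n_E(\alpha)\rceil$ with $\rho = \alpha$.

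Applying Lemma~\ref{lem:upper_bound} with $c = 0$ (Condition~1 is trivial because $S(e) = \N$, and the auxiliary inequality $\delta^x/g(x+1) \le \rho^x/g(\eta+1)$ is vacuous once $c = 0$) then yields
\[
\P[E^\BE \ge n_E(\alpha) + k] \le \frac{\alpha^k}{1-\alpha},
\]
for every $k \ge 0$, which is the desired bound. The auxiliary identity $(1-p_E^2)^{n_E(\alpha)} = \alpha/(1+\alpha(1-p_E^2))$ is immediate by exponentiating the definition of $n_E(\alpha)$ and will be recorded at the end for convenience in subsequent proofs. The only mild technical point is handling the fact that $n_E(\alpha)$ need not be an integer; since $E^\BE$ is integer-valued, this is absorbed by using $\eta = \lceil n_E(\alpha)\rceil$ and noting that the ratio bound $f(e)/g(e+1)\le \alpha$ holds for every integer $e \ge \lceil n_E(\alpha)\rceil$. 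No genuine obstacle is expected — the work is in verifying that Lemma~\ref{lem:upper_bound} applies with the minimal choices $S(e) = \N$ and $c = 0$.
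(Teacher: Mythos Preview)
Your proposal is correct and follows essentially the same approach as the paper: applying Lemma~\ref{lem:upper_bound} with $[X,Y] = [E^\BE, H^\BE]$, $S(e) = \N$, $c = 0$, $f(e) = \lambda_E(1-p_E^2)^e$, $g(e) = \lambda_E(1-(1-p_E^2)^e)$, and solving $f(\eta)/g(\eta+1) = \alpha$ to identify $\eta = n_E(\alpha)$. You are slightly more careful than the paper in noting why the auxiliary hypothesis $\delta^x/g(x+1)\le \rho^x/g(\eta+1)$ is vacuous once $c=0$, and in handling the integrality of $n_E(\alpha)$.
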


\begin{proof}
In the same way that upper-bounding $H$ agents when they get priority is easy because waiting $E$ agents can be ignored, upper-bounding the number of $E$ agents is easy when they get the priority because $H$ agents can be ignored. 
We can get the following bounds on the transition probabilities: $Q^\BE([h,e], [h,e+1]) \leq \lambda_E(1 - p_E^2)^{e} = f(e)$, and $Q^\BE([h,e], [h,e-1]) \geq  \lambda_E (1 - (1 - p_E^2)^{e}) = g(e)$ which leads to:
$$\frac{f(e)}{g(e+1)} = \frac{ (1 - p_E^2)^{e}}{1 - (1 - p_E^2)^{e+1}}.$$
Setting $\eta = n_E(\alpha)$ to be the solution to $\frac{f(n)}{g(n+1)} = \alpha$, and applying Lemma \ref{lem:upper_bound} with $[X(t),Y(t)] = [E(t), H(t)]$, and $S(e) = \N$ for all $h$, $c = 0$ and $\delta = 0$, we get the desired result.
\end{proof}

We can now prove Proposition \ref{prop:lower_H_prio_E}.

%
%
\begin{proof}[Proof of Proposition \ref{prop:lower_H_prio_E}]
Notice that $Q^\BE([h,e], [h+1, e]) = Q^\BH([h,e], [h+1, e])$, therefore the function $f(h) =  \lambda_H (1 - p_H^2)^h \left(1 - 2p_E \sqrt{p_H}\right) + o(\sqrt{p_H})$ used in the proof of Proposition \ref{prop:lower_H_prio_H} remains a lower bound for $e \in S = \left\{e \leq {2} p_H^{-1/2} \right\}$. Furthermore, $Q^\BE([h,e], [h-1, e]) \leq Q^\BH([h,e], [h-1, e])$ therefore the function $g(h) = \lambda_H (1 - (1 - p_H^2)^h) + \lambda_E (1 - (1 - p_E p_H)^h)$ remains an upper bound. Therefore, the proof is exactly the same as the proof for the lower bounds for Proposition \ref{prop:lower_H_prio_H}, except that instead of Lemma \ref{lem:upper_E} we use  Lemma \ref{lem:upper_E_prioE} in the special case  $n_E(\alpha) = p_H^{-1/2}$ with $S = \left\{e \in \mathbb{R}, e \leq 2 p_H^{-1/2} \right\}$, and $k = p_H^{-1/2}$. We still have, $\P[E^\BH \not \in S] = o(p_H^4)$.
\end{proof}

\subsection{Proof of Propostion \ref{prop:upper_H_prio_E}}
\label{app:prioE_coupling}

This proof is based on a coupling argument: instead of analysing the CTMC resulted from $\BE$, we analyse the CTMC underlying another process that we call $\widetilde{\BE}$.
$\widetilde{\BE}$ works similarly to $\BE$ with one crucial difference: each time an $E$ agent arrives which cannot be matched immediately, it changes type and becomes an $H$ agent and then joins the market. First note that under $\widetilde{\BE}$, no $E$ agent joins the market, making its underlying CTMC a $1$-dimensional Markov chain. Proving that the stochasitc process underlying $\widetilde{\BE}$ is a positive recurrent CTMC, and therefore it reaches steady-state is similar to the proof of positive recurrence of $\BE$ (See Claim \ref{cl:existence:Prior_E}) and therefore omitted.

Using a coupling argument we show that in steady-state number of $H$ agents under $\BE$ can be upper-bounded by the number of $H$ agents under $\widetilde{\BE}$.
Our motivation behind defining  $\widetilde{\BE}$ is that in $\BE$, unmatched $E$ agents in the market are competing against $H$ agents over whom they have priority. The idea is that by turning unmatched $E$ agents into $H$ ones, we expect to have  more $H$ agents waiting.
Let $H^{\widetilde{\BE}}$ be the random number of $H$ agents under $\widetilde{\BE}$ in steady-state. First in the next lemma we show that $\E[H^{\BE}] \leq \E[H^{\widetilde{\BE}}] + 1$, then we compute an upper-bound on $\E[H^{\widetilde{\BE}}]$.


\begin{lemma}
\label{lem:coup}
$\E[H^{\BE}] \leq \E[H^{\widetilde{\BE}}] + 1$.
\end{lemma}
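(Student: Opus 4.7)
The plan is to establish the inequality via a coupling of the two continuous-time Markov chains on a common probability space. Both chains are driven by the same merged Poisson arrival stream (shared arrival epochs and shared types), and the edge-realization randomness is shared so that any cycle compatibility realized in $\widetilde{\BE}$ on a pair of agents is also realized in $\BE$ on the same pair. The latter is automatic because a relabeled $E\to H$ agent in $\widetilde{\BE}$ has a strictly smaller outgoing-edge threshold ($p_H$) than its copy in $\BE$ retains (threshold $p_E$), so a shared uniform edge variable falling below $p_H$ also falls below $p_E$. This makes every cycle feasible in $\widetilde{\BE}$ automatically feasible in $\BE$ for the same agent pair.

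Under this coupling, the target is the pathwise invariant
\[
H_t^{\BE} + E_t^{\BE} \;\le\; H_t^{\widetilde{\BE}} + 1 \qquad \text{for all } t \ge 0,
\]
starting both chains from the empty state. I would verify it inductively across arrival epochs by a case analysis over the four combinations of match/no-match in the two chains. The crucial case is when the invariant is tight: the shared edge randomness combined with the $E$-first priority in $\BE$ is used to exclude the dangerous event ``$\widetilde{\BE}$ matches while $\BE$ does not'', while the one-unit slack absorbs the remaining boundary configurations in which the two chains' myopic policies are forced to pick different matching partners (because their pool compositions have drifted apart after some $E$-arrival went unmatched in $\widetilde{\BE}$).

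Passing to steady state and taking expectations on both sides of the pathwise bound yields $\E[H^{\BE}] + \E[E^{\BE}] \le \E[H^{\widetilde{\BE}}] + 1$; dropping the nonnegative term $\E[E^{\BE}]$ produces the lemma. The main obstacle is the inductive step: once the two chains' pools differ in composition, the myopic policies may pick different agents to match and individual match events cannot be perfectly coupled, so the one-unit buffer must be shown to always suffice. A fallback strategy, if the direct pathwise invariant proves too brittle to maintain across every configuration, is to compare the expected drifts of the two chains via a generator-comparison argument that yields the same expectation inequality without requiring a pathwise bound.
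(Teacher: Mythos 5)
Your high-level strategy is the same as the paper's: couple the two chains on shared arrival epochs and types, prove the pathwise invariant $H_k^{\BE}+E_k^{\BE}\le H_k^{\widetilde{\BE}}+1$ by induction, pass to steady state, and drop $\E[E^{\BE}]\ge 0$. The gap is in the coupling device you propose for the inductive step. Sharing per-pair edge uniforms so that ``any cycle feasible in $\widetilde{\BE}$ is feasible in $\BE$ on the same pair'' only makes sense while the two pools contain the same agents; as soon as one system has matched and removed an agent that the other still holds (which is exactly how the slack of $1$ arises), the pairwise domination no longer translates into the implication you need, namely that $\widetilde{\BE}$ finding \emph{some} match forces $\BE$ to find \emph{some} match. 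You flag this yourself, but neither the per-pair coupling nor the vague generator-comparison fallback resolves it (the two chains do not even live on the same state space). Also, the $E$-first priority plays no role here: only whether a match exists matters, not which one is selected, since in either case exactly one agent departs.

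The paper closes the gap by abandoning edge-level coupling entirely. At each arrival it couples only the aggregate ``no match found'' indicators, drawing two \emph{independent} Bernoullis $\mathbb{B}_1,\mathbb{B}_2$ with parameters $(1-p_H^2)^{h}$ and $(1-p_H^2)^{H_k^{\BE}}(1-p_Hp_E)^{E_k^{\BE}}(1-p_H^2)^{-h}$ (for an $H$ arrival, with $h=H_k^{\widetilde{\BE}}$), and declaring ``$\widetilde{\BE}$ fails'' $=\{\mathbb{B}_1=1\}$ and ``$\BE$ fails'' $=\{\mathbb{B}_1\mathbb{B}_2=1\}$. The second parameter is a legitimate probability precisely when $H_k^{\BE}+E_k^{\BE}\ge h$ and $p_E\ge p_H$, since then $(1-p_H^2)^{H_k^{\BE}}(1-p_Hp_E)^{E_k^{\BE}}\le(1-p_H^2)^{H_k^{\BE}+E_k^{\BE}}\le(1-p_H^2)^{h}$; this inequality is the whole content of the comparison, and it automatically excludes the dangerous event ``$\widetilde{\BE}$ matches but $\BE$ does not.'' Redrawing fresh randomness at every arrival is justified because each potential edge is examined at most once (Observation \ref{obs:onlynodes}), so the failure probabilities depend only on the current counts. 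When the invariant is not tight (the pools differ by more than the coupling can exploit), the chains are simply run independently and the one-unit buffer absorbs the worst case. If you replace your shared-edge construction with this count-based Bernoulli coupling, your outline becomes the paper's proof.
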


\begin{proof}
Consider two copies of the arrival process, one under the setting of $\BE$ and one under $\widetilde{\BE}$. Let $[H^{\BE}_k, E^{\BE}_k]$ and ${H}^{\widetilde{\BE}}_k$ denote the  embedded (discrete-time) Markov chain resulting from observing the two dynamic systems at arrival epochs.
We will prove the following (stronger) result: there exists a coupling such that at any step $k$, $H_k^{\BE} + E_k^{\BE} \leq H_k^{\widetilde{\BE}} + 1$.

We start by coupling all arrivals: with probability $\frac{\lambda_H}{\lambda_E + \lambda_H}$ ($\frac{\lambda_E}{\lambda_E + \lambda_H}$), arrivals to $\BE$ and $\widetilde{\BE}$ at $k$ are both $H$ ($E$) agents. Three cases can happen:

\begin{enumerate}
\item $H_k^{\BE} + E_k^{\BE} < H_k^{\widetilde{\BE}}$; in this case, we let the two chains evolve independently. \label{it:case:1}
\item $H_k^{\widetilde{\BE}} \leq H_k^{\BE} + E_k^{\BE} \leq H_k^{\widetilde{\BE}} +1$.  Let $h = H_k^{\widetilde{\BE}}$. We consider two sub-cases: \label{it:case:2}
\begin{enumerate}[(a)]
\item The arrival at $k+1$ is an $H$ agent. We couple the events that $\widetilde{\BE}$  and $\BE$ cannot find a match as follows: we draw two independent Bernoulli random variables $\mathbb{B}_1$, $\mathbb{B}_2$ with  respective parameters of  $(1 - p_H^2)^h$ and $(1 - p_H^2)^{H_k^{\BE}}  (1 - p_H p_E)^{E_k^{\BE}}{(1 - p_H^2)^{-h}}$. $\mathbb{B}_1 = 1 $ corresponds to  the event that $\widetilde{\BE}$ cannot find a match; similarly  $ \mathbb{B}_1 \mathbb{B}_2 = 1$ corresponds to the event that $\BE$ cannot find a match. \label{it:case:2a}

%
%
\item The arrival at $k+1$ is an $E$ agent. Similarly we couple the events that $\widetilde{\BE}$  and $\BE$ cannot find a match as follows: we draw two independent Bernoulli random variables $\mathbb{B}_3$, $\mathbb{B}_4$ with  respective parameters of $(1 - p_E p_H)^h$ and ${(1 - p_E p_H)^{H_k^{\BE}}  (1 - p_E^2)^{E_k^{\BE}}}{(1 - p_E p_H)^{-h}}$. $\mathbb{B}_3 = 1$ corresponds to the event that $\widetilde{\BE}$ cannot find a match. Similarly $\mathbb{B}_3 \mathbb{B}_4 = 1$ corresponds to the event that  $\BE$ cannot find a match. \label{it:case:2b}

\end{enumerate}

\item  $H_k^{\BE} + E_k^{\BE} > H_k^{\widetilde{\BE}} +1$; in this case, we let the two chains evolve independently. \label{it:case:3}
\end{enumerate}
Having the above coupling, we finish the proof by induction. The base case $k = 0$ is trivial: $H_0^{\BE} + E_0^{\BE} = H_0^{\widetilde{\BE}} = 0$.
Suppose $H_k^{\BE} + E_k^{\BE} \leq H_k^{\widetilde{\BE}} + 1$, we show that $H_{k+1}^{\BE} + E_{k+1}^{\BE} \leq H_{k+1}^{\widetilde{\BE}} + 1$:
In Case (\ref{it:case:1}), note that the number of agents waiting can increase or decrease by at most $1$. For both cases, we have $H_{k+1}^{\BE} + E_{k+1}^{\BE} \leq H_{k+1}^{\widetilde{\BE}} + 1$. In Case (\ref{it:case:2a}), we have:
\begin{itemize}
\item[-] If $ \mathbb{B}_1 = 0$, then $H^{\widetilde{\BE}}_{k+1} = H^{\widetilde{\BE}}_k - 1$. Further if $ \mathbb{B}_1 = 0$ then $ \mathbb{B}_1 \mathbb{B}_2 = 0 $, and $H^\BE_{k+1} + E^\BE_{k+1} =  H^\BE_{k} + E^\BE_{k} - 1$.  Therefore $H_{k+1}^{\BE} + E_{k+1}^{\BE} \leq H_{k+1}^{\widetilde{\BE}} +1$
\item[-] If $ \mathbb{B}_1 = 1$, then $H^{\widetilde{\BE}}_{k+1} = H^{\widetilde{\BE}}_k + 1$. Note that number of agents under $\BE$ can increase by at most one, therefore  $H_{k+1}^{\BE} + E_{k+1}^{\BE} \leq H_{k+1}^{\widetilde{\BE}} +1$.
\end{itemize}
Similarly in Case (\ref{it:case:2b}), we have:
\begin{itemize}
\item[-] If $ \mathbb{B}_3 = 0$, then $H^{\widetilde{\BE}}_{k+1} = H^{\widetilde{\BE}}_k - 1$.  Further if $ \mathbb{B}_3 = 0$ then  $ \mathbb{B}_3 \mathbb{B}_4 = 0 $, and $H^\BE_{k+1} + E^\BE_{k+1} =  H^\BE_{k} + E^\BE_{k} - 1$; therfore $H_{k+1}^{\BE} + E_{k+1}^{\BE} \leq H_{k+1}^{\widetilde{\BE}} +1$
\item[-] If $ \mathbb{B}_3 = 1$, then $H^{\widetilde{\BE}}_{k+1} = H^{\widetilde{\BE}}_k + 1$. Note that number of agents under $\BE$ can increase by at most one, therefore and $H_{k+1}^{\BE} + E_{k+1}^{\BE} \leq H_{k+1}^{\widetilde{\BE}} +1$.
\end{itemize}

Thus, in all possible cases $H_{k+1}^{\BE} + E_{k+1}^{\BE} \leq H_{k+1}^{\widetilde{\BE}} + 1$ finishing the proof. Note that the above induction also implies that Case (\ref{it:case:3}) never occurs.

\end{proof}

\begin{proof}[Proof of Proposition \ref{prop:upper_H_prio_E}]
Observe that the CTMC $H^{\widetilde{\BE}}_t$ has the following rate matrix:
\begin{equation*}
\begin{split}
Q^\tBE(h,h+1) &= \lambda_H (1 - p_H^2)^h + \lambda_E (1 - p_E p_H)^h\\
Q^\tBE(h,h-1) &= \lambda_H (1 - (1 - p_H^2)^h) + \lambda_E (1 - (1 - p_E p_H)^h)
\end{split}
\end{equation*}

And let us define:
$$\rho(h) = \frac{Q^\tBE(h,h+1)}{Q^\tBE(h+1,h)}  =  \frac{\lambda_H (1 - p_H^2)^h + \lambda_E (1 - p_E p_H)^h}{\lambda_H (1 - (1 - p_H^2)^{h+1}) + \lambda_E (1 - (1 - p_E p_H)^{h+1})}.$$
Note that $\rho(h)$ is a decreasing function, and suppose that there exists $\eta \geq 0$ such that $\rho(\eta) < 1 - \sqrt{p_H}$ and let $\pi = \pi(h)_{h \geq 0}$ be the stationnary distribution of $H^{\widetilde{\BE}}$. Then we have for all $i \geq 0$, $\pi(\eta + i) \leq \rho(\eta)^i \pi(\eta) \leq \rho(\eta)^i $ and for all $k\geq 0$, $ \P[H^{\widetilde{\BE}} \geq \eta + k] \leq \pi(\eta) \sum_{i \geq k} \rho(\eta)^i \leq \frac{\rho(\eta)^k}{1 - \rho(\eta)}$. We  then have
\begin{equation*}
\begin{split}
\E[H^{\widetilde{\BE}}] &= \sum_{h \geq {1}} \P[H^{\widetilde{\BE}} \geq h] \\
\E[H^{\widetilde{\BE}}] &\leq (\eta + k) + \sum_{h \geq \eta + k} \P[H^{\widetilde{\BE}} \geq h] \\
\E[H^{\widetilde{\BE}}] &\leq (\eta + k) +  \frac{\rho(\eta)^k}{(1 - \rho(\eta))^2}\\
\end{split}
\end{equation*}

Let us consider the case when $\lambda_H < \lambda_E$. Let $\eta = \frac{\ln(1/u) + c \sqrt{p_H}}{p_E p_H}$. For $h \geq \eta$, we have $(1 - p_E p_H)^h \leq u \left(1 - c\sqrt{p_H} + o(\sqrt{p_H}) \right)$ and $(1 - p_H^2)^h \leq 1 - \ln(1/u) p_H^2 + o(p_H^2)$. We have $\rho(\eta) =  \frac{\lambda_H + \lambda_E u  (1 - c\sqrt{p_H}) }{\lambda_E (1 - u  (1 - c\sqrt{p_H}))} + o(\sqrt{p_H})$.  Taking $u = \frac{\lambda_E - \lambda_H}{2 \lambda_E}$ and $c = \frac{\lambda_E + \lambda_H}{2(\lambda_E - \lambda_H)}$, we get $\rho(\eta) = 1 - \sqrt{p_H} + o(\sqrt{p_H})$.
Taking $k = p_H^{-3/4}$ we get:

\begin{align}
\label{eq:priorE_case1}
\E[H^{\widetilde{\BE}}] \leq \eta + o(1/p_H) =  \frac{\ln(\frac{2 \lambda_E}{\lambda_E - \lambda_H})}{p_E p_H} + o(1/p_H), \quad \lambda_H < \lambda_E.
\end{align}
In the case when $\lambda_H > \lambda_E$, let $\eta = \frac{\ln(1/u) + c \sqrt{p_H}}{p_H^2}$. For $h \geq \eta$, we have $(1 - p_E p_H)^h = o(p_H)$ and $(1 - p_H^2)^h \leq u(1 - c \sqrt{p_H}) + o(\sqrt{p_H}).$ Taking $u = \frac{\lambda_H + \lambda_E}{2 \lambda_H}$ and $c = {1/2}$, we get $\rho(\eta) = 1 - \sqrt{p_H} + o(\sqrt{p_H}).$
Taking $k = p_H^{-3/4}$ we get:

\begin{align}
\label{eq:priorE_case2}
\E[H^{\widetilde{\BE}}] \leq \eta + o(1/p_H^2) =  \frac{\ln(\frac{2 \lambda_H}{\lambda_E + \lambda_H})}{p_H^2} + o(1/p_H^2), \quad \lambda_H > \lambda_E.
\end{align}

Putting Lemma \ref{lem:coup} and upper-bounds given in \eqref{eq:priorE_case1} and \eqref{eq:priorE_case2} together finishes the proof.

\end{proof}

\section{Missing proofs for \textit{{ChainMatch(d)} Policy}}
\label{app:chain_coupling}

\begin{proof}[Proof of Proposition \ref{prop:chain:p1}]
As pointed at the end of Subsection \ref{sec:proof_chains}, when $p_E = 1$, an arriving $E$ agent is matched immediately by a bridge agent, implying that $E^{\Cd}_t = 0$ and $H^{\Cd}_t  = H^{\tCd}_t$; consequently  Proposition \ref{prop:upper_chains} implies the limit stated in  the  Proposition \ref{prop:chain:p1}.

Further note that fixing $\lambda_H$, it is straightforward to check that function $f(\lambda_E) = \frac{\ln \left(\frac{\lambda_H}{\lambda_E} + 1 \right)}{\lambda_H}$ is decreasing for $\lambda_E > 0$. Similarly fixing $\lambda_E$, it is straightforward to check that function $g(\lambda_H) = \frac{\ln \left(\frac{\lambda_H}{\lambda_E} + 1 \right)}{\lambda_H}$ is decreasing for $\lambda_H > 0$.
\end{proof}

\begin{proof}[Proof of Corollary \ref{cor:comp}]
First we observe that $\ln \left(\frac{\lambda_H }{\lambda_E (1 - (1 - p_E)^d)} + 1\right)$ is decreasing in $d$. Therefore the worst upper-bound on $w_H^\Cd$ is for $d=1$. Next we show that:
\begin{align*}
\ln \left(\frac{\lambda_H }{\lambda_E  p_E} + 1\right)^{p_E} &  \leq \ln \left(\frac{\lambda_H }{\lambda_E  } + 1\right) \\
& \leq \ln \left(\frac{\lambda_H }{\lambda_E -\lambda_H } + 1\right),
\end{align*}
where the first inequality holds because function $f(x) := (\frac{\lambda_H/\lambda_E}{x} +1)^x$ is increasing in $x \in (0,1]$.
\end{proof}

\begin{proof}[Proof of Proposition \ref{chL}]
In steady-state, for any function $f(\cdot, \cdot)$, $0 = \mathbb{E}[f(H_k,E_{k}) - f(H_{k+1},E_{k+1})] $, and in particular for $f(h,e) = h+e$, we get
\begin{equation}
\begin{split}
0 &=(\lambda_H (1 - p_H)^d + \lambda_E (1 - p_E)^d ) \mathbb{E}[H_k + E_{k} - H_{k+1} - E_{k + 1} | H_k + E_{k} < H_{k+1} + E_{k + 1}] \\
&+  (\lambda_H (1 - (1 - p_H)^d) + \lambda_E (1 - (1 - p_E)^d) ) \mathbb{E}[H_k + E_{k} - H_{k+1} - E_{k + 1} | H_k + E_{k} \geq H_{k+1} + E_{k + 1}] \\
&= - (\lambda_H (1 - p_H)^d + \lambda_E (1 - p_E)^d)  + (\lambda_H (1 - (1 - p_H)^d) + \lambda_E (1 - (1 - p_E)^d) ) \E[L -1 \mid L \geq 1].
\end{split}
\end{equation}
This gives us:

$$\mathbb{E}[L \mid L \geq 1] = \frac{\lambda_H(1-p_H)^d + \lambda_E(1-p_E)^d }{(\lambda_H (1 - (1 - p_H)^d) + \lambda_E (1 - (1 - p_E)^d) )} + 1= \frac{\lambda_H + \lambda_E(1-p_E)^d }{\lambda_E (1 - (1 - p_E)^d) }  + 1 + o(1).$$
\end{proof}

\section{Heuristic argument for \textit{{ChainMatch(d)} Policy}}
\label{app:chains_heuristic}

In what follows we provide a heuristic analysis of the CTMC underlying the \textit{{ChainMatch(d)}}. This heuristic adds   intuition for the behavior of the policy and further  establishes what is supposedly the right constant in the case in which our theory only generates an upper bound ($p_E<1$).

We introduce an auxiliary  $3$-dimensional CTMC,  in which a chain-segment is not formed instantaneously; instead a chain-segment can only advance at certain ``tokens'' (or times) that  arrive according to a Poisson process with rate $\mu$\footnote{This process is independent from the Poisson processes guiding the arrivals of $H$ and $E$ agents.}.
We denote this auxiliary  CTMC by $\tCdm$ and its states by $[H_t^{\tCdm}, E_t^{\tCdm}, U_t] \in \N^2 \times \{0,1\}$. The first two dimensions represent the number of $H$ and $E$ agents.
The third dimension $U_t$ indicates whether a chain-segment is being conducted. Initially $U_0 =0$. Suppose at time $t_1$ the first agent $i$ arrives that is matched by an altruistic agent. At this time $U_{t_1}$ becomes $1$ indicating that a chain-segment formation is in process. The policy waits $x$ unit of time for a token to arrive to find  an agent who can be matched by $i$ (note that $x$ is exponentially distributed with rate $\mu$)\footnote{Note that it is possible that there is a new arrivals of $H$ and/or $E$ agents in the interval $(t_1, t_1+ x]$. In this heuristic, we assume that the arriving $H$ or $E$ agent is deleted. The intuition is that when $\mu$ is large, this event happens rarely.}.
If such an agent does not exist, then chain-segment is stopped; and $U_{t_1+ x}$ becomes $0$. Otherwise $U_{t_1+ x}$ remains $1$  and the same process repeats.
The transition rates of $\tCdm$ are:

\begin{subequations}
\begin{align}
&Q^\tCdm([h, e, 0 ] , [h+1, e, 0] ) = \lambda_H (1 - p_H)^d \label{eq:transtionC:1} \\
&Q^\tCdm([h, e, 0] , [h, e + 1, 0] ) = \lambda_E (1 - p_E)^d  \label{eq:transtionC:2}\\
&Q^\tCdm([h, e, 0] , [h, e, 1] ) = \lambda_H (1 - (1 -p_H)^d) + \lambda_E (1 - (1 - p_E)^d)  \label{eq:transtionC:3}\\
&Q^\tCdm([h, e, 1] , [h, e, 0] ) = \mu (1 - p_H)^{h} (1 - p_E)^{e} \label{eq:transtionC:4}\\
&Q^\tCdm([h, e, 1] , [h-1, e, 1] ) =\mu \left(1 - (1 - p_H)^{h}\right) \label{eq:transtionC:5}\\
&Q^\tCdm([h, e, 1] , [h, e-1, 1] ) =\mu (1 - p_H)^{h} \left(1 - (1 - p_E)^{e}\right) \label{eq:transtionC:6}
\end{align}
\end{subequations}

\begin{figure}[h!]
\centering
\begin{tikzpicture}
[align=center,node distance=2cm]
\node[state, minimum size=1.5cm,node distance=3cm,] (s) {\footnotesize  [h, e, 0]};
\node[state, minimum size=1.5cm, node distance=3cm, right=of s] (r) {\footnotesize [h+1, e, 0]};
\node[state, minimum size=1.5cm, node distance=3cm,left=of s] (l) {\footnotesize [h-1, e, 0]};
\node[state, minimum size=1.5cm, node distance=3cm,above=of s] (t) {\footnotesize [h, e, 1]};
\node[state, minimum size=1.5cm, node distance=3cm,above=of r] (tr) {\footnotesize [h+1, e, 1]};
\node[state, minimum size=1.5cm, node distance=3cm,above=of l] (tl) {\footnotesize [h-1, e, 1]};
\node[draw=none, minimum size=1.7cm, node distance=1.5cm, right=of r] (rr) {...};
\node[draw=none, minimum size=1.7cm, node distance=1.5cm, left=of l] (ll) {...};
\node[draw=none, minimum size=1.7cm, node distance=1.5cm, right=of tr] (trr) {...};
\node[draw=none, minimum size=1.7cm, node distance=1.5cm, left=of tl] (tll) {...};

        \draw[every loop, line width=0.6mm, >=latex]
		(s) edge[bend right, auto=right] node {{\scriptsize  $\lambda_H (1 - p_H)^d$}} (r)
		(l) edge[bend right, auto=right] node {{\scriptsize  $\lambda_H (1 - p_H)^d$}} (s)
		(s) edge[bend right, auto=right] node {{\scriptsize  $\lambda_H (1 - (1 -p_H)^d)$} \\{\scriptsize  $ + \lambda_E (1 - (1 - p_E)^d)$}} (t)
		(t) edge[bend right, auto = right] node {{\scriptsize  $\mu (1 - p_H)^{h}(1 - p_e)^{e}$}} (s)
		(tr) edge[bend right, auto = right] node[above] {{\scriptsize  $\mu \left(1 - (1 - p_H)^{h+1}\right)$}} (t)
		(l) edge[bend right, dashed, auto=left] node {} (tl)
		(tl) edge[bend right, dashed] node {} (l)
		(t) edge[bend right, auto=left] node[above] {{\scriptsize  $\mu \left(1 - (1 - p_H)^{h}\right)$}} (tl)
		(r) edge[bend right, dashed, auto=right] node {} (tr)
		(tr) edge[bend right, dashed] node {} (r)
		(trr) edge[bend right, dashed] node {} (tr)
		(tl) edge[bend right, dashed] node {} (tll)
		(r) edge[bend right, dashed] node {} (rr)
		(ll) edge[bend right, dashed] node {} (l);
\end{tikzpicture}
\caption{Transitions for the CTMC underlying $\tCdm$ in the first and third dimensions, i.e, transitions to states with a different number of $E$ agents are not shown.  Transition rates are given only for solid arrows.}
\label{fig:chains}
\end{figure}
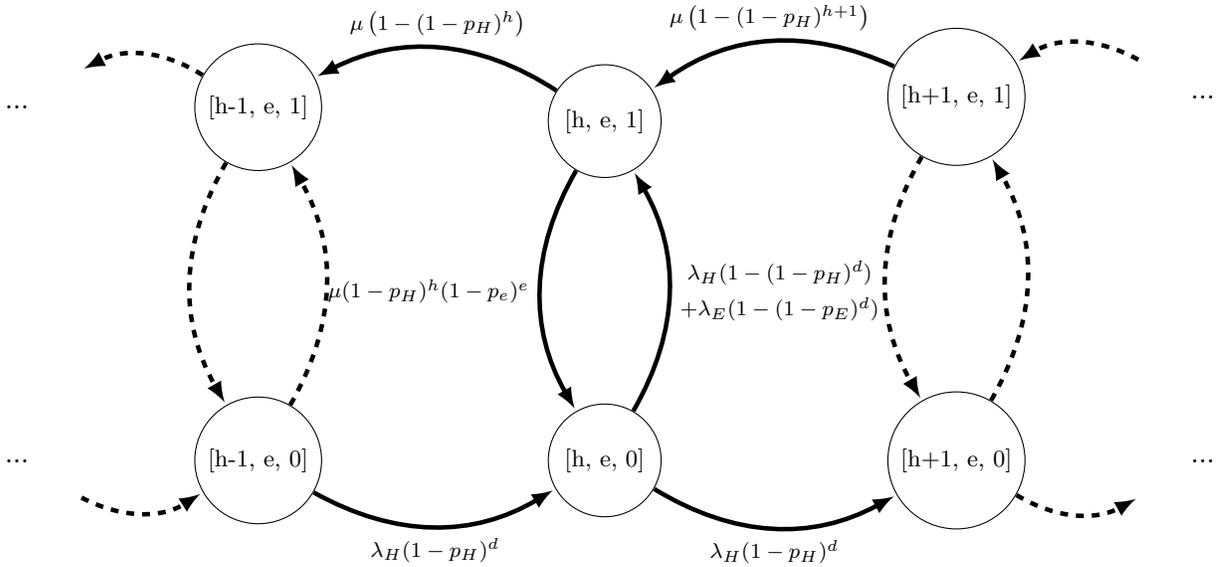

Figure \ref{fig:chains} illustrates the transitions of $\tCdm$  in $H_t^{\tCdm}$ and $U_t$ dimensions  along with the transition rates.
The rate \eqref{eq:transtionC:1} (\eqref{eq:transtionC:2}) corresponds to the event that an $H$ ($E$) agent arrives but cannot be matched by any bridge agent which happens with probability $(1 - p_H)^d$ ($(1 - p_E)^d$). The rate \eqref{eq:transtionC:3} corresponds to the case  an $H$  arrives and starts a chain-segment (occurs with probability $(1 - (1 -p_H)^d)$) or an $E$ agent arrives and starts a chain-segment (occurs  with probability $(1 - (1 - p_E)^d$);
similarly \eqref{eq:transtionC:4} represents the case where the chain-segment cannot advance any further (happens with probability $(1 - p_H)^{h} (1 - p_E)^{e}$). The last two rates correspond to adding one more $H$ and $E$  agent to the chain-segment respectively (with probability $\left(1 - (1 - p_H)^{h}\right)$ and $(1 - p_H)^{h} \left(1 - (1 - p_E)^{e}\right)$ respectively).

Similar to the heuristic argument for \textit{BilateralMatch(H)} and \textit{BilateralMatch(E)}, we can try to solve the following system of nonlinear equations (that result from setting expected drifts in all three dimensions to be zero):
\begin{subequations}
\begin{align*}
&\E \left[ \left( \lambda_H (1 - (1 - p_H)^d) + \lambda_E (1 - (1 - p_E)^d) \right)\I({U = 0}) - \mu (1 - p_H)^{H^{\tCdm}} (1 - p_E)^{E^{\tCdm}}  \I({U = 1}) \right] = 0\\
& \E\left[ \lambda_E (1 - p_E)^d  \I({U = 0}) - \mu (1 - p_H)^{H^{\tCdm}}(1 - (1 - p_E)^{E^{\tCdm}})  \I({U = 1}) \right] = 0 \\
&\E \left[ \lambda_H (1 - p_H)^d  \I({U = 0}) - \mu (1 - (1 - p_H)^{H^{\tCdm}})  \I({U = 1}) \right] = 0,
\end{align*}
\end{subequations}
where $\I(\cdot)$ is the indicator function.  We find that $\ln \left( \frac{\lambda_H + \lambda_E}{\lambda_H (1 - (1 - p_H)^d) + \lambda_E} \right)/p_H$ is an approximate solution for $\E[H^{\tCdm}]$. 
Further, note that in $\tCdm$ if an $H$/$E$ agent arrives while a chain-segment is being formed then the agent will not join the market. However, probability of such an event vanishes as $\mu \rightarrow \infty$ (i.e, forming chain-segments becomes instantaneous). Therefore, it is reasonable to approximate $\E[H^{\Cd}]$ with $\lim_{\mu \rightarrow \infty}\E[H^{\tCdm}]$, and thus with $\ln \left( \frac{\lambda_H + \lambda_E}{\lambda_H (1 - (1 - p_H)^d) + \lambda_E} \right)/p_H$.
Finally note that:
$$\lim_{p_H \rightarrow 0} \ln \left( \frac{\lambda_H + \lambda_E}{\lambda_H (1 - (1 - p_H)^d) + \lambda_E} \right) =  \ln \left( \frac{\lambda_H + \lambda_E}{ \lambda_E}\right),$$
which is
the constant in Proposition \ref{prop:chain:p1}.

We emphasize that the analysis above is only a heuristic for guessing the right constant when $p_E<1$.  We refer the reader to Section \ref{sec:app:tightness} for numerical simulations that  show the tightness of this constant.

\section{Positive recurrence proofs}
\label{app:existence_proofs}

To prove existence of a stationary distribution, we use a special case of a result from \cite{meyn1993stability}, as stated in \cite{Prieto-Rumeau2016}.

\begin{theorem}[\cite{meyn1993stability}]
Suppose that $X_t$ is an irreducible continuous time Markov chain, and suppose that there exist a nonnegative function $V$ on $S$, a function $w \geq 1$ on $S$, a finite set $C \subset S$, and constants $c > 0$ and $b \in \R$ such that

$$ \sum_{j \in S} q_{ij} V(j) \leq - c w(i) + b \cdot \mathbb{I}_C(i) \text{ for all } i \in S,$$
where $\mathbb{I}_C$ denotes the indicator function of the set $C$. Then the Markov chain $X$ is ergodic.

\end{theorem}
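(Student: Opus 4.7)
The plan is to treat the drift inequality as a continuous-time Foster--Lyapunov condition. Observe first that since $w \geq 1$, the hypothesis immediately implies the simpler drift inequality $\sum_{j \in S} q_{ij} V(j) \leq -c + b \cdot \mathbb{I}_C(i)$, which is all one actually needs; the stronger $w$-version is usually exploited to upgrade bare ergodicity to $w$-regular or geometric ergodicity, but for the existence of a unique stationary distribution the simplified form suffices.

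The core step is to invoke Dynkin's formula for the minimal CTMC generated by $Q$: for a suitable stopping time $\tau$, $\E_i[V(X_\tau)] = V(i) + \E_i\!\left[\int_0^\tau (QV)(X_s)\,ds\right]$, where $(QV)(i) := \sum_j q_{ij} V(j)$. Setting $\tau = \tau_C \wedge T$ with $\tau_C := \inf\{t \geq 0 : X_t \in C\}$, substituting the drift bound, and using $V \geq 0$ gives $c \cdot \E_i[\tau_C \wedge T] \leq V(i)$ for every $i \notin C$. Letting $T \to \infty$ by monotone convergence yields $\E_i[\tau_C] \leq V(i)/c < \infty$ for all $i$, and a parallel argument controls the expected return time to $C$ from states in $C$ by splitting at the first jump and applying the strong Markov property.

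Because $C$ is finite and the chain is irreducible, these hitting-time bounds extend to positive recurrence of every state: pick any $i_0 \in C$; a Kac-type identity on excursions from $C$ combined with finiteness of the mean return time to $C$ gives $\E_{i_0}[\tau_{i_0}^+] < \infty$, so $i_0$ is positive recurrent, and irreducibility propagates this throughout $S$. Standard Markov-chain theory then delivers a unique stationary distribution $\pi$ and convergence $\|P^t(i, \cdot) - \pi\|_{TV} \to 0$, which is the ergodicity claim.

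The main obstacle will be justifying Dynkin's formula when $V$ is unbounded, which it must be if it is to dominate the drift over the entire countable state space. The standard remedy is a localization argument: replace $\tau_C \wedge T$ by $\tau_C \wedge T \wedge \sigma_n$, where $\sigma_n$ is the first exit time from a growing sequence of finite sets $S_n \uparrow S$, apply Dynkin's identity on this bounded stopping time (where everything is finite by compactness), and then pass to the limit in $n$ using Fatou's lemma together with $V \geq 0$ and nonexplosiveness of the chain---the latter of which itself follows from the existence of the Lyapunov function $V$, so no circularity arises.
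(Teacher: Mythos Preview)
The paper does not prove this theorem at all: it is quoted verbatim as a result of \cite{meyn1993stability} (in the form given by \cite{Prieto-Rumeau2016}) and then used as a black box to establish positive recurrence of the four CTMCs $\BH$, $\BE$, $\Cd$, $\tCd$. There is therefore no ``paper's own proof'' to compare your proposal against.

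That said, your sketch is the standard Foster--Lyapunov route and is essentially sound. The reduction via $w\geq 1$ to the plain drift condition $(QV)(i)\leq -c + b\,\mathbb{I}_C(i)$ is exactly right for bare positive recurrence, and the Dynkin-based bound $\E_i[\tau_C]\leq V(i)/c$ for $i\notin C$ is the heart of the matter. Two small points worth tightening if you ever write this out in full: (i) non-explosion does not follow from the drift condition alone without some control on the jump rates or an additional growth bound on $V$ along the embedded chain---in the paper's applications the total jump rate out of any state is bounded by $\lambda_H+\lambda_E$, so non-explosion is automatic, but in the abstract statement it is an implicit hypothesis; (ii) the passage from $\E_i[\tau_C]<\infty$ for all $i$ to $\E_{i_0}[\tau_{i_0}^+]<\infty$ for a fixed $i_0\in C$ is most cleanly done by noting that the return time to $i_0$ is dominated by a geometric number of excursions from $C$, each of finite mean, together with the fact that from any state in the finite set $C$ there is a uniformly positive probability of hitting $i_0$ before leaving $C$. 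Your Kac-type remark gestures at this but would benefit from being made explicit.
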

It is clear that all four markov chains are irreducible, so our proofs will focus on finding a suitable set $C$ and function $V$ for each case.
\subsection{Positive recurrence of $\BH$ and $\BE$.}
\begin{restatable}{claim}{claimBilatHSS}
\label{cl:existence_bilat_H}
The CTMC $[H^\BH_t, E^\BH_t]$ defined in \eqref{eq:transtion:1}, \eqref{eq:transtion:2}, \eqref{eq:transtion:3}, \eqref{eq:transtion:4}, is positive recurrent.
\end{restatable}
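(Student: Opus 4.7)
The plan is to apply the Meyn--Tweedie drift condition stated just above the claim, with Lyapunov function $V(h,e) := h+e$ and weight $w \equiv 1$.

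First I would compute $(Q^{\BH}V)(h,e)$ explicitly. The key observation is that whenever a bilateral match is formed one existing agent leaves the market while the arriving agent never joins, so $V$ decreases by exactly one; an unmatched arrival increases $V$ by exactly one. Hence the expected change in $V$ triggered by an $H$-arrival equals $2(1-p_H^2)^h(1-p_E p_H)^e - 1$, and for an $E$-arrival it equals $2(1-p_E p_H)^h(1-p_E^2)^e - 1$. Weighting by the Poisson rates (equivalently, summing $V(j)-V(i)$ against the four rates in \eqref{eq:transtion:1}--\eqref{eq:transtion:4} and observing the cancellations) yields the clean identity
\begin{equation*}
(Q^{\BH}V)(h,e) \;=\; 2\lambda_H(1-p_H^2)^h(1-p_E p_H)^e + 2\lambda_E(1-p_E p_H)^h(1-p_E^2)^e - (\lambda_H+\lambda_E).
\end{equation*}

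Next I would observe that both positive terms decay geometrically as soon as either coordinate grows: the first is bounded by $2\lambda_H \min\{(1-p_H^2)^h,(1-p_E p_H)^e\}$ and the second by $2\lambda_E \min\{(1-p_E p_H)^h,(1-p_E^2)^e\}$. Consequently, for every $\varepsilon > 0$ there exists a finite threshold $N = N(\lambda_H,\lambda_E,p_H,p_E,\varepsilon)$ such that whenever $\max(h,e) > N$ both positive terms are below $\varepsilon$. Choosing $\varepsilon := (\lambda_H+\lambda_E)/4$ yields
\begin{equation*}
(Q^{\BH}V)(h,e) \;\leq\; -\tfrac{1}{2}(\lambda_H+\lambda_E) \qquad \text{for all } (h,e) \notin C_N,
\end{equation*}
where $C_N := \{0,1,\dots,N\}^2$ is finite. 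On $C_N$ the drift is trivially bounded above by $\lambda_H+\lambda_E$.

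Finally I would invoke the cited Meyn--Tweedie theorem with $c := (\lambda_H+\lambda_E)/2$, $b := 2(\lambda_H+\lambda_E)$, $w\equiv 1$, and $C := C_N$, after briefly checking irreducibility: every nearest-neighbour transition allowed in $\N^2$ carries strictly positive rate (rightward and upward are always positive; leftward is positive iff $h\geq 1$ and downward iff $e\geq 1$), so all states communicate. This delivers positive recurrence. I do not anticipate any real obstacle: the only delicate point is the algebraic collapse that produces the dominant $-(\lambda_H+\lambda_E)$ term in the drift identity, and once that is in hand the linear Lyapunov function $V=h+e$ suffices --- no exponential Lyapunov function is needed.
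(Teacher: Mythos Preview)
Your proposal is correct and essentially identical to the paper's own proof: both use $V(h,e)=h+e$, derive the same drift identity $(Q^{\BH}V)(h,e)=2\lambda_H(1-p_H^2)^h(1-p_E p_H)^e+2\lambda_E(1-p_E p_H)^h(1-p_E^2)^e-(\lambda_H+\lambda_E)$, and then choose a large finite box outside of which the drift is bounded by a negative constant. The only cosmetic differences are that the paper uses the simplex $\{h+e\le 2M\}$ with $M$ chosen so that $(1-p_H^2)^M=1/3$ (yielding drift $\le -(\lambda_H+\lambda_E)/3$), whereas you use the square $\{0,\dots,N\}^2$ and target $-(\lambda_H+\lambda_E)/2$; both are fine.
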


\begin{proof}
Let $V([h,e]) = h + e$. Observe that for a continuous time random walk, we can write for any state $i = [h,e]$:
\begin{equation*}
\begin{split}
 \sum_{j \in \N^2} q_{i,j} V(j) &=  \sum_{j \neq i} q_{i,j} (V(j) - V(i) )\\
 &= Q([h,e],[h+1,e]) - Q([h,e],[h-1,e]) + Q([h,e],[h,e + 1]) - Q([h,e],[h,e-1]). \\
 &= \lambda_H (1 - p_H^2)^{h} (1 - p_E p_H)^{e} -  \lambda_H (1 - (1 - p_H^2)^{h}) - \lambda_E (1 - (1 - p_E p_H)^{h})  + \\
 & \lambda_E (1 - p_E p_H)^{h} (1 - p_E^2)^{e} - \lambda_H (1 - p_H^2)^{h} (1 - (1 - p_E p_H)^{e}) - \lambda_E (1 - p_E p_H)^{h} (1 - (1 - p_E^2)^{e})) \\
 & = 2 \lambda_H (1 - p_H^2)^{h} (1 - p_E p_H)^{e} - \lambda_H + 2 \lambda_E (1 - p_E p_H)^{h} (1 - p_E^2)^{e} - \lambda_E.
 \end{split}
 \end{equation*}

Take $M(p_H)$ such that $(1 - p_H^2)^{M(p_H)} = \frac{1}{3}$, and  let $C = \left \{ [x,y] \text{ s.t. } x + y \leq 2 M(p_H) \right \}$, note that $C$ is finite. For any $[h,e] \not \in C$, either $h \geq M(p_H)$ or $e \geq M(p_H)$. In both cases, we have
$ \sum_{j \in \N^2} q_{i,j} V(j) \leq - \frac{\lambda_H + \lambda_E}{3}$.
\end{proof}

\begin{restatable}{claim}{claimBilatESS}
\label{cl:existence:Prior_E}
The CTMC $[H^\BE_t, E^\BE_t]$ defined in \eqref{eq:transtionE:1}, \eqref{eq:transtionE:2}, \eqref{eq:transtionE:3}, \eqref{eq:transtionE:4} is irreducible and positive recurrent.
\end{restatable}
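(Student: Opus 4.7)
The plan is to mimic the Foster--Lyapunov argument used for Claim \ref{cl:existence_bilat_H}, since the two chains differ only in how matches are prioritized, not in the total aggregate drift on $h+e$. Irreducibility is immediate: from any $[h,e]$ we can reach $[0,0]$ by a sequence of arrivals that successively match existing agents (each such transition has strictly positive rate), and from $[0,0]$ we can reach any $[h',e']$ by a sequence of arrivals that fail to produce a cycle, again with strictly positive rate. So the effort reduces to verifying positive recurrence.

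I will take the Lyapunov function $V([h,e]) = h+e$ and compute the generator applied to $V$ by summing over the four transitions \eqref{eq:transtionE:1}--\eqref{eq:transtionE:4}. The key algebraic observation is that the total leftward-plus-downward rate at $[h,e]$ equals the rate at which an arrival succeeds in producing a match, which does not depend on the priority rule but only on the current set of vertices. Concretely, I expect the routine expansion
$$\sum_{j \in \N^2} Q^{\BE}([h,e], j)\, (V(j) - V([h,e])) = 2\lambda_H (1-p_H^2)^h (1-p_Ep_H)^e + 2\lambda_E (1-p_Ep_H)^h (1-p_E^2)^e - \lambda_H - \lambda_E,$$
which is identical to the corresponding expression obtained in the proof of Claim \ref{cl:existence_bilat_H}; cancellations mirror those in the $\BH$ calculation because all ``mixed'' pieces come from cycles of the same aggregate cardinality.

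Given the drift identity above, I would conclude exactly as in Claim \ref{cl:existence_bilat_H}: choose $M(p_H)$ large enough that $(1-p_H^2)^{M(p_H)} \leq 1/3$ and $(1-p_Ep_H)^{M(p_H)} \leq 1/3$, set $C = \{[h,e] : h+e \leq 2M(p_H)\}$ (a finite set), and note that for any $[h,e] \notin C$ we must have $h \geq M(p_H)$ or $e \geq M(p_H)$. In either case the two positive terms are dominated by $(\lambda_H+\lambda_E)/3$, yielding a drift bounded above by $-(\lambda_H+\lambda_E)/3$ outside $C$. Together with the bounded jump rates (total exit rate at each state is at most $\lambda_H + \lambda_E$), this satisfies the Foster--Lyapunov hypothesis of the Meyn--Tweedie theorem stated at the start of Appendix \ref{app:existence_proofs}, which delivers positive recurrence.

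The only ``obstacle'' is bookkeeping: I need to be careful that the contribution of arriving $E$ agents that match existing $H$ agents (the second term in \eqref{eq:transtionE:2}) combines correctly with the downward $e$-transition \eqref{eq:transtionE:4} so that the priority rule drops out. But this is exactly what the identity above records, and it follows from elementary grouping of terms, so no new idea beyond the proof of Claim \ref{cl:existence_bilat_H} is required.
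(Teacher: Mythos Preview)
Your proposal is correct and is exactly the approach the paper intends: the paper's own ``proof'' of Claim \ref{cl:existence:Prior_E} consists of the single sentence ``The proof follows similar ideas as Claim \ref{cl:existence_bilat_H},'' and your write-up is a faithful fleshing-out of that remark. In particular, your key algebraic identity for the drift of $V([h,e])=h+e$ is correct and indeed coincides with the $\BH$ drift computed in the proof of Claim \ref{cl:existence_bilat_H}, confirming that the priority rule drops out of the aggregate drift; the choice of $M$ and $C$ then concludes exactly as there.
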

The proof follows similar ideas as Claim \ref{cl:existence_bilat_H}.

\subsection{Positive recurrence of $\Cd$ and $\tCd$}
\begin{restatable}{claim}{claimChainSS}
\label{cl:existence_C_tilda}
Under $\Cd$, the number of $H$ and $E$ agents $[H^\Cd_t, E^\Cd_t]$ is a positive recurrent CTMC.
\end{restatable}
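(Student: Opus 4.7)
The plan is to apply the Foster--Lyapunov drift criterion stated at the top of Appendix~\ref{app:existence_proofs} with Lyapunov function $V([h,e]) = h + e$, weight $w \equiv 1$, and finite set $C = \{[h,e] : h + e \leq B\}$ for a sufficiently large constant $B$ (to be chosen).

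First I would compute the generator action on $V$. At state $[h,e]$, each arrival (total rate $\lambda_H + \lambda_E$) either joins the waiting pool, which occurs with probabilities $(1-p_H)^d$ or $(1-p_E)^d$ depending on type and contributes $+1$ to $V$, or triggers a chain-segment of some random length $L_{h,e} \geq 1$ that removes $L_{h,e} - 1$ existing waiting agents. Setting $\Lambda = \lambda_H(1 - (1-p_H)^d) + \lambda_E(1 - (1-p_E)^d)$ for the total rate of chain-segment starts, a short bookkeeping calculation yields
\[
\sum_j q_{[h,e],j}\bigl(V(j) - V([h,e])\bigr) \;=\; (\lambda_H + \lambda_E) \;-\; \Lambda \cdot \E[L_{h,e} \mid L_{h,e} \geq 1].
\]
Crucially, the conditional distribution of $L_{h,e}$ depends only on the waiting state $[h,e]$ and not on the type of the triggering agent, since the local-search continuation probabilities are governed entirely by each destination's type.

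The second step is to show $\E[L_{h,e} \mid L_{h,e} \geq 1] \to \infty$ as $h + e \to \infty$. Setting $p_{\min} = \min(p_H, p_E)$, whenever the chain's tip has $h' + e'$ waiting agents still available to extend to, the probability of extending by one additional match is at least $1 - (1-p_{\min})^{h'+e'}$, regardless of priorities. Hence for any $k \leq h+e$,
\[
\P[L_{h,e} \geq k + 1 \mid L_{h,e} \geq 1] \;\geq\; \prod_{j=0}^{k-1}\bigl(1 - (1-p_{\min})^{h+e-j}\bigr).
\]
Restricting to $k \leq h+e - \lceil \ln(h+e)/p_{\min}\rceil$ keeps each factor above $1 - 1/(h+e)$, so the product is bounded below by $1/e$ once $h+e$ is large. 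Summing these bounds yields $\E[L_{h,e} \mid L_{h,e} \geq 1] = \Omega(h+e)$.

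Combining the two steps, choosing $B$ so that $\Lambda \cdot \E[L_{h,e} \mid L_{h,e} \geq 1] \geq \lambda_H + \lambda_E + 1$ whenever $h + e > B$ ensures that the drift is at most $-1$ outside $C$. The generator action is trivially bounded above by some constant $b$ on the finite set $C$ (using that transitions out of $[h,e]$ have total rate at most $\lambda_H + \lambda_E$ and $V$-increments are bounded on $C$). All the hypotheses of the Meyn--Tweedie theorem cited above are then satisfied, so $[H^\Cd_t, E^\Cd_t]$ is positive recurrent. The main technical obstacle is the lower bound on $\E[L_{h,e}]$: the local-search rule with priority on $H$ has a path-dependent structure over mixed $H$/$E$ pools, but uniformly replacing both match probabilities by $p_{\min}$ stochastically dominates the true process by a simpler type-blind local search, which enables the clean product estimate above.
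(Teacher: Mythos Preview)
Your proof is correct and follows the same Foster--Lyapunov strategy as the paper, with $V([h,e])=h+e$ and the Meyn--Tweedie criterion. The only difference is in how the expected chain-segment length is shown to diverge: the paper case-splits on whether $h\geq M$ or $e\geq M$ and in each case lower-bounds the chain length using only agents of the abundant type, whereas you give a unified bound via $p_{\min}=\min(p_H,p_E)$ that avoids the case split. Your route is a bit cleaner; the paper's is slightly more elementary in that it never mixes types. One tiny imprecision: $(1-1/n)^n$ approaches $1/e$ from \emph{below}, so the product is not bounded below by $1/e$ but by, say, $1/4$ for $n\geq 2$---this does not affect the argument, which only needs a positive constant.
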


\begin{proof}
Similarly to the proof of Claim \ref{cl:existence_bilat_H}, let $V([h,e]) = h + e$, and $C = \left \{ [x,y] \text{ s.t. } x + y \leq 2 M \right \}$ for an appropriately chosen $M$. Consider a state $[h,e] \not \in C$, and assume first that $h \geq M$. Denoting $\Lambda = \lambda_H (1 - p_H)^d + \lambda_E (1 - p_E)^d$, we have:
\begin{equation*}
\begin{split}
 \sum_{j \in \N^2} q_{i,j} V(j) &=  \Lambda - (\lambda_H + \lambda_E - \Lambda) \sum_{[k_H, k_E] \leq [h,e]} (k_H + k_E) \P[k_H (k_E) \text{ H (E)  agents get matched}]\\
 &\leq  \Lambda - (\lambda_H + \lambda_E - \Lambda) \sum_{[k_H] \leq [h,e]} k_H \P[k_H \text{ H agents get matched}]\\
 &\leq  \Lambda - (\lambda_H + \lambda_E - \Lambda) \sum_{k \leq h} \prod_{i = h}^{h-k} (1 - (1 - p_H)^i ) \\
 &\leq  \Lambda - (\lambda_H + \lambda_E - \Lambda) \sum_{k \leq M/2} (1 - (1 - p_H)^{M/2} )^k \\
 \end{split}
 \end{equation*}
Going from the second to the third inequality bounds the number of agents matched with the number of agents matched in the first sub-chain-segment (before matching an $E$ agent). Because the function $M \rightarrow \sum_{k \leq M/2} (1 - (1 - p_H)^{M/2} )^k$ tends to infinity as $M$ grows large, we can find $M$ such that $ \sum_{j \in \N^2} q_{i,j} V(j)  \leq -1$, which concludes the proof.

The case where $h < M$ and $e \geq M$ can be treated similarly using the following observation: every-time the chain cannot match to any $H$ agent, it tries to match an $E$, and it succeeds with probability $1 - (1 - p_E)^e$ irrespective of $h$. Put together, these events constitute a set of $E$ agents that are matched sequentially, and the total length of the chain is more than the number of $E$ agents selected in that way.
\end{proof}

\begin{restatable}{claim}{claimChainsMuD}
 $H^{\tCd}_t$ is a positive recurrent CTMC.
%
\end{restatable}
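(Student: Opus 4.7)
The plan is to apply the Meyn--Tweedie criterion stated at the start of Appendix~\ref{app:existence_proofs} with the Lyapunov function $V(h)=h$, $w(h)\equiv 1$, and a suitably chosen finite set $C=\{0,1,\dots,M\}$. The chain $H^{\tCd}_t$ is irreducible because from any state $h$ the chain can reach $h+1$ at positive rate $\lambda_H(1-p_H)^d>0$ and can reach $0$ through a sufficiently long chain-segment at positive rate (both are immediate from the transition rates \eqref{eq:trans1}--\eqref{eq:trans2}), so only the drift condition needs to be verified.

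The first step is to compute the drift explicitly using \eqref{eq:def:Q}:
\[
\sum_{j\in\N} q_{h,j}\bigl(V(j)-V(h)\bigr)
= \lambda_H(1-p_H)^d \;-\; \Lambda\sum_{i=1}^{h} i\,\P[S_h=i]
= \lambda_H(1-p_H)^d \;-\; \Lambda\,\E[S_h],
\]
where $\Lambda=\lambda_H(1-(1-p_H)^d)+\lambda_E(1-(1-p_E)^d)>0$ and $S_h$ is the chain-segment size variable whose tail is given by \eqref{eq:def:Sh}.

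The second step, which is essentially the only content of the argument, is to show that $\E[S_h]\to\infty$ as $h\to\infty$. Using \eqref{eq:def:Sh},
\[
\E[S_h]=\sum_{k\ge 1}\P[S_h\ge k]
=\sum_{k\ge 1}\prod_{j=0}^{k-1}\bigl(1-(1-p_H)^{h-j}\bigr).
\]
For any $k\le h/2$ every factor in the product is at least $1-(1-p_H)^{h/2}$, so
\[
\E[S_h]\;\ge\;\sum_{k=1}^{\lfloor h/2\rfloor}\bigl(1-(1-p_H)^{h/2}\bigr)^{k}
\;\xrightarrow[h\to\infty]{}\;\infty,
\]
since the bracketed quantity tends to $1$. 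This is the only ``hard'' step, and it is really a routine squeeze; no subtlety arises because we do not need tight scaling, only divergence.

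Finally, given this divergence we can fix $M$ large enough that $\Lambda\,\E[S_M]\ge \lambda_H(1-p_H)^d+1$, and since $\E[S_h]$ is non-decreasing in $h$ the drift satisfies
\[
\sum_{j\in\N} q_{h,j}V(j)-V(h)\sum_{j\ne h}q_{h,j}\;\le\;-1\qquad\text{for all }h\ge M.
\]
Taking $C=\{0,1,\dots,M-1\}$ (which is finite) and $b=\lambda_H(1-p_H)^d+1$ yields the hypothesis of the Meyn--Tweedie theorem with $c=1$ and $w\equiv 1$, so the chain is ergodic and in particular positive recurrent. I do not expect any real obstacle; the only care required is in lower-bounding $\E[S_h]$, which as shown above follows from truncating the sum at $h/2$.
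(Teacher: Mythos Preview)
Your proof is correct and follows essentially the same approach as the paper. The paper does not spell out a separate proof for $\tCd$ but refers back to the argument for Claim~\ref{cl:existence_C_tilda}; that argument uses the identical Lyapunov function $V(h)=h$ and the same truncation bound $\sum_{k\le M/2}\bigl(1-(1-p_H)^{M/2}\bigr)^k\to\infty$ on the expected chain-segment length, so your write-up is effectively a clean one-dimensional specialization of it.
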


The proof follows similar arguments as that of Claim \ref{cl:existence_C_tilda}.
%

\section{Lower Bound}
\label{sec:anyAlg}

\begin{definition}
We call a matching policy  \emph{anonymous Markovian} if matching decisions are made at arrival epochs, and only depend on the current compatibility graph $\mathcal{G}_t=(\mathcal{V}_t, \mathcal{E}_t)$, and are anonymous among agents {of the same type}. In that case, the market $\mathcal{G}_t$ is a continuous-time Markov chain.

We say that an anonymous Markovian policy $\mathcal{P}$ is \emph{stable} if the resulting Markov chain is ergodic and has a  stationary distribution.
In line with our previous notation, we denote $w_H^{\mathcal{P}}$ ($w_E^{\mathcal{P}}$) the expected stationary waiting times for $H$ and $E$ agents under policy $\mathcal{P}$.
\end{definition}

\begin{remark}
  Observe that all the matching policies described in this paper are anonymous Markovian.
\end{remark}

\begin{proposition}
\label{prop:lowBound}
For any $p_E \in [0,1]$, $\lambda_H > 0, \lambda_E \geq 0$,  there exists a constant $c$ such that for any $p_H > 0$, under any stable anonymous Markovian matching policy $\mathcal{P}$, $w_H^{\mathcal{P}} + w_E^{\mathcal{P}} \geq \frac{c}{p_H}$.
\end{proposition}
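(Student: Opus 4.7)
The plan is to track a typical $H$ arrival $j$ at a Poisson epoch $\tau_j$ in steady state (so $\E[W_j]=w_H^{\mathcal{P}}$) and exploit the structural fact that $j$ can only leave the market when matched, and any match involving $j$, whether in a bilateral $2$-cycle or as a non-head member of a chain-segment originating at a bridge agent, uses at least one incoming directed edge to $j$ in the current compatibility graph $\mathcal{G}_t$. Consequently $\{W_j\le t\}\subseteq B(t)$, where $B(t)$ denotes the event that some incoming edge to $j$ from an agent that shared the market with $j$ is realized in $\mathcal{G}_s$ for some $s\in[\tau_j,\tau_j+t]$.

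I would then bound the number $|R^*(t)|$ of candidate partners that could produce such an incoming edge, namely the agents present at $\tau_j$ (both waiting and bridge, if any) together with those arriving during $[\tau_j,\tau_j+t]$. Each candidate $i\in R^*(t)$ produces the edge $i\to j$ independently with probability $p_H$ since $j$ is of type $H$, and the Bernoulli realizations of these edges are independent of $|R^*(t)|$: the market composition at $\tau_j$ is determined by pre-$\tau_j$ dynamics that never consult edges incident to the not-yet-arrived $j$, and post-$\tau_j$ arrivals form an exogenous Poisson process. Conditioning on $|R^*(t)|$ and then applying Jensen's inequality to the convex function $x\mapsto(1-p_H)^x$ gives
\[
P(W_j\le t)\;\le\;P(B(t))\;=\;1-\E\!\left[(1-p_H)^{|R^*(t)|}\right]\;\le\;1-(1-p_H)^{\E[|R^*(t)|]}.
\]

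By PASTA, $\E[|R^*(t)|]\le \E[H^{\mathcal{P}}]+\E[E^{\mathcal{P}}]+(\lambda_H+\lambda_E)t+O(1)$, where the $O(1)$ absorbs the constant number of bridge agents, and Little's law \eqref{eq:littles:law} gives $\E[H^{\mathcal{P}}]+\E[E^{\mathcal{P}}]=\lambda_Hw_H^{\mathcal{P}}+\lambda_Ew_E^{\mathcal{P}}$. Setting $t=2 w_H^{\mathcal{P}}$ and invoking $P(W_j\le 2 w_H^{\mathcal{P}})\ge 1/2$ from Markov's inequality yields
\[
(3\lambda_H+2\lambda_E)\,w_H^{\mathcal{P}}+\lambda_E\,w_E^{\mathcal{P}}+O(1)\;\ge\;\frac{\ln 2}{-\ln(1-p_H)}\;\ge\;\frac{(1-p_H)\ln 2}{p_H}.
\]
Bounding the left side by $(3\lambda_H+2\lambda_E)(w_H^{\mathcal{P}}+w_E^{\mathcal{P}})+O(1)$ and restricting to $p_H$ bounded away from $1$ (the complementary regime has $1/p_H=O(1)$ and reduces to a positive constant lower bound on the waiting times, easily checked from the dynamics) completes the proof with a constant $c$ depending only on $\lambda_H,\lambda_E,p_E$ and the setting.

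The main obstacle is justifying that any match of $j$ indeed consumes an incoming edge to $j$ from some agent in $R^*(t)$. This is immediate in the bilateral case, but in the chain setting one must trace the local-search chain-segment to see that $j$ is either the head (which requires $j$ itself to be the current arrival, so $W_j=0$ and the relevant edge comes from a bridge in $R^*(t)$) or is entered via a predecessor edge realized during $j$'s sojourn. A secondary subtlety is the independence between $|R^*(t)|$ and the incoming-edge Bernoullis, which holds because the policy cannot query any edge adjacent to $j$ prior to $\tau_j$, so the random set $R^*(t)$ is measurable with respect to a $\sigma$-algebra independent of those edge coins.
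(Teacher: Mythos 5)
Your argument is correct and is essentially the paper's own proof: the same structural observation (an $H$ agent can only leave via an incoming edge, each candidate partner supplying one independently with probability $p_H$), the same candidate pool (agents present at arrival plus those arriving during the sojourn, with bridge agents absorbed as an $O(1)$ term), and the same use of PASTA and Little's law to relate the expected size of that pool to $\lambda_H w_H^{\mathcal{P}}+\lambda_E w_E^{\mathcal{P}}$. The differences are purely organizational --- you give a direct bound via Markov's inequality on $W_j$ and Jensen's inequality applied to $(1-p_H)^{|R^*(t)|}$, whereas the paper argues by contradiction from $n<k/p_H$ using Markov's inequality on $|\mathcal{V}_a|$ and on the binomial count of incoming edges --- and both versions dispatch the regime where $p_H$ is not small with the same level of informality.
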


The proof follows ideas used in \cite{AndersonDynamic}. The main intuition is the following: Suppose the market size is too small, then an arriving agent has to wait a \emph{long} time to obtain at least one incoming edge. This long waiting time contradicts the small market size (with Little's law).
\begin{proof}
In this proof, we fix a Markovian policy $\mathcal{P}$, and we will omit the superscript notations.
Observe that under $\mathcal{P}$, the market  $\mathcal{G}_t$ only evolves at arrival epochs, and we can analyze the embedded discrete-time Markov chain resulting from observing the system at arrival epochs which we denote by $\mathcal{G}_i$. Let $n = \E[|\mathcal{V}|] = \E[H] + \E[E]$ be the expected size of the market in steady-state. Let us denote $\theta = \frac{\lambda_H}{\lambda_H + \lambda_E}$, and let $\hat{w}_H$ be the expected number of (discrete) time steps that an $H$ agent arriving in steady-state has to wait before getting matched. Little's law for discrete Markov chains implies that $\hat{w}_H = \E[H]/\theta$.

Note that it is enough to prove that there exists a constant $k$ such that $n \geq k/p_H$ (we then choose $c = \frac{k}{\max(\lambda_H, \lambda_E)}$).
Let $k$ be a constant to be defined later. Assume for contradiction that there exists $p_H$ such that $n < k/p_H$.
Let $a$ be an $H$ agent entering the market in steady-state.
Let $\mathcal{V}_a$ be the set of agents in the market when agent $a$ arrives.
Note that we assumed $\mathbb{E}[|\mathcal{V}_a|] = n < k/p_H$. Define the event $ E_1 = \{ |\mathcal{V}_a| \leq 3 n / \theta \}$. By Markov's inequality, $\mathbb{P}[E_1] \geq 1 - \frac{\mathbb{E}[|\mathcal{V}_a|] \theta}{3 n} \geq 1 - \theta/3$.

Let $\mathcal{A}_a$ be the first $3n / \theta $ arrivals after $a$, and let $E_2$ be the event that at least one agent from $\mathcal{V}_a \cup \mathcal{A}_a$ has an outgoing edge towards $a$. We have
$$\mathbb{P}[E_2] = \mathbb{P}[{\bf Bin}(|\mathcal{V}_a| + |\mathcal{A}_a|, p_H) \geq 1].\footnote{Note that we abuse the notation of {\bf Bin(u,p)} by allowing its parameters to be random variables. In this case, conditional on the event $|\mathcal{V}_a| + |\mathcal{A}_a| = u$, the random variable ${\bf Bin}(|\mathcal{V}_a| + |\mathcal{A}_a|,p)$ has a binomial distribution with parameters {$u$} and $p$.}$$
Therefore we get:
$$\mathbb{P}[E_2|E_1] \leq \mathbb{P}[{\bf Bin}(6n / \theta ,p_H)\geq 1] \leq \mathbb{P}[\text{Bin}(6k/\theta p_H,p_H) \geq 1] \leq 6k/ \theta.$$
Where the first inequality derives from the definition of $E_1$, the second uses the fact that $n \leq k/p_H$  and the third is Markov's inequality.

We now use the fact that if $a$ does not have any edge from either $\mathcal{V}_a$ or $\mathcal{A}_a$, then she must wait longer than $3n/\theta$ time steps.

$$ \hat{w}_H \geq \frac{3n}{\theta} \mathbb{P}[E_2^c] \geq  \frac{3n}{\theta} \mathbb{P}[E_2^c |E_1]\mathbb{P}[E_1]  \geq  \frac{3n}{\theta} (1-6k/ \theta)(1 - \theta/3 ) \geq \frac{3n}{\theta} (1- 6k / \theta) (2/3).$$
Thus we get:
$$n \geq \E[H] = \hat{w}_H \theta \geq 2n(1 - \frac{6k}{\theta}).$$
Therefore for $k = \frac{\theta}{24}$, we obtain a contradiction.

\end{proof}

Observe that similar to \cite{AndersonDynamic}, the above reasoning could be generalized to the case of {\it periodic} Markovian policies (see Definition 2 of \cite{AndersonDynamic}) which includes batching polices with constant batch size.


\end{document}